\documentclass[a4paper,UKenglish]{lipics-v2018}

\pdfoutput=1 

\usepackage{microtype}
\usepackage{enumitem}

\newcommand{\red}[1]{{\color{black}{#1}}}
\usepackage{scrextend}
\newcommand{\comment}[1]{}
\newcommand{\sv}{{\mathcal V}}
\newcommand{\se}{{\mathcal E}}
\newcommand{\scripte}{\mathcal{E}}
\newcommand{\scriptv}{\mathcal{V}}

\newcommand{\point}[3]{#1\stackrel{#3}{\Rightarrow}{#2}}
\newcommand{\notpoint}[3]{#1\stackrel{#3}{\not\Rightarrow}{#2}}

\newenvironment{proofSketch}{\noindent{\bf Proof Sketch:}}{\hspace*{\fill}\(\Box\)}

\newcommand{\sa}{{\mathcal A}}

\newtheorem{claim}{Claim}


\newcommand{\calg}{{\mathcal G}}
\newcommand{\calv}{{\mathcal V}}
\newcommand{\cale}{{\mathcal E}}


\bibliographystyle{plainurl}

\title{Effects of Topology Knowledge and Relay Depth on Asynchronous Consensus}

\titlerunning{Topology Knowledge, Relay Depth, and Asynchronous Consensus}

\author{Dimitris Sakavalas}{Boston College, USA}{dimitris.sakavalas@bc.edu}{}{}

\author{Lewis Tseng}{Boston College, USA}{lewis.tseng@bc.edu}{}{}

\author{Nitin H. Vaidya}{University of Illinois at Urbana-Champaign, USA}{nhv@illinois.edu}{}{}

\authorrunning{D. Sakavalas, L. Tseng, and N.\,H. Vaidya}

\Copyright{Dimitris Sakavalas, Lewis Tseng, and Nitin H. Vaidya}

\subjclass{\ccsdesc[100]{ CCS~Computer systems organization~Dependable and fault-tolerant systems and networks~  Fault-tolerant network topologies }}

\keywords{Asynchronous systems, crash fault, consensus, incomplete graphs, topology knowledge}

\category{}

\relatedversion{}

\supplement{}

\funding{}

\acknowledgements{}

\EventEditors{John Q. Open and Joan R. Access}
\EventNoEds{2}
\EventLongTitle{42nd Conference on Very Important Topics (CVIT 2016)}
\EventShortTitle{CVIT 2016}
\EventAcronym{CVIT}
\EventYear{2016}
\EventDate{December 24--27, 2016}
\EventLocation{Little Whinging, United Kingdom}
\EventLogo{}
\SeriesVolume{42}
\ArticleNo{23}
\nolinenumbers 
\hideLIPIcs  

\begin{document}

\maketitle


\begin{abstract}

\normalsize
Consider a point-to-point message-passing network.
We are interested in the \textit{asynchronous} crash-tolerant consensus problem in \textit{incomplete networks}. 
We study the feasibility and efficiency of approximate consensus under different restrictions on topology knowledge and the \emph{relay depth},  i.e., the maximum number of hops any message can be relayed. These two constraints are common in large-scale networks, and are used to avoid memory overload and network congestion respectively.  Specifically, for different values of integers $k,k'$, we consider that each node knows all its neighbors of at most $k$-hop distance (\emph{$k$-hop topology knowledge}), and the relay depth is $k'$.
We consider both \textit{directed} and \textit{undirected} graphs. More concretely, we answer the following main question in asynchronous systems:\smallskip 
\begin{addmargin}[4em]{4em}
\textit{What is a tight condition on the underlying communication graphs for achieving approximate consensus if each node has only a $k$-hop topology knowledge and relay depth $k'$?} 	\smallskip	
\end{addmargin}
To prove that the necessary conditions presented in the paper are also sufficient, we have developed algorithms that achieve consensus in  graphs satisfying those conditions:
\begin{itemize}[noitemsep,topsep=0pt]
	\item The first class of algorithms requires $k$-hop topology knowledge and relay depth $k$. Unlike prior algorithms, these algorithms \textit{do not} flood the network, and each node \textit{does not} need the full topology knowledge. We show how the convergence time and the message complexity of those algorithms is affected by $k$, providing the respective upper bounds.
	\looseness=-1
	
	\item  The second set of algorithms requires only one-hop neighborhood knowledge, i.e., immediate incoming and outgoing neighbors, but needs to flood the network (i.e., relay depth \red{is} $n$, where $n$ is the number of nodes). One result that may be of independent interest is a \textit{topology discovery} mechanism to learn and ``estimate'' the topology in asynchronous directed networks with crash faults.
\end{itemize}

\comment{+++++++
Our prior work \cite{Tseng_podc2015} presented a tight condition on the underlying communication graphs, namely Condition CCA, for achieving approximate consensus in asynchronous networks. The algorithm in \cite{Tseng_podc2015} requires a complete topology knowledge at each node, which is not practical in a large-scale network.
This paper focuses on the interplay between consensus feasibility and \textit{topology knowledge}. Particularly, we have two main results:

\begin{itemize}
	\item \textit{Limited Topology Knowledge and Information Propagation}:
	We are interested in \textit{iterative $k$-hop algorithms}, a family of algorithms
	 in which nodes only have $k$-hop neighborhood knowledge and gather state information from nodes at most $k$-hops away.
	Unlike prior algorithms, the iterative $k$-hop algorithms \textit{do not} flood the network, and each node \textit{does not} need the full topology knowledge.
	For these algorithms, we derive a family of tight conditions, namely Condition $k$-CCA for $1 \leq k \leq n$, for solving asynchronous \textit{approximate} consensus.
	
	\item \textit{Topology Discovery}: 
	We consider the case where nodes initially have only \textit{one-hop neighborhood knowledge}, i.e., immediate incoming and outgoing neighbors. 
	We show that Condition CCA from \cite{Tseng_podc2015} is necessary and sufficient for asynchronous approximate consensus with one-hop neighborhood knowledge.
	One result that may be of independent interest is a \textit{topology discovery} mechanism to learn and ``estimate'' the topology in asynchronous directed networks with crash faults.
\end{itemize}
++++++++=}
 \end{abstract}

\section{Introduction}
\label{s_intro}

The fault-tolerant consensus problem proposed by Lamport et al. \cite{lamport_agreement} has been studied extensively under different point-to-point network models, including complete networks (e.g., \cite{lamport_agreement,AA_Dolev_1986,abraham_04_3t+1_async}) and undirected networks (e.g., \cite{impossible_proof_lynch,dolev_82_BG}). 
Recently, many works are exploring various consensus problems in directed networks, e.g.,  \cite{approximate_consensus_dynamic,k-set_dynamic,k-set_dynamic_NETYS,reliable_comm_dynamic,Ashish_DISC17}, including our own work \cite{Tseng_podc2015,vaidya_PODC12,lili_multihop}. More precisely, these works address the problem in {\em incomplete} {\em directed} networks, i.e., not every pair of nodes is connected by a channel, and the channels are not necessarily bi-directional.
We will often use the terms {\em graph} and {\em network} interchangeably.
In this work, we explore the crash-tolerant approximate consensus problem in \textit{asynchronous} incomplete networks under different \red{restrictions} on \textit{topology knowledge} -- where we assume that each node knows \red{all its neighbors of at most $k$-hop distance} -- and \textit{relay depth} -- the maximum number of hops that information (or a message) can be  propagated. These constraints are common in large-scale networks \red{to avoid memory overload and network congestion}, e.g., neighbor table and Time to live (TTL) (or hop limit) in the Internet Protocol. We consider both undirected and directed graphs in this paper.

\medskip 
\noindent\textbf{Motivation}~
Prior results \cite{Tseng_podc2015,Ashish_DISC17} showed that exact crash-tolerant consensus is solvable in \textit{synchronous} networks with only one-hop knowledge and relay depth $1$, i.e., each node only needs to know its immediate incoming and outgoing neighbors, and no message needs to be relayed (or forwarded). 
Such a local algorithm is of interest in practice due to low deployment cost and low message complexity in each round.
In \textit{asynchronous} undirected networks, there exists a simple flooding-based algorithm adapted from \cite{impossible_proof_lynch,dolev_82_BG} that achieves approximate consensus with up to $f$ crash faults if the network satisfies $(f+1)$ node-connectivity\footnote{For brevity, we will simply use the term ``connectivity'' in the presentation below.} and $n > 2f$, where $n$ is the number of nodes. 
However, these two conditions are \textit{not} sufficient for an \textit{iterative} algorithm with one-hop knowledge and relay depth $1$\red{, in which each node maintains a state and exchanges state values with only one-hop neighbors in each iteration.}

\begin{figure}[h]
	\centering
	\begin{subfigure}[c]{0.45\textwidth}
		\includegraphics[width=0.6\textwidth]{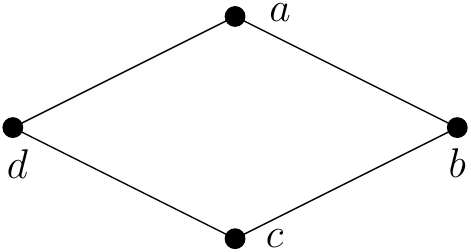}
		\caption{}
		\label{fig:introexample1}
	\end{subfigure}
	\begin{subfigure}[c]{0.45\textwidth}
		\includegraphics[width=0.7\textwidth]{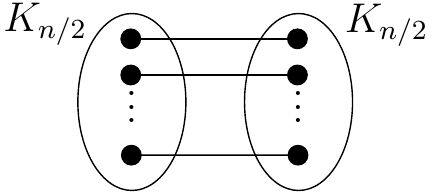}
		\caption{ }
		\label{fig:introexample2}
	\end{subfigure}\hspace{50pt}
	\caption{Effect of increased $k$-hop knowledge and relay depth $k$. In both figures, asynchronous consensus with $f=1$ is impossible for $k=1$, but possible for $k=2$.}\label{fig:introexamples}
\end{figure}

Consider Figure~\ref{fig:introexample1}, which is a ring network of four nodes. 
\red{There is no iterative algroithm with one-hop knowledge and relay depth $1$ under one crash fault}.
The adversary can divide the nodes into disjoint sets $\{a, b\}$ and $\{c, d\}$ such that the communication delay across sets is so large that $a$ thinks $d$ has crashed, and $d$ thinks $a$ has crashed, and similarly for the pair $b$ and $c$. 
As a result, no \red{exchange of state values} is possible across the sets in the execution; hence, consensus is not possible (a more precise discussion in Section \ref{s_topo}).
On the other hand, suppose each node has two-hop knowledge, i.e., a complete topology knowledge in this network, and relay depth $2$. Then $a$ knows that it will be able to receive \red{state values} from at least two of the other nodes since the node connectivity is $2$, and up to one node may fail. Following this observation, it is easy to design a flooding-based algorithm in the ring network based on \cite{impossible_proof_lynch,dolev_82_BG}.
This example shows that both topology knowledge and relay depth affect the feasibility of asynchronous approximate consensus.

Interestingly, increasing connectivity alone does \textit{not} make iterative algorithm feasible. In Section \ref{ss_fault}, we show that no fault-tolerant approximate consensus algorithm with one-hop topology and relay depth $1$ exists in the network in Figure~\ref{fig:introexample2}, which has two sparsely-connected cliques of size $n/2$ and connectivity $n/2-1$.
Motivated by these observations, this work addresses the following question in \textit{asynchronous} systems:
\begin{center}
	\fbox{
		\parbox{290pt}{
			What is a tight condition on the underlying communication graphs for achieving approximate consensus if each node has only a $k$-hop topology knowledge and relay depth $k'$?
		}
	}
\end{center}
\comment{**********
Hence, this paper is motivated by the following question:

\begin{center}
\fbox{
	\parbox{290pt}{
		Is there an local algorithm for \textit{asynchronous} approximate crash-tolerant consensus in incomplete graphs?
	}
}
\end{center}


\begin{itemize}
	\item \textit{Case I}: Suppose each node only has one-hop knowledge. Then, the faulty adversary can divide the nodes into disjoint sets $\{a, b\}$ and $\{c, d\}$ such that the communication delay across sets are so large that $a~(b)$ thinks $d~(a)$ has crashed. As a result, no communication is possible across the sets and hence, consensus is not possible (complete proof in Section \ref{s_topo}).
	\item \textit{Case II}: Suppose each node has two-hop knowledge (a complete topology knowledge in this case). Then, $a$ knows that it will be able to receive a message from at least two of the other nodes. Following this observation, it is not hard to design a flooding-based algorithm in this ring network.
\end{itemize}

The first observation is that topology knowledge affects the feasibility of approximate consensus (and hence, the tight condition). One other factor is the ``relay depth'' -- how far does a piece of information propagate? Suppose that no message forwarding is allowed, i.e., relay depth is $1$. Then it is not hard to see that by a similar argument from Case I above, no consensus is possible in the ring network even if each node has a complete topology knowledge. Concretely, this work addresses the following question in \textit{asynchronous} systems:
**********}

\noindent\textbf{Approximate Consensus}~
We focus on the asynchronous approximate consensus problem. The system consists of $n$ nodes, of which at most $f$ nodes may crash. Each node is given an input, and after a finite amount of time, each fault-free node should produce an output, which satisfies \textit{validity} and {\em agreement} conditions (formally defined later).  Intuitively, the state at fault-free nodes must be in the range of all the inputs, and are guaranteed to be within $\epsilon$ of each other for some $\epsilon > 0$ after a sufficiently large number of rounds.\footnote{In the literature, it is also called asymptotic consensus. Here, we use the term ``approximate consensus'' following the work \cite{AA_Dolev_1986,Tseng_podc2015}}

In \cite{Tseng_podc2015}, we presented Condition CCA (definition in Section \ref{s_prelim}) and showed that it is necessary and sufficient on the underlying directed graphs for achieving \textit{approximate} consensus in asynchronous systems \cite{Tseng_podc2015}.
The approximate consensus algorithms in prior work \cite{Tseng_podc2015,impossible_proof_lynch,dolev_82_BG} are based on flooding (i.e., relay depth $n$) and assume that each node has $n$-hop topology knowledge.
However, such an algorithm in \textit{not} practical in a large-scale network, since nodes' local memory may not be large enough to store the entire network, flooding-based algorithms (e.g., \cite{Tseng_podc2015,impossible_proof_lynch,dolev_82_BG}) incur prohibitively high message overhead for each phase, and complete topology knowledge may require a high deployment and configuration cost.
Therefore, we explore algorithms that only require ``local'' knowledge and limited  message relay. 

\medskip
\noindent\textbf{Contributions}~We identify tight conditions on the graphs under different assumptions on topology knowledge and relay depth. Particularly, we have the following results:\smallskip
\begin{itemize}[noitemsep,topsep=0pt]	
	\item \textit{Limited Topology Knowledge and Relay Depth} (Section \ref{s_topo}): 
	We consider the case with $k$-hop topology knowledge and relay depth $k$.
	The family of algorithms that captures these constrains are iterative $k$-hop algorithms -- nodes only have topology knowledge of their $k$-hop neighborhoods, and propagate state values to nodes that are at most $k$-hops away. 
	Note that \textit{no} other information is relayed.
	For iterative $k$-hop algorithms, we derive a family of tight conditions, namely Condition $k$-CCA for $1 \leq k \leq n$, for solving \textit{approximate} consensus in directed networks. To prove the tightness of the conditions, we propose a family of iterative algorithms called $k$-LocWA and show how the convergence time and the message complexity of those algorithms is affected by $k$, providing the respective upper bounds. \looseness=-1
	
	\item \textit{Topology Discovery and Unlimited Relay Depth} (Section \ref{s_topology}): 
	We consider the case with \red{one}-hop topology knowledge and relay depth $n$.
	In other words, nodes initially only know their immediate incoming and outgoing neighbors, but nodes can flood the network, learn (some part of) the topology, and eventually solve consensus based on the learned topology.
	We show that Condition CCA from \cite{Tseng_podc2015} is also sufficient in this case. Since we assume only one-hop knowledge, our result implies that Condition CCA is tight for any $k$-hop topology knowledge. 
	One contribution that may be of independent interest is a \textit{topology discovery} mechanism to learn and ``estimate'' the topology in asynchronous directed networks with crash faults. Such a discovery mechanism will be useful for self-stabilization and reconfiguration of a large-scale system.
	
	
	\comment{*********
	\begin{itemize}
		\item While node connectivity can be linked to Condition CCA naturally, it does \textit{not} capture Condition $k$-CCA, especially when $k = 1$. This is not surprising, since node connectivity is essentially a ``global'' condition, whereas Condition $k$-CCA captures both the ``global'' and ``local'' perspectives.
		\item There exists a graph in which Condition $1$-CCA does not hold for $f \geq 1$, but Condition $2$-CCA holds for $f \leq (n - 1)/2$ for an even $n$. Note that $(n - 1)/2$ is the maximum number of tolerable faults in an asynchronous system with $n$ nodes.
		\item There exist a family of graphs and a particular delay scenario such that $1$-hop algorithm is significantly faster than $k$-hop algorithms for any $k > 1$.
	\end{itemize}
	********8}
\end{itemize}
In Section \ref{s_diss}, we discuss fault-tolerance implications of the derived conditions and Condition CCA. We also discuss how to speed up our algorithms in terms of real time delay.

\medskip
\noindent\textbf{Related Work}~
There is a large body of work on fault-tolerant consensus. Here, we discuss related works exploring consensus in different assumptions on graphs. Fisher et al. \cite{impossible_proof_lynch} and Dolev \cite{dolev_82_BG} characterized necessary and sufficient conditions under which Byzantine consensus is solvable in {\em undirected} graphs.
In synchronous systems, Charron-Bost et al. 
\cite{approximate_consensus_dynamic,approximate_consensus_dynamic_ICALP} solved {\em approximate} crash-tolerant consensus in dynamic directed networks using local averaging algorithms, and in the asynchronous setting, Charron-Bost et al.  \cite{approximate_consensus_dynamic,approximate_consensus_dynamic_ICALP} addressed approximate consensus with crash faults in {\em complete} graphs which are necessarily undirected. We solve the problem in {\em incomplete directed} graphs in asynchronous systems. Moreover, in \cite{approximate_consensus_dynamic,approximate_consensus_dynamic_ICALP}, nodes are \textit{constrained} to only have the one-hop topology knowledge. We study different types of algorithms, including the ones that allow nodes to learn the topology (i.e., we allow topology discovery).

There were also works studying limited topology knowledge. Su and Vaidya \cite{lili_multihop} identified the condition for solving synchronous Byzantine consensus using a variation of $k$-hop algorithms. Alchieri et al. \cite{BFT-CUP_OPODIS} studied the synchronous Byzantine problem under \textit{unknown} participants. We consider \textit{asynchronous} systems in this work. Nesterenko and Tixeuil \cite{topolog_discovery_06} studied the topology discovery problem in the presence of Byzantine faults in \textit{undirected} networks, whereas we present a solution that works in \textit{directed} networks with crash faults.

Extensive prior works studied graph properties for other similar problems in the presence of Byzantine failures, such as (i) Byzantine approximate consensus in directed graphs using ``local averaging'' algorithms wherein nodes only have one-hop neighborhood knowledge (e.g., \cite{vaidya_PODC12,Tseng_general,lili_multihop,Sundaram_journal,Sundaram,Sundaram_ACC,MIT17}), (ii) Byzantine consensus with unknown participants \cite{BFT-CUP_OPODIS}, (iii) Byzantine consensus with \textit{authentication} in {\em undirected} networks \cite{Bansal_disc11}.
These papers only consider synchronous systems, and our algorithms and analysis are significantly different from those developed for Byzantine algorithms, and (iv) consensus problems in \textit{synchronous} dynamic networks where the adversary can change the network topology. In this line of work, impossibility results for Consensus and $k$-Set Agreement are given in~\cite{dynamic_agreement, dynamic_k-set} and sufficiency is guaranteed by requiring a period of stability, during which certain nodes are strongly connected; the first tight condition for the feasibility of consensus and broadcast is presented in~\cite{dynamic_consensus}. Additionally, in~\cite{dynamic_Byzantine}, byzantine corruptions and a dynamic node set is assumed and a $O(\log^3n)$-round randomized algorithm is presented.   Our work is different from all these works because of the assumption of asynchronous systems and limited topology information. Please refer to our technical report \cite{crash+async_report} for further discussion on these works.

\section{Preliminary}
\label{s_prelim}

Before presenting the results, we introduce our systems model, some terminology, and our prior results from \cite{Tseng_podc2015} to facilitate the discussion.

\noindent{\bf System Model}~~
The point-to-point message-passing network is {\em static}, and it is represented by a simple {\em directed} graph $G(\scriptv,\scripte)$, where $\scriptv$ is the set of $n$ nodes, and $\scripte$ is the set of directed edges between the nodes in $\scriptv$. The communication links are reliable. We assume that $n\geq 2$, since the consensus problem for $n=1$ is trivial.  Node $i$ can transmit messages to another node $j$ directly if directed edge $(i,j)$ is in $\scripte$. Each node can send messages to itself as well; however, for convenience, we exclude self-loops from set $\scripte$.
We will use the terms {\em edge} and {\em link} interchangeably. 

Up to $f$ nodes may suffer crash failures in an execution. A node that suffers a crash failure simply stops taking steps (i.e., fail-stop model). We consider the \textit{asynchronous} message-passing communication, in which a message may be delayed arbitrarily but eventually delivered if the receiver node is fault-free. We assume that the adversary has both the control of crashing nodes and delaying messages at any point of time during the execution.

\noindent{\bf Terminology}~~
Upper case letters are used to name sets.
Lower case italic letters are used to name nodes. All paths used in our discussion are directed paths.

Node $j$ is said to be an incoming neighbor of node $i$ if $(j,i)\in \se$.
Let $N_i^-$ be the set of incoming neighbors of node $i$, i.e., $N_i^- = \{ j~|~(j, i) \in \se \}$. Define
$N_i^+$ as the set of outgoing neighbors of node $i$, i.e., $N_i^+ = \{j~|~(i,j) \in \se\}$.

For set $B\subseteq \sv$, node $i$ is said to be an incoming
neighbor of set $B$ if $i\not\in B$, and there exists $j\in B$
such that $(i,j)\in \se$. Given subsets of nodes $A$ and $B$, set $B$ is said to have $k$ incoming neighbors in set $A$ if $A$ contains $k$ distinct incoming neighbors of $B$. 
\begin{definition}
Given disjoint non-empty subsets of nodes $A$ and $B$, $\point{A}{B}{x}$ if $B$ has at least $x$ distinct incoming neighbors in $A$. When it is not true that $\point{A}{B}{x}$, we will denote that fact by $\notpoint{A}{B}{x}$. 
\end{definition}

\noindent\textbf{Approximate Consensus}~~
For the approximate consensus problem (e.g., \cite{AA_Dolev_1986,AA_nancy,Tseng_podc2015}), it is usually assumed that each node $i$ maintains a \emph{state}  $v_i$ with $v_i[p]$  denoting the state of node $i$ at the end of phase (or iteration) $p$.  The initial state of node $i$, $v_i[0]$, is equal to the initial input provided to node $i$. At the start of phase $p~(p > 0)$, the state of node $i$ is $v_i[p-1]$. 

Let $U[p]$ and $\mu[p]$ be the maximum and the minimum state at nodes that have not crashed by the end of phase $p$. Then, a \textit{correct} approximate consensus algorithm needs to satisfy the following two conditions:\smallskip 

\begin{itemize}[noitemsep,topsep=0pt]	
	\item \textit{Validity}: ~~~~~~~~$\forall p > 0, U[p] \leq U[0]$ and $\mu[p] \geq \mu[0]$; and
	\item \textit{Convergence}: ~~$\lim_{p \rightarrow 0} U[p] - \mu[p] = 0$.
\end{itemize}

\noindent Equivalently the Convergence condition can be stated as:\smallskip\\
\centerline{$\forall \epsilon>0, \text{ there exists a phase } p_\epsilon \text{ such that  for } p>p_\epsilon, U[p] - \mu[p] <\epsilon$.}\smallskip

Towards facilitating the study of the number of phases needed for convergence and the corresponding message complexity, observe that convergence with respect to a specific $\epsilon$ must be considered. Therefore we will also use the following convergence notion.\smallskip

\begin{itemize}[noitemsep,topsep=0pt]	
\item $\epsilon$-Convergence:~$\exists p_\epsilon$, $\forall p\ge p_\epsilon$, $U[p] - \mu[p] \le\epsilon$. 
\end{itemize}


\noindent{\bf Prior Result}~~
In \cite{Tseng_podc2015}, we identified necessary and sufficient conditions on the underlying communication graphs $G(\sv, \se)$ for achieving {\em crash-tolerant consensus} in directed networks.
The theorem below requires the communication graph to satisfy Condition {\bf CCA}
(Crash-Consensus-Asynchronous).
\begin{theorem}[from \cite{Tseng_podc2015}]
\label{t_cca}
Approximate crash-tolerant consensus in asynchronous systems is feasible iff for any partition $L, C, R$ of $\sv$, where $L$ and $R$ are both non-empty,\\~~~~
either  $\point{L \cup C}{R}{f+1}$ or $\point{R \cup C}{L}{f+1}$. {\bf (Condition CCA)}
\end{theorem}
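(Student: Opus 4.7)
The plan is to prove necessity and sufficiency separately, relying on Condition CCA in sharply different ways. For necessity, I would argue by an indistinguishability construction. Suppose CCA fails for some partition $L,C,R$ of $\sv$ with $L$ and $R$ non-empty, and let $S_L$ be the set of incoming neighbors of $L$ lying in $R\cup C$ and $S_R$ the set of incoming neighbors of $R$ lying in $L\cup C$; by hypothesis $|S_L|,|S_R|\le f$. A key observation is that once $S_R$ is removed, no message from any node in $L\cup C$ can reach any node in $R$, because every directed path from $L\cup C$ to $R$ must cross an edge whose tail is in $S_R$; the statement is symmetric for $S_L$. I would then consider three executions: $\alpha$, with all inputs $0$ and $S_L$ crashed at the start, in which validity pins every node of $L$ to state $0$ at every phase; $\beta$, with all inputs $1$ and $S_R$ crashed at the start, in which $R$ is pinned to $1$; and $\gamma$, with inputs $0$ in $L$, inputs $1$ in $R\cup C$, $S_R$ crashed at the start, and every $S_L$-to-$L$ message delayed arbitrarily. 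In $\gamma$ the view of $R$ coincides with its view in $\beta$ and the view of $L$ coincides with its view in $\alpha$ at every phase before the delayed messages are finally delivered, so $L$ stays near $0$ and $R$ near $1$ arbitrarily long, contradicting the convergence condition for any $\epsilon<1/2$.

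For sufficiency, the plan is to exhibit an iterative flooding-based algorithm extending the classical Dolev-style trimmed-mean scheme. In each phase $p$, every node $i$ floods its current state $v_i[p-1]$ through the network (relay depth $n$, assuming full topology knowledge), waits until further waiting cannot be justified by at most $f$ crashes, and then updates $v_i[p]$ to the mean of the received values after discarding the $f$ largest and $f$ smallest. Validity is immediate because every state is a convex combination of values in $[\mu[0],U[0]]$. For convergence I would write one phase as $v_i[p]=\sum_j c_{ij}[p]\, v_j[p-1]$ with non-negative coefficients summing to $1$, and establish a uniform lower bound $c_{ij}[p]\ge \beta>0$ on certain ``useful'' coefficients. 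The key combinatorial step invokes Condition CCA: for any threshold splitting the fault-free nodes into a ``high'' side $R'$ and a ``low'' side $L'$ with equal-valued middle $C'$, CCA guarantees that one side---say $R'$---has at least $f{+}1$ incoming neighbors in its complement, so every node of $R'$ completing the phase receives at least one genuine value from $L'\cup C'$ that survives the trimming of the $f$ largest entries. This forces every updated state in $R'$ downward by a definite fraction, yielding $U[p]-\mu[p]\le (1-\beta)(U[p-1]-\mu[p-1])$ and exponential contraction.

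I expect the main obstacle to be the design of the waiting predicate in the sufficiency direction. It must simultaneously guarantee (a) liveness under up to $f$ genuine crashes and (b) that enough values are collected so that, for every threshold and every adversarial message ordering, the CCA guarantee of $f{+}1$ incoming cross-neighbors translates into at least one untrimmed genuine value at every recipient. Reconciling these two requirements in the asynchronous setting---where a node cannot wait for all potential senders---is the delicate part of the argument, and is precisely where CCA rather than simple $(f+1)$-connectivity is required.
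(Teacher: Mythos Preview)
This theorem is quoted from \cite{Tseng_podc2015} and is not re-proved in the present paper; the closest the paper comes is the necessity argument for Condition $1$-CCA (Appendix~\ref{app_1-CCA}) and the sufficiency via Algorithm~LWA with Lemma~\ref{lemma:topo_dir} (Appendix~\ref{app_topo_dir}). I compare your proposal to those.

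Your necessity sketch is along the right lines but the three-execution set-up has a wrinkle: in $\gamma$ you crash $S_R\subseteq L\cup C$, so any node in $S_R\cap L$ is crashed in $\gamma$ yet alive in $\alpha$, which can break the claimed indistinguishability for $L$. The paper (for $1$-CCA, and \cite{Tseng_podc2015} for CCA) avoids this by using a \emph{single} execution with no crashes: simply delay every message on the at most $f$ edges entering $L$ from $R\cup C$ and symmetrically for $R$. Each side cannot distinguish this from those $\le f$ neighbors having crashed, so validity pins $L$ to $0$ and $R$ to $\epsilon$.

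Your sufficiency plan has a genuine gap. First, trimming the $f$ largest and $f$ smallest received values is a Byzantine device; under crash faults every received value is genuine, and trimming can only discard the very cross-values you rely on. Second, and more seriously, your convergence step misreads CCA. The relation $\point{L'\cup C'}{R'}{f+1}$ says the \emph{set} $R'$ has at least $f{+}1$ incoming neighbors in its complement; it says nothing about any individual node of $R'$ receiving $f{+}1$ (or even one) cross-value in a phase, so the inference ``every node of $R'$ receives a value from $L'\cup C'$ that survives trimming'' simply does not follow. Even if a node $i\in R'$ did receive one value from $L'\cup C'$, that value is on the low side and is precisely a candidate for being trimmed among the $f$ smallest, again breaking the argument.

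The route taken in \cite{Tseng_podc2015} and in this paper's Algorithm~LWA is different: use plain averaging (no trimming), wait until there exists $F_i$ with $|F_i|\le f$ and $reach_i(F_i)\subseteq heard_i$, and prove the key combinatorial fact (Lemma~\ref{lemma:topo_dir}) that any two nodes $i,j$ completing a phase have $heard_i^*\cap heard_j^*\neq\emptyset$. CCA enters exactly here: if $reach_i(F_i)$ and $reach_j(F_j)$ were disjoint, taking $L=reach_i(F_i)$, $R=reach_j(F_j)$, $C=\sv-L-R$ would violate CCA, because all incoming neighbors of $reach_i(F_i)$ lie in $F_i$ (size $\le f$) and symmetrically for $j$. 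From the common-node property one gets, for the extremal nodes $i,j$ with common $k$, the bounds $v_i[p]\le \tfrac1n v_k[p{-}1]+(1-\tfrac1n)U[p{-}1]$ and $v_j[p]\ge \tfrac1n v_k[p{-}1]+(1-\tfrac1n)\mu[p{-}1]$, hence $U[p]-\mu[p]\le(1-\tfrac1n)(U[p{-}1]-\mu[p{-}1])$. No per-node cross-neighbor count is ever needed.
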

\comment{++++++++++

In Appendix \ref{section:conditionsrelation}, we show that Condition CCA implies Condition $k$-CCA and Condition CCA is equivalent to Condition $n$-CCA. The Figure~\ref{fig:conditionsoverview} below the conditions, where $\calg$ is assumed to be the set of all instances-graphs $G(\calv,\cale)$ and $k,d\in \mathbb{N}$. It is easy to see through the definitions of conditions that all inclusions are strict. For completeness, we provide example graphs at the end. 

\begin{figure}[bp]
	\centering
	\includegraphics[width=0.5\linewidth]{./images/conditionsoverview}
	\caption{Relation of conditions studied in this paper. $\calg$ denotes the space of all graphs. All inclusions are strict. By Lemma~\ref{lemma:ccak_to_cca}, the last inclusion concerning CCA and $d$-CCA is strict for some $d\le n-1$.}
	\label{fig:conditionsoverview}
\end{figure}
 ++++++++++++}

\section{Limited Topology Knowledge and Relay Depth}
\label{s_topo}

In this section, we study how topology knowledge and the relay depth affect the \textit{tight} conditions on the directed communication network. 
Particularly, we consider the case with $k$-hop topology knowledge and relay depth $k$ for $1 \leq k \leq n$.
Prior works (e.g., \cite{Tseng_podc2015,impossible_proof_lynch,dolev_82_BG}) assumed that each node has $n$-hop topology knowledge and relay depth $n$. However, in large-scale networks, such an assumption may not be realistic. 
Therefore, we are interested in the algorithms that only require nodes to exchange a small amount of information within local neighborhood (e.g.,~\cite{Peleg02,PPS17, PPS17FCT}). One other benefit is that the algorithms do not require flooding \cite{Tseng_podc2015} or all-to-all communication \cite{impossible_proof_lynch,dolev_82_BG} in each asynchronous phase.

We are interested in iterative $k$-hop algorithms -- nodes only have topology knowledge in their $k$-hop neighborhoods, and propagate state values to nodes that are at most $k$-hops away.
We introduce a family of conditions, namely Condition $k$-CCA for $1 \leq k \leq n$, which we prove necessary and sufficient for achieving asynchronous approximate consensus, through the use of iterative $k$-hop algorithms.
The results presented in this section also imply how \red{$k$} affects the \textit{tight} conditions on the directed networks  -- lower \red{$k$} requires higher connectivity of the underlying communication network. 

To the best of our knowledge, two prior papers \cite{BFT-CUP_OPODIS,lili_multihop} examined a similar problem --  \textit{synchronous} Byzantine consensus. In \cite{lili_multihop}, Su and Vaidya identified the condition under different relay depths. Alchieri et al. \cite{BFT-CUP_OPODIS} studied the problem under \textit{unknown} participants. The technique developed for asynchronous consensus in this section is significantly different.

\medskip
\noindent\textbf{Iterative $k$-hop Algorithms}~~
The iterative algorithms considered here have relay depth $k$ and require
each node $i$ to perform the following three steps in \textit{asynchronous} phase $t$: \smallskip

\noindent 1. \textit{Transmit}: Transmit messages of the form $(v_i[t-1], \cdot )$ to nodes that are reachable from node $i$ via at most $k$ hops away\red{, where $v_i[t-1]$ is the current state value.} If node $i$ is an intermediate node on the route of some message, then node $i$ forwards that message as instructed by the source;\\
2. \textit{Receive}: Receive messages from the nodes that can reach node $i$ via at most $k$ hops. Denote by $R_i[t]$ the set of messages that node $i$ received at phase $t$; and\\
3. \textit{Update}: Update state using a transition function $Z_i$, where $Z_i$ is a part of the specification of the algorithm, and takes as input the set $R_i[t]$. i.e., 

~~~~~~~~~~~~$v_i[t] : = Z_i(R_i[t], v_i[t-1])~~~\text{at node}~~i$

\smallskip
\comment{+++++++
\begin{enumerate}
	\item \textit{Transmit}: Transmit messages of the form $(v_i[t-1], \cdot )$ to nodes that are reachable from node $i$ via at most $k$ hops. If node $i$ is an intermediate node on the route of some message, then node $i$ forwards that message as instructed by the source;
	\item \textit{Receive}: Receive messages from the nodes that can reach node $i$ via at most $k$ hops. Denote by $R_i[t]$ the set of messages that node $i$ received at phase $t$; and
	\item \textit{Update}: Update state using a transition function $Z_i$, where $Z_i$ is a part of the specification of the algorithm, and takes as input the set $R_i[t]$. i.e., 
	
	~~~~~~$v_i[t] : = Z_i(R_i[t], v_i[t-1])~~~\text{at node}~~i$
\end{enumerate}
+++++++}

Note that (i) no exchange of topology information takes place in this class of algorithms, and (ii) each node's state only propagates within its $k$-hop neighborhood. For a node $i$, its \emph{$k$-hop incoming neighbors} are defined as the nodes $j$ which are connected  to $i$ by a directed path in $G$ that has $\leq k$ hops. The notion of $k$-hop outgoing neighbors is defined similarly.

\medskip
\noindent\textbf{Technique}~
The algorithms presented in this section are motivated by prior work \cite{AA_Dolev_1986,lili_multihop} including our own work \cite{Tseng_podc2015}. The algorithms are iterative and simple; thus, the proof structure shares some similarity with prior work \cite{AA_Dolev_1986,Tseng_podc2015,vaidya_PODC12}. 

Generally speaking, the proof proceeds as following: (i) nodes are divided into two disjoint sets, say $L$ and $R$ so that nodes have ``closer'' state values in each set; (ii) because each node receives an adequate set of messages, we show that under any delay and crash scenarios, at least one non-crashed node in either $L$ or $R$ will receive one message from the other set of nodes in each phase; and (iii) after enough phases, the value of all non-crashed nodes in either $L$ or $R$ will move ``closer'' to the values in the other set.
Two key novelties are: identifying the ``adequate set'' of messages that needs to be received before updating local state in each asynchronous phase, and  showing that with limited $k$-hop \red{propagation}, some node is still able to receive messages from the other set (in step (ii) above).

\comment{+++++++
Steps (i) and (iii) are inspired and adapted from \cite{AA_Dolev_1986,vaidya_PODC12}, whereas Step (ii) is inspired by \cite{Tseng_podc2015}. Two key challenges are integrating the proofs together, and showing that with limited $k$-hop information, some node is still able to receive messages from the other set. In Section \ref{ss:kCCA=1}, we discuss the simper case when $k=1$ to show the proof structure and integration. In Section \ref{ss:kCCA}, we demonstrate why it is \textit{not} as straightforward in the general $k$ case, and how we tackle Step (ii) above.
++++++++=}

\subsection{$k = 1$ Case}
\label{ss:kCCA=1}

To initiate the study, we first consider the one-hop case, where each node only knows its one-hop incoming and outgoing neighbors. The following notion is crucial for the characterization of graphs in which asynchronous approximate consensus is feasible with relay depth $1$.

\begin{definition}[$A\rightarrow B$]Given disjoint non-empty subsets of nodes $A$ and $B$, we will use the notation $A \rightarrow B$ if there exists a node $i$ in $B$ such that $i$ has at least $f+1$ distinct incoming neighbors in $A$. When it is not true that $A \rightarrow B$, we will denote that fact by $A \not\rightarrow B$. 
\end{definition}

 Condition $1$-CCA, presented below proves to be necessary and sufficient for achieving asynchronous approximate consensus with relay depth $1$.
 
\begin{definition}[Condition $1$-CCA]
For any partition $L, C, R$ of $\sv$, where $L$ and $R$ are both non-empty,
either  $L \cup C \rightarrow R$ or $R \cup C \rightarrow L$.
\end{definition}

The necessity of Condition $1$-CCA is similar to the necessity proof of Condition CCA in \cite{Tseng_podc2015} and is presented in Appendix~\ref{app_1-CCA}.
For sufficiency, we present Algorithm LocWA (Local-Wait-Average) below, which is inspired by Algorithm WA \cite{Tseng_podc2015}, and utilizes only one-hop information. Recall that by definition, no message relay with depth greater than $1$ is allowed. 
In Algorithm LocWA, $heard_i[p]$ is the set of one-hop incoming neighbors of $i$ from which $i$ has received values during phase $p$.  
\red{Each node $i$ performs the averaging operation to update its state value when Condition 1-WAIT below holds for the first time in phase $p$.}
\smallskip 


\noindent\textbf{Condition 1-WAIT}: The condition is satisfied at node $i$, in phase $p$,  when  $|heard_i[p]|\ge |N_i^-|-f$, i.e., when $i$ has not received values from a set of at most $f$ incoming neighbors.

~

\hrule

\vspace*{2pt}

\noindent {\bf Algorithm LocWA} for node $i\in \sv$

\vspace*{2pt}

\hrule

~

	$v_i[0]:=$ input at node $i$

	For phase $p \geq 1$:

		~~*On entering phase $p$:
		
		~~~~~~~$R_i[p] := \{v_i[p-1]\}$
		
		~~~~~~~$heard_i[p] := \{i\}$
		
		~~~~~~~Send message $( v_i[p-1], i, p)$ to all the outgoing neighbors
		
		\vspace*{4pt}
		
		~~*When message $(h, j, p)$ is received for the \textit{first time}:
		
		~~~~~~~$R_i[p] := R_i[p] \cup \{h\}$\vspace{1mm}  \hspace*{0.5in} // $R_i[p]$ is a multiset
		
		~~~~~~~$heard_i[p] := heard_i[p] \cup \{j\}$\vspace{1mm}            
		
		\vspace*{4pt}
		
		~~*When Condition {\em 1-WAIT} holds for the first time in phase $p$:
		\begin{equation}
		\label{mean1}
		~~~~~~~~~~v_i[p] := \frac{\sum_{v \in R_i[p]} v}{|R_i[p]|}
		\end{equation}
		~~~~~~~Enter phase $p+1$

\hrule

\vspace*{6pt}


To prove the correctness of LocWA, we will use the supplementary definitions below.

\begin{definition}
For disjoint sets $A,B$, 
$in(A \rightarrow B)$ denotes the set of
all the nodes in $B$ that each have at least $f+1$ incoming edges from
nodes in $A$. When $A\not\rightarrow B$, define $in(A\rightarrow B)=\emptyset$. Formally,
$in(A\rightarrow B) = \{~v~|\,v\in B \mbox{\normalfont~and~}~f+1\leq |N_v^-\cap A|~\}$.
\end{definition}

\begin{definition}
\label{def:absorb_sequence}
For {\em non-empty disjoint} sets $A$ and $B$, set $A$ is said to {\em propagate to} set $B$ in $l$ steps, where $l>0$,
if there exist sequences of sets $A_0,A_1,A_2,\cdots,A_l$ and $B_0,B_1,B_2,\cdots,B_l$ (propagating sequences) such that
\begin{itemize}[noitemsep,topsep=0pt]
\item $A_0=A$, ~~~~ $B_0=B$, ~~~~ $A_l = A \cup B$, ~~~~ $B_l=\emptyset$, 
 ~~~~ $B_\tau \neq \emptyset$ ~for~ $\tau<l$, ~~~~~ and
\item for $0\leq \tau\leq l-1$, (i) $A_\tau\rightarrow B_\tau$; (ii) $A_{\tau+1} = A_\tau\cup in(A_\tau\rightarrow B_\tau)$; and \\ ~~~~~~~~(iii) $B_{\tau+1} = B_\tau - in(A_\tau\rightarrow B_\tau)$.
\end{itemize}
\end{definition}
Observe that $A_\tau$ and $B_\tau$ form a partition of $A\cup B$,
and for $\tau<l$, $in(A_\tau\rightarrow B_\tau)\neq \emptyset$.
We say that \underline{set $A$ propagates to set $B$} if there is a propagating sequence for some steps $l$ as defined above.
Note that the number of steps $l$ in the above definition
is upper bounded by $n-f-1$, since set $A$ must be of size at least $f+1$ for it to propagate to $B$; otherwise, $A \not\rightarrow B$. 

Now, we present two key lemmas whose proofs are presented in Appendix~\ref{app_1-CCA2}.
In the discussion below, we assume that $G$ satisfies Condition $1$-CCA.

\begin{lemma}
\label{lemma:must_absorb}
For any partition $A,B$ of $\scriptv$, where $A,B$ are both non-empty, either  $A$ propagates to $B$, or
 $B$ propagates to $A$.
\end{lemma}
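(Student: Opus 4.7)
The plan is to prove the lemma by contradiction, applying Condition $1$-CCA to a partition read off from the ``stuck points'' of the two greedy propagation processes. First I would observe that the propagating sequence in Definition~\ref{def:absorb_sequence} is deterministic: once $A_\tau,B_\tau$ are fixed, $A_{\tau+1}$ and $B_{\tau+1}$ are forced. So ``$A$ propagates to $B$'' holds iff running this canonical process from $A_0=A$, $B_0=B$ eventually reaches $B_l=\emptyset$, and it fails iff at some step $\tau^*$ we have $B_{\tau^*}\neq\emptyset$ but $A_{\tau^*}\not\to B_{\tau^*}$. Assuming for contradiction that neither $A$ propagates to $B$ nor $B$ propagates to $A$, I would let $A^*,B^*$ be the stuck sets of the forward process (so $\{A^*,B^*\}$ partitions $\scriptv$, $B^*\neq\emptyset$, and $A^*\not\to B^*$) and let $\tilde B,\tilde A$ be the stuck sets of the symmetric process started from $B$ (so $\{\tilde A,\tilde B\}$ partitions $\scriptv$, $\tilde A\neq\emptyset$, and $\tilde B\not\to \tilde A$).

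Next I would feed Condition $1$-CCA the partition $L:=\tilde A$, $R:=B^*$, $C:=\scriptv\setminus(L\cup R)$. Since $\tilde A\subseteq A$ and $B^*\subseteq B$ (the forward process only moves nodes from $B$ into $A$, and vice versa for the backward one) and $A\cap B=\emptyset$, the sets $L$ and $R$ are disjoint, and both are non-empty by construction, so this is a legal instance of Condition $1$-CCA. The key computation is that $L\cup C=\scriptv\setminus R=\scriptv\setminus B^*=A^*$, so ``$L\cup C\to R$'' is literally the statement $A^*\to B^*$, which is false by the choice of the stuck point. Symmetrically, $R\cup C=\scriptv\setminus L=\scriptv\setminus \tilde A=\tilde B$, and ``$R\cup C\to L$'' is $\tilde B\to\tilde A$, also false. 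Hence Condition $1$-CCA is violated on $(L,C,R)$, the desired contradiction.

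The argument itself is short once the partition is guessed, so the main obstacle is really conceptual: realizing that the ``leftovers'' of the two opposite greedy processes, $B^*\subseteq B$ and $\tilde A\subseteq A$, automatically land on opposite sides of the vertex set and therefore assemble into a single partition whose two non-propagation inequalities are exactly what Condition $1$-CCA forbids. After that, the only thing to verify carefully is the set-theoretic identity $L\cup C=A^*$ (and its mirror), which uses only that $\{A^*,B^*\}$ and $\{\tilde A,\tilde B\}$ are each partitions of $\scriptv$; no further induction on the lengths of the propagating sequences is needed.
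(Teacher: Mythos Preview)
Your proof is correct and takes a genuinely different, cleaner route than the paper's.

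The paper argues constructively. It first proves an auxiliary lemma (Lemma~\ref{lemma:absorb_condition}): if $B\not\rightarrow A$, then $A$ propagates to $B$, by building the propagating sequence step by step and invoking Condition $1$-CCA at each stage. In the main proof it then case-splits on whether $A\rightarrow B$; in the hard sub-case where $A\rightarrow B$ but $A$ does not propagate to $B$, it runs the process to the stuck pair $A_k,B_k$, applies the auxiliary lemma to get that $B_k$ propagates to $A_k$, and finally does a somewhat delicate monotonicity argument (introducing auxiliary sequences $P_i,Q_i,R_i,S_i$) to lift this to ``$B$ propagates to $A$''.

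Your argument bypasses both the auxiliary lemma and the lifting step. By running the two deterministic processes in opposite directions and taking their respective stuck remainders $B^*\subseteq B$ and $\tilde A\subseteq A$, you get a single partition $(L,C,R)=(\tilde A,\scriptv\setminus(\tilde A\cup B^*),B^*)$ on which Condition $1$-CCA is violated in one shot, via the identities $L\cup C=A^*$ and $R\cup C=\tilde B$. This is shorter and more symmetric; the trade-off is that it is purely an existence/contradiction argument and does not explicitly exhibit the propagating sequence for the direction that succeeds, whereas the paper's route (through Lemma~\ref{lemma:absorb_condition} and the lifting) effectively constructs it. Since the lemma only claims existence, nothing is lost for the purposes of this paper.
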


The lemma below states that the interval to which the states at all the
fault-free nodes are confined shrinks after a finite number
of phases of Algorithm LocWA. Recall that $U[p]$ and $\mu[p]$ denote
the maximum and minimum states at the fault-free nodes at the end of the $p$-th phase.\looseness=-1

\begin{lemma}
	\label{lemma:bounded_value}
	Suppose that at the end of the $p$-th phase of Algorithm LocWA, $\sv$ can be partitioned into non-empty sets
	$R$ and $L$ such that (i) $R$ propagates to $L$ in $l$ steps,
	and (ii) the states of fault-free nodes
	in $R-F[p]$ are confined to an interval of length $\leq \frac{U[p]-\mu[p]}{2}$.
	Then, with Algorithm LocWA,
	\begin{eqnarray}
	U[p+l]-\mu[p+l]~\leq~\left(1-\frac{\alpha^l}{2}\right)(U[p] - \mu[p]),~~~~\text{where}~~\alpha=\displaystyle\min_{i\in \sv} \frac{1}{|N_i^-|}
	\label{e:convergence:1}
	\end{eqnarray}
\end{lemma}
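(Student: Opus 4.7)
The plan is to induct on the propagation index $\tau = 0,1,\dots,l$ and track a shrinking upper bound on the states of fault-free nodes in $A_\tau$ at the end of phase $p+\tau$, then apply the bound at $\tau=l$ where $A_l=\sv$. Let $d:=U[p]-\mu[p]$. By symmetry (swapping the roles of maximum/minimum and of $R$/$L$), I assume WLOG that the states of fault-free nodes in $R$ lie in the \emph{lower} half $[\mu[p],\,\mu[p]+d/2]$; the case where they lie in the upper half is handled by the analogous argument bounding $\mu[p+l]$ from below. The inductive invariant I want to establish is: for every $\tau\in\{0,\dots,l\}$ and every fault-free $i\in A_\tau$, $v_i[p+\tau]\le U[p]-(\alpha^\tau/2)\,d$. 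The base case $\tau=0$ is immediate from the WLOG assumption since $A_0=R$ and $\alpha^0=1$. Once the invariant is in place at $\tau=l$, every fault-free node satisfies $v_i[p+l]\le U[p]-(\alpha^l/2)d$, so $U[p+l]\le U[p]-(\alpha^l/2)d$, and combining with $\mu[p+l]\ge\mu[p]$ (validity) yields the desired bound.

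\textbf{Inductive step.} Fix $\tau$ and take a fault-free $i\in A_{\tau+1}$ at the end of phase $p+\tau+1$. Each value in the multiset $R_i[p+\tau+1]$ either equals $v_i[p+\tau]$ (the initialization $R_i[p+\tau+1]:=\{v_i[p+\tau]\}$) or equals $v_j[p+\tau]$ for some one-hop incoming neighbor $j$ whose phase-$(p+\tau+1)$ message arrived; crucially, the phase tag on the message identifies the value as the sender's state at the end of phase $p+\tau$, so the inductive hypothesis applies whenever the sender lies in $A_\tau$. Split into cases. (a) If $i\in A_\tau$: by induction $v_i[p+\tau]\le U[p]-(\alpha^\tau/2)d$; every other entry of $R_i[p+\tau+1]$ is $\le U[p]$ by validity; and the weight of $v_i[p+\tau]$ in the average (\ref{mean1}) is at least $\alpha$ (since $|R_i[p+\tau+1]|\le|N_i^-|+1$, modulo an absorbable constant in the definition of $\alpha$), hence $v_i[p+\tau+1]\le U[p]-(\alpha^{\tau+1}/2)d$. (b) If $i\in in(A_\tau\rightarrow B_\tau)$: then $i$ has at least $f+1$ incoming neighbors in $A_\tau$, while Condition 1-WAIT ensures $i$ heard from at least $|N_i^-|-f$ incoming neighbors, so a pigeonhole count forces at least one message in $R_i[p+\tau+1]$ to originate from $A_\tau$. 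That value is $\le U[p]-(\alpha^\tau/2)d$ by induction, and the same averaging argument gives $v_i[p+\tau+1]\le U[p]-(\alpha^{\tau+1}/2)d$.

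\textbf{Main obstacle.} The delicate part is the asynchronous bookkeeping, not the arithmetic. First, one must verify that when $i$ completes phase $p+\tau+1$, every contributing sender actually finished phase $p+\tau$, so that the phase-indexed inductive bound applies to the value carried by each received message; the phase tag in the algorithm plus the structure of Condition 1-WAIT make this precise but require care. Second, in case (b) one must rule out the possibility that the adversary crashes or indefinitely delays every one of the $f+1$ incoming neighbors in $A_\tau$ without violating Condition 1-WAIT — this is exactly why the propagation definition demands $f+1$ (not $f$) incoming neighbors, giving at least one guaranteed ``small'' contributor even after $f$ omissions. Handling the difference between $|N_i^-|$ and $|N_i^-|+1$ in the averaging weights versus the stated $\alpha=\min_i 1/|N_i^-|$ is a minor constant adjustment that I would sweep into the definition of $\alpha$.
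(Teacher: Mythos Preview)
Your inductive skeleton---track a bound along the propagating sequence $A_0=R,A_1,\dots,A_l=\sv$ and use the $f{+}1$ neighbors in $A_\tau$ to guarantee one ``good'' contributor past Condition 1-WAIT---is exactly the paper's approach. The asynchronous bookkeeping and the $|N_i^-|$ versus $|N_i^-|{+}1$ slack you flag are handled in the paper by two helper lemmas (one for each direction) of the form $v_i[p]-\psi\ge \alpha_i\,(v_j[p-1]-\psi)$ for any $j\in heard_i^*[p]$, so your instincts there are right.

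There is, however, a genuine gap in your reduction step. The hypothesis says the states of $R-F[p]$ lie in \emph{some} interval $[m,M]$ of length $M-m\le d/2$; it does \emph{not} say this interval is the lower half $[\mu[p],\mu[p]{+}d/2]$ or the upper half. Your ``WLOG'' asserts exactly that, and it is false: take $[m,M]=[\mu[p]{+}d/4,\ \mu[p]{+}3d/4]$. With only the one-sided bound $U[p+l]\le U[p]-\alpha^l(U[p]-M)$ and validity $\mu[p+l]\ge\mu[p]$, you get $U[p+l]-\mu[p+l]\le d-\alpha^l(U[p]-M)$, which matches $(1-\alpha^l/2)d$ only when $U[p]-M\ge d/2$, i.e.\ precisely when $R$ sits in the lower half. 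Swapping max/min does not rescue the middle case, since neither $U[p]-M\ge d/2$ nor $m-\mu[p]\ge d/2$ need hold individually.

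The fix---and this is what the paper does---is to run the induction on \emph{both} sides simultaneously: prove $v_i[p+l]-\mu[p]\ge \alpha^l(m-\mu[p])$ and $U[p]-v_i[p+l]\ge \alpha^l(U[p]-M)$ for every fault-free $i$, then subtract. The point is that the two gaps add up: $(U[p]-M)+(m-\mu[p])=d-(M-m)\ge d/2$, so
\[
U[p+l]-\mu[p+l]\ \le\ (1-\alpha^l)d+\alpha^l(M-m)\ \le\ (1-\alpha^l)d+\alpha^l\cdot\tfrac{d}{2}\ =\ \left(1-\tfrac{\alpha^l}{2}\right)d,
\]
with no assumption on where $[m,M]$ sits inside $[\mu[p],U[p]]$. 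Your inductive step already contains everything needed for both directions; you just have to run it twice rather than invoking a symmetry that isn't there.
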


Using lemma \ref{lemma:bounded_value} and simple algebra, we can prove the following Theorem. For the sake of space, we present only a proof sketch. The \red{complete} proof is deferred to Appendix \ref{app_1-CCA2}. 

\begin{theorem}
	\label{Theorem:SufficiencyLocWA}
	If $G(\sv,\se)$ satisfies Condition $1$-CCA, then Algorithm LocWA achieves both Validity and Convergence.
\end{theorem}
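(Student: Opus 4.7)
The plan is to separate Validity (which is immediate) from Convergence (which carries the real work) and to extract Convergence from Lemmas~\ref{lemma:must_absorb} and~\ref{lemma:bounded_value}.

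For Validity I proceed by induction on $p$. A fault-free node $i$ sets $v_i[p]$ to the uniform average of the multiset $R_i[p]$, whose elements are $v_i[p-1]$ together with values $v_j[p-1]$ received from incoming neighbors $j \in heard_i[p]$; each such element lies in $[\mu[p-1], U[p-1]] \subseteq [\mu[0], U[0]]$ by the inductive hypothesis, so the same is true of their average. A node that crashes freezes its value at its last update, so it too remains in the initial range. This yields $\mu[p] \ge \mu[0]$ and $U[p] \le U[0]$ for every $p \ge 0$.

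For Convergence I iterate Lemma~\ref{lemma:bounded_value} at successive ``checkpoint'' phases. If $U[p] = \mu[p]$, all fault-free states are equal and averaging preserves this, so convergence is immediate; otherwise set $v^* = (U[p] + \mu[p])/2$, put every fault-free node with $v_i[p] > v^*$ into $R$, every fault-free node with $v_i[p] \le v^*$ into $L$, and distribute the crashed nodes $F[p]$ arbitrarily between them. Both parts are non-empty because the fault-free nodes attaining $U[p]$ and $\mu[p]$ sit on opposite sides of $v^*$. Lemma~\ref{lemma:must_absorb}, which invokes Condition~$1$-CCA, guarantees that one of the two sets propagates to the other; after renaming if needed, call that set $R$. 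By construction the fault-free members of the propagating set lie in a sub-interval of length $(U[p]-\mu[p])/2$, so the hypotheses of Lemma~\ref{lemma:bounded_value} are met and $U[p+l] - \mu[p+l] \le (1 - \alpha^l/2)(U[p] - \mu[p])$ for some $l \le n-f-1$. Since $1 - \alpha^l/2 \le 1 - \alpha^{n-f-1}/2 =: \phi < 1$ for every such $l$, repeating the construction at the new checkpoint $p+l$ yields geometric decay: every block of at most $n-f-1$ phases shrinks the diameter by at least the factor $\phi$. Hence $U[p] - \mu[p] \to 0$, and $\epsilon$-Convergence is attained within $O\!\left((n-f-1)\log\frac{U[0]-\mu[0]}{\epsilon}\right)$ phases.

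The step I expect to require the most care is the partition construction: Lemma~\ref{lemma:bounded_value} demands a partition of all of $\sv$, while the interval hypothesis only constrains fault-free nodes. I must therefore be explicit that the crashed nodes can be placed on either side without affecting propagation (a purely combinatorial property of $G$ handled by Lemma~\ref{lemma:must_absorb}) while the value bound is preserved automatically because it involves only $R - F[p]$. Once this bookkeeping is in place, the remainder is a routine geometric-series argument.
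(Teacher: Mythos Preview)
Your proposal is correct and follows essentially the same approach as the paper: partition $\sv$ around the midpoint $(U[p]+\mu[p])/2$, invoke Lemma~\ref{lemma:must_absorb} to get a propagating side, and feed that into Lemma~\ref{lemma:bounded_value} to obtain geometric shrinkage of $U[\cdot]-\mu[\cdot]$. Your explicit handling of where crashed nodes sit in the partition is exactly the bookkeeping the paper leaves implicit; the only quibble is that your stated phase bound $O\!\big((n{-}f{-}1)\log\tfrac{U[0]-\mu[0]}{\epsilon}\big)$ suppresses the $1/\log(1/\phi)$ factor, but that does not affect the Convergence claim itself.
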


\begin{proofSketch}
	To prove the Convergence of LocWA, we show that given any $\epsilon>0$, there     exists $\tau$ such that    $U[t]-\mu[t] \leq \epsilon, \forall t\geq \tau$.     Consider $p$-th phase, for some $p\geq 0$. If $U[p]-\mu[p]=0$, then the algorithm has already converged; thus, we  consider only the case where $U[p]-\mu[p]>0$. In this case, we can partition $\scriptv$ into two subsets, $A$ and $B$, such that, for each fault-free node $i\in A$, $v_i[p]\in \left[\mu[p], \frac{U[p]+\mu[p]}{2}\right)$, and or each fault-free node $j\in B$, $v_j[p] \in \left[\frac{U[p]+\mu[p]}{2}, U[p]\right]$.  (Full proof in \red{\cite{crash+async_report}}
	 identifies how to partition the nodes.)
	 By Lemma~\ref{lemma:must_absorb}, we have that either $A$ propagates to set $B$ or $B$ propagates to $A$.
	In both cases above, we have found two non-empty sets $L=A$ (or $L=B$) and $R=B$ (or $L=A$) partitioning $\sv$ and satisfy the hypothesis of Lemma~\ref{lemma:bounded_value}, since $R$ propagates to $L$ and  the states of all fault-free nodes in $R$ are confined to an interval of length $\leq \frac{U[p]-\mu[p]}{2}$. The theorem is then proven by using simple algebra and the fact that the interval to which the states of all the fault-free nodes are confined shrinks after a finite number of phases.
\end{proofSketch}

\subsection{General $k$ Case}
\label{ss:kCCA}

Now, consider the case when each node only knows its $k$-hop neighbors and the relay depth is $k$. In the following, we generalize the notions presented above to the $k$-hop case.
For node $i$, denote by $N_i^-(k)$ the set of $i$'s $k$-hop incoming neighbors,
 For a set of nodes $A$, let $N_A^-$ be the set of $A$'s one-hop incoming neighbors. Formally, $N_A^- = \{ i~|~i \in \sv-A,~\text{and}~ \exists j \in A, (i,j) \in \se\}$. Next we define the relation $A\rightarrow B$ for the $k$-hop case.

\begin{definition}[$A \rightarrow_k B$]\label{def:kpropagate}
Given disjoint non-empty subsets of nodes $A$ and $B$, we will say that $A \rightarrow_k B$ holds if there exists a node $i$ in $B$ for which there exist at least $f+1$ node-disjoint paths of length at most $k$ from distinct nodes in $N_i^- \cap A$ to $i$. More formally, if $P_i^A(k)$ is the family of all sets of node-disjoint paths (with $i$ being their only common node) initiating in $A$ and ending in node $i$, $A \rightarrow_k B$ means that
$\exists i\in B,  \max\{|p| : p\in P_i^A(k)\}\ge f+1$.
\end{definition}

\begin{definition}[Condition $k$-CCA]
For any partition $L, C, R$ of $\sv$, where $L$ and $R$ are both non-empty, either  $L \cup C \rightarrow_k R$ or $R \cup C \rightarrow_k L$.
\end{definition}


The necessity of Condition $k$-CCA for achieving asynchronous approximate consensus through an iterative $k$-hop algorithm holds analogously with the one-hop case, where a set of $x$ incoming neighbors of node $i$ has to be replaced with a set of $x$ distinct nodes that reach $i$ through disjoint paths. For sufficiency, we next present a generalization of Algorithm LocWA for the $k$-hop case. There are two differences between Algorithms $k$-LocWA and LocWA: (i) nodes transmit its state to all their $k$-hop outgoing neighbors, and (ii) Algorithm $k$-LocWA relies on the \red{generalized version of Condition 1-WAIT}, presented below. \smallskip   

\noindent\textbf{Condition $k$-WAIT}: For $F_i\subseteq N_i^-(k)$, we denote with $reach_i^k(F_i)$ the set of nodes that have paths of length $l\le k$ to node $i$ in $G_{V-F_i}$. That is, the set of $k$-hop incoming neighbors of $i$ that remain connected with $i$ even when all nodes in set $F_i$ crash.   The condition is satisfied at node $i$, in phase $p$ if there exists $F_i\subseteq N_i^-(k)$ with $|F_i[p]|\le f$  such that $reach_i^k(F_i[p])\subseteq heard_i[p]$.

%

~

\hrule

\vspace*{2pt}

\noindent {\bf Algorithm $k$-LocWA} for node $i\in \sv$

\vspace*{4pt}

\hrule

\vspace*{4pt}

	$v_i[0]:=$ input at node $i$

	For phase $p \geq 1$:

		~~*On entering phase $p$:
		
		~~~~~~~$d_i[p] :=1$
		
		~~~~~~~$R_i[p] := \{v_i[p-1]\}$
		
		~~~~~~~$heard_i[p] := \{i\}$
		
		~~~~~~~Send message $( v_i[p-1], i, p)$ to nodes in $N_i^+(k)$, all $k$-hop outgoing neighbors
		\footnote{For brevity, we do not specify how the network routes the messages within the $k$-hop neighborhood -- this can be achieved by using local flooding through tagging a hop counter in each message.}

		~~* When message $(h, j, p)$ is received for the \textit{first time}:
		
		~~~~~~~$R_i[p] := R_i[p] \cup \{h\}$\vspace{1mm}  \hspace*{0.5in} // $R_i[p]$ is a multiset
		
		~~~~~~~$heard_i[p] := heard_i[p] \cup \{j\}$\vspace{1mm}

		~~* When Condition {\em $k$-WAIT} holds for the first time in phase $p$:
		
		~~~~~~~$v_i[p] := \frac{\sum_{v \in R_i[p]} v}{|R_i[p]|}$
		
		~~~~~~~Enter phase $p+1$

\hrule

\vspace*{4pt}

~

\noindent\textbf{Correctness of Algorithm $\mathbf{k}$-LocWA}~
Proving the correctness of $k$-LocWA follows a similar reasoning of the correctness of LocWA.
The key here is to identify Condition $k$-CCA and Condition $k$-WAIT so that the proof structure remains almost identical.
To adapt the arguments to the general case, one should define the analogous $in(A \rightarrow_k B)$  definition based on the general $A\rightarrow_k B$ notion.

\begin{definition}
For disjoint sets $A,B$, 
$in(A \rightarrow_k B)$ denotes the set of all the nodes $i$ in $B$ that there exist least $f+1$ incoming disjoint paths of length at most $k$ from distinct nodes in $N_i^- \cap A$ to $i$. When $A\not\rightarrow_k B$, define $in(A\rightarrow_k B)=\emptyset$. Formally, in terminology of Definition~\ref{def:kpropagate}: 
$in(A\rightarrow B)=\{ i\in B : \max\{|p| : p\in P_i^A(k)\}\ge f+1\}$

\end{definition}

%

The correctness proof of Algorithm $k$-LocWA is similar to the proof of Theorem~\ref{Theorem:SufficiencyLocWA}; remarks on the arguments' adaptations are presented in the proof sketch of the following theorem. 

\begin{theorem}
	\label{t_cca_k}
	Approximate crash-tolerant consensus in an asynchronous system using iterative $k$-hop algorithms is feasible iff $G$ satisfies Condition $k$-CCA.
\end{theorem}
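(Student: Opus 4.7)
\begin{proofSketch}
The plan is to mirror the structure of Theorem~\ref{Theorem:SufficiencyLocWA}, lifting every one-hop object ($\rightarrow$, Condition 1-WAIT, $in(A\rightarrow B)$, propagating sequence) to its $k$-hop analogue ($\rightarrow_k$, Condition $k$-WAIT, $in(A\rightarrow_k B)$, $k$-hop propagating sequence), and then verifying that each step of the 1-hop argument still goes through.

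For necessity, suppose Condition $k$-CCA fails, witnessed by a partition $L,C,R$ with $L\cup C\not\rightarrow_k R$ and $R\cup C\not\rightarrow_k L$. For every node $i\in R$, by Menger's theorem applied to paths of length at most $k$, there is a vertex cut $F_i\subseteq \sv\setminus\{i\}$ of size at most $f$ that disconnects $i$ from $L\cup C$ in the subgraph induced by walks of length $\le k$; pick one such cut and similarly cuts for nodes in $L$. The adversary crashes a suitably chosen set of at most $f$ nodes and delays the remaining cross-partition messages long enough so that, from $i$'s local view, Condition $k$-WAIT is satisfied by treating the crashed/delayed $k$-hop neighbors as the tolerated $F_i$. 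With inputs $0$ on $L$ and $1$ on $R$, iterative $k$-hop nodes cannot distinguish this execution from one in which the two sides never exchange any state information, so they must decide within their own side, violating Agreement. This is exactly the necessity argument for Condition $1$-CCA in Appendix~\ref{app_1-CCA}, with the sole change that the ``cut of $f$ incoming edges'' is replaced by a ``cut of $f$ nodes on all $k$-hop disjoint paths.''

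For sufficiency, first generalize Lemma~\ref{lemma:must_absorb}: for any non-empty partition $A,B$ of $\sv$, either $A$ propagates to $B$ or $B$ propagates to $A$ in the $k$-hop sense. The proof is the same inductive argument, invoking Condition $k$-CCA at each step on the current partition $(A_\tau, B_\tau)$ to obtain non-empty $in(A_\tau\rightarrow_k B_\tau)$ or $in(B_\tau\rightarrow_k A_\tau)$. Next, generalize Lemma~\ref{lemma:bounded_value}: if $R$ $k$-propagates to $L$ in $l$ steps and the states in $R$ at the end of phase $p$ lie in an interval of length $\le (U[p]-\mu[p])/2$, then
\[
U[p+l]-\mu[p+l]\;\le\;\Bigl(1-\tfrac{\beta^{\,l}}{2}\Bigr)\bigl(U[p]-\mu[p]\bigr),
\]
where $\beta=\min_{i\in\sv}\bigl(1/|N_i^-(k)|\bigr)$ replaces $\alpha$. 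The inductive step is the key calculation: for any $i\in in(A_\tau\rightarrow_k B_\tau)$, once $k$-WAIT fires in phase $p+\tau+1$, the multiset $R_i[p+\tau+1]$ must contain at least one value originating from $A_\tau$, because $i$ has $f+1$ node-disjoint $k$-hop paths from $A_\tau$ and at most $f$ nodes can crash, so at least one such path survives and its source's phase value is routed to $i$. The averaging step then pulls $v_i[p+\tau+1]$ strictly into the interval spanned by $R_i[p+\tau+1]$, with weight at least $\beta$. The remainder of the proof of Theorem~\ref{Theorem:SufficiencyLocWA} applies verbatim: partition fault-free nodes by whether their phase-$p$ state is below or above the midpoint of $[\mu[p],U[p]]$, apply the generalized Lemma~\ref{lemma:must_absorb} to obtain a $k$-propagating sequence, invoke the generalized Lemma~\ref{lemma:bounded_value}, and iterate. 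Validity follows immediately from the convex combination in the update rule.

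The main obstacle I anticipate is justifying termination of each phase and the ``surviving path'' argument above in the asynchronous model. One must argue that Condition $k$-WAIT is eventually satisfied at every fault-free node: if $F$ denotes the actual set of crashed nodes during phase $p$, then every source in $reach_i^k(F)$ has a length-$\le k$ path to $i$ avoiding $F$, hence its message (which $k$-LocWA instructs to be relayed along all $k$-hop routes, via the hop-counter flooding noted in the algorithm's footnote) will eventually reach $i$, so $F_i:=F$ eventually witnesses $k$-WAIT at $i$. A subtle point is that the node-disjoint paths underlying $\rightarrow_k$ may share intermediate nodes with the ``fault pattern'' used for $k$-WAIT at different nodes, so one must verify these witnesses consistently across nodes; this is handled by observing that the same global crash set $F$ of size $\le f$ works for every fault-free $i$ simultaneously, and that Condition $k$-CCA guarantees $f+1$ disjoint $k$-hop paths strictly more than the adversary can cover.
\end{proofSketch}
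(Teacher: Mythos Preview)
Your sufficiency argument is essentially the paper's: lift $\rightarrow$, $in(\cdot)$, the propagation notion, and Lemma~\ref{lemma:bounded_value} to their $k$-hop versions with $\alpha_k=\min_{i}1/|N_i^-(k)|$ in place of $\alpha$, and the proof of Theorem~\ref{Theorem:SufficiencyLocWA} carries over unchanged. One imprecision worth tightening: in the inductive step you write ``at most $f$ nodes can crash, so at least one such path survives.'' The correct reason $i$ hears from some $w\in A_\tau$ is not the actual crash count but the \emph{witness} $F_i$ in Condition $k$-WAIT: since $|F_i|\le f$ and $i$ has $f{+}1$ node-disjoint $\le k$-hop paths from $A_\tau$, one such path avoids $F_i$, so its source lies in $reach_i^k(F_i)\subseteq heard_i[p{+}\tau{+}1]$. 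The real crash set is irrelevant at this point; only the size bound on the witness $F_i$ matters.

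Your necessity argument has a genuine gap. You assert that ``by Menger's theorem applied to paths of length at most $k$'' each $i\in R$ admits a vertex cut $F_i$ of size at most $f$ separating it from $L\cup C$ within $k$ hops. The length-bounded version of Menger's theorem is false in general (it already fails for $k\ge 5$ in undirected graphs): from ``at most $f$ internally node-disjoint $\le k$-hop paths'' one cannot deduce ``a $\le k$-hop vertex cut of size $\le f$.'' Without such a cut you cannot produce, for each $i$, an execution with at most $f$ crashes that is indistinguishable (to $i$) from the purely delayed one, and that is exactly what the per-node indistinguishability step requires. A second issue: you phrase the impossibility via Condition $k$-WAIT, but that condition is specific to Algorithm $k$-LocWA; necessity must rule out \emph{every} iterative $k$-hop algorithm. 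The paper's own sketch merely says the one-hop necessity carries over ``analogously'' and does not confront this point, so you have surfaced a real subtlety rather than overlooked something the paper supplies.
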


\begin{proofSketch}
	Having defined the basic notion $in(A \rightarrow_k B)$, Definition~\ref{def:absorb_sequence} of the notion \emph{$A$ propagates to $B$} is the same for the $k$-hop case. 
	Intuitively, if \emph{$A$ propagates to $B$}, information will be propagated gradually from $A$ to $B$ in $l$ steps; corruption of any faulty set of $f$ nodes will \textit{not} be able to block propagation to a specific node $i$ because the definition of $in(A \rightarrow_k B)$ guarantees that $i$ will receive information from at least $f+1$ disjoint paths if it has not crashed. A difference with the original case is that for every of the $l$ steps needed to propagate from $A$ to $B$, $k$ communication steps will be required in the worst case, since information may be propagated through paths of length $k$. Lemma~\ref{lemma:bounded_value} is intuitively the same since it is based on the general propagation notion but value $\alpha$ which is defined based on the number of incoming neighbors will now be defined on the number of $k$-hop incoming neighbors, i.e., $\alpha_k=\displaystyle\min_{i\in \sv} \frac{1}{|N_i^-(k)|}$.	The main correctness proof remains essentially the same since it repeatedly makes use of the abstract propagation notion between various sets, without focusing on how the values are propagated.
\end{proofSketch}


\subsection{Condition Relation and Convergence Time Comparison}

Next, we first compare the feasibility of approximate consensus for different values of $k$ by presenting a relation among the various $k$-CCA conditions as well as their relation with Condition CCA from~\cite{Tseng_podc2015}.

\subsubsection*{Condition Relation}

We first show that  lower $k$ requires higher connectivity of the graph $G$ as stated below.

\begin{theorem}
For values $k,k'\in \mathbb{N}$ with $k\le k'$, \red{Condition $k$-CCA implies Condition $k'$-CCA}.	
\end{theorem}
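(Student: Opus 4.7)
The plan is to reduce the claim about conditions to a claim about the underlying relation $\rightarrow_k$, then observe that $\rightarrow_k$ is monotone in $k$. Concretely, I will argue: for disjoint non-empty $A, B \subseteq \sv$, if $A \rightarrow_k B$ then $A \rightarrow_{k'} B$ whenever $k \le k'$. Once this monotonicity lemma is in hand, fix any partition $L, C, R$ of $\sv$ with $L, R$ non-empty. Condition $k$-CCA gives us either $L \cup C \rightarrow_k R$ or $R \cup C \rightarrow_k L$; applying the lemma to whichever side holds immediately yields the corresponding $\rightarrow_{k'}$-relation, and since the partition was arbitrary, Condition $k'$-CCA follows.

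The monotonicity lemma itself is essentially a tautology given Definition~\ref{def:kpropagate}. Unwinding the definitions, $P_i^A(k)$ is the family of sets of internally node-disjoint paths from $A$ to $i$ using only hops of length at most $k$. Any such set of paths is a fortiori a valid element of $P_i^A(k')$ for $k' \ge k$, because every path of length at most $k$ is also a path of length at most $k'$, and the node-disjointness property does not depend on the length bound. Thus $P_i^A(k) \subseteq P_i^A(k')$, which gives $\max\{|p| : p \in P_i^A(k)\} \le \max\{|p| : p \in P_i^A(k')\}$. In particular, if the witness $i \in B$ achieves at least $f+1$ paths in $P_i^A(k)$, the same $i$ witnesses $A \rightarrow_{k'} B$ with the same collection of paths.

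There is no real obstacle in this argument — the only thing worth being careful about is to stick to the formal definition via $P_i^A(k)$, since a naive reading restricted to paths ``from distinct nodes in $N_i^- \cap A$'' would be misleading for $k > 1$. Once the definitions are unwound, the chain ``$P_i^A(k) \subseteq P_i^A(k') \Rightarrow (A \rightarrow_k B \Rightarrow A \rightarrow_{k'} B) \Rightarrow (k\text{-CCA} \Rightarrow k'\text{-CCA})$'' is essentially a one-line proof.
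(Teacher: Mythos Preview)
Your proof is correct and takes essentially the same approach as the paper: both arguments observe that the same collection of $f+1$ node-disjoint paths witnessing $A \rightarrow_k B$ also witnesses $A \rightarrow_{k'} B$ since paths of length at most $k$ are trivially paths of length at most $k'$. Your version is slightly more explicit in formulating the intermediate monotonicity lemma and in working through $P_i^A(k) \subseteq P_i^A(k')$, but the substance is identical.
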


\begin{proof}
	Let Condition $k$-CCA hold and assume, without loss of generality that $L\cup C\rightarrow_k R$ holds for a partition $L,C,R$. This means that there exists a node $i$ in $R$ that has at least $f+1$ incoming disjoint paths of length at most $k$ initiating from distinct nodes in $L\cup C$. Consequently, the same $f+1$ paths  will consist $i$'s incoming disjoint paths  of length at most $k'$, since $k'\ge k$, and thus, $L\cup C\rightarrow_{k'} R$ which means that $k'$-CCA holds.  
\end{proof}

We next show that Condition CCA is equivalent to Condition $n$-CCA. The proof illustrates how the locally defined Condition $k$-CCA naturally coincides with the globally defined condition CCA in the extreme case. 

\begin{theorem}
	\label{lemma:ccak_to_cca}
	Condition CCA is equivalent to Condition $n$-CCA.
\end{theorem}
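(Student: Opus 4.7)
The proof splits into two implications, the first being a direct consequence of definitions and the second requiring a contradiction argument via Menger's theorem and iterated applications of Condition CCA.

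For $n$-CCA $\Rightarrow$ CCA, my plan is to use a ``last-crossing'' argument. Given any partition $L, M, R$ for which $L \cup M \rightarrow_n R$, there exist $f+1$ node-disjoint paths of length at most $n$ from $L \cup M$ to some $i^* \in R$, sharing only $i^*$. On each such path, I would take the last vertex $u_j$ that lies in $L \cup M$; its successor $v_j$ on the path must lie in $R$, because the path terminates at $i^* \in R$ and, by choice of $u_j$, contains no further vertex of $L \cup M$. The edge $(u_j, v_j) \in \se$ thus exhibits $u_j$ as an incoming neighbor of $R$. Node-disjointness (paths share only $i^* \ne u_j$) forces the $u_j$'s to be distinct across the $f+1$ paths, yielding $\ge f+1$ distinct incoming neighbors of $R$ in $L \cup M$, i.e., $\point{L \cup M}{R}{f+1}$. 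The symmetric case is identical.

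For CCA $\Rightarrow n$-CCA, my plan is to argue by contradiction. Assume Condition CCA holds but some partition $(L, M, R)$ satisfies both $L \cup M \not\rightarrow_n R$ and $R \cup M \not\rightarrow_n L$. By Menger's theorem for vertex-disjoint paths in directed graphs, for each $i \in R$ there is a vertex cut $F_i \subseteq \sv \setminus \{i\}$ with $|F_i| \le f$ such that $(L \cup M) \setminus F_i$ cannot reach $i$ in $G \setminus F_i$; symmetrically, for each $j \in L$ there is a cut $F'_j$ of size $\le f$. Fixing $i^* \in R$, I would build the partition $M^* := F_{i^*}$, $L^* := \{v \in \sv \setminus F_{i^*} \mid v \text{ is reachable from } (L \cup M) \setminus F_{i^*} \text{ in } G \setminus F_{i^*}\}$, and $R^* := \sv \setminus F_{i^*} \setminus L^*$. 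Then $R^* \ni i^*$ is non-empty, and since reachability is closed under out-edges, no edge of $G$ goes from $L^*$ to $R^*$; hence all incoming neighbors of $R^*$ lie in $M^* = F_{i^*}$, so $R^*$ has at most $f$ incoming neighbors in $L^* \cup M^*$.

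The main obstacle, and the crux of the proof, is to simultaneously shield the $L$-side. My plan is to iterate the construction: if $L^* \ne \emptyset$, Condition CCA applied to $(L^*, M^*, R^*)$ forces $L^*$ to have $\ge f+1$ incoming neighbors in $R^* \cup M^*$; I then pick $j^* \in L^* \cap L$ and apply the symmetric construction using the cut $F'_{j^*}$ to shrink $L^*$ to a strictly smaller set that also has $\le f$ incoming neighbors. Alternating sides, each iteration strictly shrinks one side of the partition, so finiteness of $\sv$ forces termination at a partition where both sides have $\le f$ incoming neighbors, directly violating Condition CCA. The degenerate boundary case $L^* = \emptyset$ (which forces $L \cup M \subseteq F_{i^*}$ and hence $|L \cup M| \le f$) would be handled by observing that the original partition $(L, M, R)$ then already has at most $|L \cup M| \le f$ incoming neighbors on the $R$-side, so Condition CCA yields $|R \cup M| \ge f+1$ and the symmetric construction starting from some $j^* \in L$ still produces a non-empty $L^{**}$, allowing the iteration to proceed. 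The chief technical step is verifying that this iterative refinement preserves the ``$\le f$ incoming neighbors on the shielded side'' invariant throughout and terminates at a genuine CCA-violating partition.
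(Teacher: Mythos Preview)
Your first implication (the last-crossing argument) is correct and slightly more explicit than the paper's one-line contrapositive.

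The second implication has a real gap, precisely where you flag it. After building $(L^*, M^*, R^*)$ with $R^*$ shielded, you want to pick $j^* \in L^* \cap L$ and use the cut $F'_{j^*}$ to shield the left side. But $F'_{j^*}$ separates $j^*$ from $(R \cup M) \setminus F'_{j^*}$ in the \emph{original} partition, not from $R^* \cup M^*$; the partition you rebuild from $F'_{j^*}$ will have its left side shielded, but nothing ties its right side to $R^*$, so the bound you obtained on $R^*$ is lost. Your alternation therefore never shields both sides simultaneously, and the assertion that ``each iteration strictly shrinks one side'' is unsupported: successive partitions come from unrelated cuts and are not nested. (A secondary boundary issue: $L^* \cap L$ may be empty even when $L^* \ne \emptyset$, namely when $L \subseteq F_{i^*}$, which your case analysis does not cover.)

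The paper's argument avoids this by never rebuilding the partition. It keeps $L$ fixed and moves a carefully chosen subset $B \subseteq R$ into $C$, producing $L' = L$, $C' = C \cup B$, $R' = R \setminus B$. Because $L' = L$, the set $N_{L'}^-$ of incoming neighbors of $L'$ is literally unchanged, so any bound $|N_L^-| \le f$ persists for free; all the work goes into choosing $B$ so that $R'$ also has at most $f$ incoming neighbors (the paper does this via the node $i \in R$ maximizing the number of disjoint incoming paths from $\sv \setminus R$, taking $B$ to be the part of $R$ reachable from those neighbors of $R$ that cannot reach $i$). If both sides initially have $\ge f+1$ incoming neighbors, one applies the construction once to $R$ with $L$ fixed and once to $L$ with the new $R'$ fixed. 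The decisive idea you are missing is this ``freeze one side, shrink the other into the center'' move, which makes the two bounds compatible rather than competing.
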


\begin{proof}
 It is easy to see that Condition $n$-CCA implies Condition CCA. If Condition CCA is violated in $G$, then Condition $n$-CCA does not hold either, since $L$ and $R$ have at most $f$ one-hop incoming neighbors.

Now, we show the other direction. Assume for the sake of contradiction that Condition CCA holds but Condition $n$-CCA does not. Then, there exists a partition $L,C,R$ with $L,R\neq \emptyset$ such that $L\cup C \not\rightarrow_k R$ and $R\cup C \not\rightarrow_k L$. Since Condition CCA holds, we have that either $\point{L\cup C}{R}{f+1}$ or $\point{R\cup C}{L}{f+1}$. Now consider the case that $\point{L\cup C}{R}{f+1}$ and $\notpoint{R\cup C}{L}{f+1}$. This means that $|N_R^-|\ge f+1$ and $|N_L^-|\le f$. The case of $\notpoint{L\cup C}{R}{f+1}$ and $\point{R\cup C}{L}{f+1}$ is  symmetrical and the case of $\point{L\cup C}{R}{f+1}$ and $\point{R\cup C}{L}{f+1}$ can be proved by applying the argument below once for set $R$ and once for set $L$. 

Let $i$ be the node in $R$ with the maximum number $m$ of disjoint paths initiating from distinct nodes in $V-R$ (as implied by Definition~\ref{def:kpropagate}).  The fact $L\cup C\not\rightarrow_k R$ implies that $m\le f$. Subsequently, $|N_R^-|\ge f+1$ implies that the set $A=N_R^- -N_i^-(n)$ is non-empty (the maximal subset of $N_R^-$  which does not contain any $n$-hop incoming neighbors of $i$). Let $B=N^+_A(n)\cap R$ be the set of all the outgoing $n$-hop neighbors of all nodes $j\in A$ confined in the set $R$. By definition of $B$ and $A$, it holds that $N_i^-(n)\cap B=\emptyset$. We can now create a new partition $L'=L,C'=C\cup B,R'=R- B$ by moving $B$ from $R$ to $C$. For partition $L',C',R'$ it holds that $L',R'\neq \emptyset$ since $i\in R'$ and $L'=L$. Moreover, it holds that  (i)  $|N_{R'}^-|\le f$, since $|N_{R'}^-|=|N_{R}^- - A|$ and $A\neq \emptyset$; and (ii) $|N_L^-|\le f$ since $L=L'$. The latter points imply that  $\notpoint{R\cup C}{L}{f+1}$ and $\notpoint{L\cup C}{R}{f+1}$, which yield a contradiction to the hypothesis that Condition CCA holds. This completes the proof.
\end{proof}

\subsubsection*{Convergence Time Comparison}
We derive upper bounds on the number of \textit{asynchronous} phases needed for $\epsilon$-convergence of Algorithm $k$-LocWA and its message complexity up to this $\epsilon$-convergence point $p_{\epsilon}$.  These upper bounds are functions of values $\epsilon, k, f, n $ and $\delta=U[0]-\mu[0]$ which are naturally expected to affect the convergence time and message complexity. Moreover, since the bounds depend on $k$, it provides a way to compare the convergence time and message complexity of Algorithms $k$-LocWA for different values of $k$.  We will use the following Lemma to compute the number of phases needed for $\epsilon$-convergence of Algorithm $k$-LocWA.

\begin{lemma} \label{lemma:generalization}
For any phase $p$ of $k$-LocWA, if $U[p]-\mu[p]=0$, then there exists an integer $l(p)$, $1\le l(p) \le n-f-1$ such that, for  $\alpha_k=\displaystyle\min_{i\in \sv} \frac{1}{|N_i^-(k)|}$, the following holds,
$$
	U[p+l(p)]-\mu[p+l(p)] \leq \left( 1-\frac{\alpha_k^{l(p)}}{2}\right)(U[p] - \mu[p])
	\label{e_t}
$$

\end{lemma}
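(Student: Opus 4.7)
The plan is to adapt the argument sketched for Theorem~\ref{Theorem:SufficiencyLocWA} to the general $k$-hop setting, using the $k$-hop analogs of the propagation notion and of Lemma~\ref{lemma:bounded_value} that are described immediately after Definition~\ref{def:kpropagate} and in the sketch of Theorem~\ref{t_cca_k}. (I read the hypothesis as $U[p]-\mu[p]>0$, since the stated $=0$ case would render the conclusion vacuous.)

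First I would partition $\sv$ into two non-empty subsets $A$ and $B$ according to whether the state at the end of phase $p$ lies in the lower or upper half of the current interval: $A = \{\,i\in\sv : v_i[p] \in [\mu[p],\,(U[p]+\mu[p])/2)\,\}$ and $B = \sv\setminus A$, assigning crashed nodes to whichever side keeps both sets non-empty (exactly as in the proof sketch of Theorem~\ref{Theorem:SufficiencyLocWA}, with full detail in \cite{crash+async_report}). The observation that $A$ and $B$ are non-empty uses $U[p]-\mu[p]>0$. By construction, the states of all fault-free nodes inside whichever side I end up designating as $R$ will be confined to an interval of length $\le (U[p]-\mu[p])/2$.

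Next I would invoke the $k$-hop analog of Lemma~\ref{lemma:must_absorb}. Because $G$ satisfies Condition $k$-CCA, applied to the partition $(A,\emptyset,B)$, one of $A\rightarrow_k B$ or $B\rightarrow_k A$ holds, and iterating the propagation step of Definition~\ref{def:absorb_sequence} with $in(\cdot\rightarrow_k\cdot)$ in place of $in(\cdot\rightarrow\cdot)$ yields a propagating sequence of some length $l$. Without loss of generality assume $A$ propagates to $B$ in $l$ steps; set $R:=A$, $L:=B$, and $l(p):=l$. The bound $1\le l(p)\le n-f-1$ is inherited from Definition~\ref{def:absorb_sequence}, since at every propagation step the source side $A_\tau$ grows strictly and must have size at least $f+1$ to propagate further.

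Finally I would apply the $k$-hop analog of Lemma~\ref{lemma:bounded_value} with this $R,L,l(p)$, obtaining
\[
U[p+l(p)]-\mu[p+l(p)] \;\le\; \Bigl(1-\tfrac{\alpha_k^{l(p)}}{2}\Bigr)\bigl(U[p]-\mu[p]\bigr),
\]
where the factor $\alpha_k = \min_{i\in\sv} 1/|N_i^-(k)|$ replaces the one-hop $\alpha$ because in $k$-LocWA each averaging step at node $i$ is a convex combination whose weights are at least $1/|N_i^-(k)|$. The main obstacle, and the real content of the lemma, is verifying this $k$-hop contraction step: one must show that every node $i$ newly absorbed at propagation step $\tau$ (i.e., $i\in in(A_\tau\rightarrow_k B_\tau)$) eventually satisfies Condition $k$-WAIT after receiving values along at least $f+1$ node-disjoint length-$\le k$ paths from $A_\tau$ before it updates. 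This is exactly what the definition of $in(A\rightarrow_k B)$ guarantees combined with the specification of Condition $k$-WAIT via $reach_i^k(F_i)\subseteq heard_i[p]$: the $\le f$ unheard nodes can be absorbed into $F_i$, and the remaining $f+1$ disjoint paths deliver $A_\tau$-values to $i$, forcing $v_i$ to move toward the $R$-interval by a factor of at least $\alpha_k$. Iterating this over the $l(p)$ propagation steps (each taking at most $k$ communication rounds in real time but exactly one asynchronous phase in our counting) then yields the claimed $\alpha_k^{l(p)}/2$ contraction, completing the proof.
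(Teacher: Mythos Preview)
Your proposal is correct and follows essentially the same route as the paper: the paper states that the proof of this lemma ``is given in the proof of Theorem~\ref{Theorem:SufficiencyLocWA},'' and that proof proceeds exactly by partitioning $\sv$ into the lower-half and upper-half sets $A,B$, invoking Lemma~\ref{lemma:must_absorb} to get propagation in some number $l(p)\le n-f-1$ of steps, and then applying Lemma~\ref{lemma:bounded_value} (with $\alpha$ replaced by $\alpha_k$ in the $k$-hop version) to the resulting $R,L$. Your reading of the hypothesis as $U[p]-\mu[p]>0$ rather than $=0$ is also correct; the paper's own proof of Theorem~\ref{Theorem:SufficiencyLocWA} treats only the case $U[p]-\mu[p]>0$, so the ``$=0$'' in the lemma statement is a typo.
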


The proof of the Lemma is given in the proof of Theorem~\ref{Theorem:SufficiencyLocWA} and is based on the generalization of Lemma~\ref{lemma:bounded_value} to the $k$-hop case, which is obtained by replacing $\alpha$ with $\alpha_k=\displaystyle\min_{i\in \sv} \frac{1}{|N_i^-(k)|}$). 
Next we present the upper bound on the  convergence time of $k$-LocWA.
The Theorem can be proved by repeatedly applying Lemma~\ref{lemma:generalization} until the value $U[p]-\mu[p]$ is less than $\epsilon$.
The full proof is in \cite{crash+async_report}.

\begin{theorem}[Convergence-time complexity]\label{theorem:econvergence}
The number of phases required by Algorithm $k$-LocWA to  $\epsilon$-converge is  $\displaystyle O \left( \frac{(n-f)\log \epsilon/ \delta}{\log \left(1-\frac{\alpha_k^{n-f-1}}{2}\right)} \right)$.

\end{theorem}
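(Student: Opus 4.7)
The plan is to apply Lemma~\ref{lemma:generalization} iteratively, tracking how $U[p]-\mu[p]$ shrinks geometrically, and then solve the resulting logarithmic inequality. Let $r = 1-\frac{\alpha_k^{n-f-1}}{2}$ denote the target contraction factor. Since $0<\alpha_k<1$ and $1\le l(p)\le n-f-1$, we have $\alpha_k^{l(p)}\ge \alpha_k^{n-f-1}$, so every factor $1-\frac{\alpha_k^{l(p)}}{2}$ appearing in Lemma~\ref{lemma:generalization} is upper bounded by $r<1$. Thus each application of the lemma costs at most $n-f-1$ phases and shrinks the spread $U[\cdot]-\mu[\cdot]$ by a factor of at most $r$.

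Starting from phase $0$ with spread $\delta = U[0]-\mu[0]$, I would define inductively a sequence of checkpoint phases $p_0=0,p_1,p_2,\dots$ where $p_{t+1}=p_t+l(p_t)$, terminating the recursion as soon as $U[p_t]-\mu[p_t]=0$ (in which case $\epsilon$-convergence is already achieved). While the spread remains positive, iterating the lemma yields $U[p_t]-\mu[p_t]\le \delta\cdot r^{\,t}$. To guarantee $\epsilon$-convergence it therefore suffices to choose $t$ so that $\delta\cdot r^{\,t}\le \epsilon$, i.e.\ $t\ge \log(\epsilon/\delta)/\log r$. Since both $\log(\epsilon/\delta)$ and $\log r$ are negative (assuming $\epsilon<\delta$, else the bound is trivial), the ratio is a positive number $t^* = \lceil \log(\epsilon/\delta)/\log r\rceil$.

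The total number of phases used by $k$-LocWA up to the checkpoint $p_{t^*}$ is at most $(n-f-1)\cdot t^*$, because each interval $[p_t,p_{t+1}]$ spans at most $n-f-1$ phases. Substituting back the value of $r$ yields the claimed bound
\[
p_\epsilon \;\le\; (n-f-1)\cdot \left\lceil\frac{\log(\epsilon/\delta)}{\log\!\left(1-\frac{\alpha_k^{n-f-1}}{2}\right)}\right\rceil \;=\; O\!\left(\frac{(n-f)\log(\epsilon/\delta)}{\log\!\left(1-\frac{\alpha_k^{n-f-1}}{2}\right)}\right).
\]

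I expect the calculation itself to be routine; the only subtle points are (i) correctly handling the degenerate case where the spread reaches $0$ before $\epsilon$ (immediate termination), (ii) justifying the inequality flip when dividing by the negative quantity $\log r$, and (iii) noting that Lemma~\ref{lemma:generalization} gives an \emph{existential} $l(p)$ rather than a fixed step length, which is why I collapse all per-step factors to the uniform worst case $r$ and multiply by the worst-case gap $n-f-1$ between consecutive checkpoints. No adaptation of the underlying propagation argument is needed, since Lemma~\ref{lemma:generalization} has already packaged the combinatorial content.
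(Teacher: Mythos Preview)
Your proposal is correct and follows essentially the same route as the paper: define the checkpoint sequence $p_0=0,\ p_{t+1}=p_t+l(p_t)$, bound each contraction factor $1-\alpha_k^{l(p_t)}/2$ by the uniform worst case $r=1-\alpha_k^{n-f-1}/2$, solve the geometric inequality for $t$, and multiply by the maximal gap $n-f-1$ between checkpoints. The only minor slip is writing $0<\alpha_k<1$ rather than $0<\alpha_k\le 1$, but the key inequality $\alpha_k^{l(p)}\ge \alpha_k^{n-f-1}$ still holds (with equality) when $\alpha_k=1$, so nothing breaks.
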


\begin{proof}

The idea is to repeatedly apply Lemma~\ref{lemma:generalization} until the value $U[p]-\mu[p]$ is less than $\epsilon$.

	Observe that $\alpha_k>0$, else Condition $k$-CCA is violated. Also, $n-f-1 \geq l(p)\geq 1$ and $0<\alpha_k\leq 1$; hence, $0\leq \left( 1-\frac{\alpha_k^{l(p)}}{2}\right)<1$. We will denote $U[0]-\mu[0]$ by $\delta$ for succinctness.  
Assume wlog that $\delta>0$, and define the following sequence of phase indices:
	\begin{itemize}[noitemsep,topsep=0pt]
		\item $\tau_0 = 0$,
		\item for $i>0$, $\tau_i = \tau_{i-1} + l(\tau_{i-1})$, where $l(p)$ for any given $p$ is defined by Lemma~\ref{lemma:generalization}.
	\end{itemize}
	
\noindent By repeated application of Lemma~\ref{lemma:generalization}, we have that for $i\geq 0$,
$$
	U[\tau_i]-\mu[\tau_i] \leq \left( \prod_{j=1}^i\left( 1-\frac{\alpha_k^{\tau_j-\tau_{j-1}}}{2}\right)\right)\delta
$$

\noindent so, $\epsilon$-convergence will be achieved in phase $\tau_i$, where $\prod_{j=1}^i\left( 1-\frac{\alpha_k^{\tau_j-\tau_{j-1}}}{2}\right)\delta\le\epsilon$. Since $\tau_j-\tau_{j-1}=l(\tau_j-1) \le n-f-1$ for every $j$,  we have that,

\begin{align*} 
\prod_{j=1}^i\left( 1-\frac{\alpha_k^{\tau_j-\tau_{j-1}}}{2}\right)\delta\le\epsilon \Rightarrow
\left( 1-\frac{\alpha_k^{n-f-1}}{2}\right)^i\delta\le\epsilon 
\Rightarrow 
&i\ge \log_{\left(1-\frac{\alpha_k^{n-f-1}}{2}\right) } ~\frac{\epsilon}{\delta} \Rightarrow \\ 
 \Rightarrow & i \ge \frac{\log \epsilon/ \delta}{\log \left(1-\frac{\alpha_k^{n-f-1}}{2}\right) }
\end{align*}

By the definition of the sequence $\tau_i$ and the bound of all $l(p)$ we have that $\tau_i\leq i (n-f-1)$. Thus, the algorithm will $\epsilon$-converge by phase $\displaystyle \frac{\log \epsilon/ \delta}{\log \left(1-\frac{\alpha_k^{n-f-1}}{2}\right)}(n-f-1)$ the latest. 
\end{proof}

\noindent \textbf{Comparison of Algorithms $k$-LocWA Convergence}~~ Observe that the above bound decreases\red{, as the} maximum number of $k$-hop incoming neighbors increases, since   $\alpha_k=\displaystyle\min_{i\in \sv} \frac{1}{|N_i^-(k)|}$. Since the  maximum number of $k$-hop incoming neighbors increases with $k$ we have that for $k'\ge k$, Algorithm $k'$-LocWA $\epsilon$-converges faster than $k$-LocWA by a factor implied by the bound. 

Moreover, given the upper bound on phases for $\epsilon$-convergence of Theorem~\ref{theorem:econvergence} we can easily derive an upper bound on the message complexity of $k$-LocWA. Namely, 

\begin{theorem}[Message Complexity]
The number of messages exchanged in an execution of Algorithm $k$-LocWA until $\epsilon$-convergence is $O\left(\displaystyle \frac{(n-f)\log \epsilon/ \delta}{\log \left(1-\frac{\alpha_k^{n-f-1}}{2}\right)}kn^2\right)$
\end{theorem}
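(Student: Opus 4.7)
The plan is to combine the upper bound on the number of phases from Theorem~\ref{theorem:econvergence} with an elementary per-phase bound of $O(kn^2)$ on the total number of link-level messages generated by Algorithm $k$-LocWA. By Theorem~\ref{theorem:econvergence}, $\epsilon$-convergence is achieved within $p_\epsilon = O\!\left(\frac{(n-f)\log(\epsilon/\delta)}{\log(1-\alpha_k^{n-f-1}/2)}\right)$ asynchronous phases, which gives the first factor in the claimed bound and reduces the problem to bounding the per-phase traffic.

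The main step is therefore to argue that Algorithm $k$-LocWA exchanges at most $O(kn^2)$ messages in any single phase $p$. In phase $p$, each node $i\in\sv$ emits exactly one state tuple $(v_i[p-1],i,p)$ directed toward each of its $k$-hop outgoing neighbors, and since $|N_i^+(k)|\le n-1$, node $i$ targets at most $n-1$ distinct destinations. Using the hop-counter-tagged local flooding mechanism indicated in the footnote of Algorithm $k$-LocWA (each intermediate node forwards a given source-tagged message at most once), the message to each destination is delivered along a path of length at most $k$, so each end-to-end delivery contributes at most $k$ link-level transmissions. Hence node $i$ injects at most $k(n-1)$ messages into the network in phase $p$, and summing over all $n$ senders yields $O(kn^2)$ messages per phase.

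Combining the two bounds, the total number of messages exchanged by $k$-LocWA until $\epsilon$-convergence is at most $O(kn^2)\cdot p_\epsilon = O\!\left(\frac{(n-f)\log(\epsilon/\delta)}{\log(1-\alpha_k^{n-f-1}/2)}\,kn^2\right)$, exactly as claimed. I do not expect any serious obstacle; the only mildly subtle point is the message accounting in the middle step, where the factor $k$ arises from the length of the relay paths, one factor of $n$ from the number of $k$-hop outgoing destinations per source, and the second factor of $n$ from the number of sources active in each phase. As long as the $k$-hop local flooding is implemented so that each intermediate node forwards any given $(v,i,p)$ tuple at most once, the per-source per-phase bound of $O(kn)$ used above is immediate, and the rest of the argument is a single multiplication.
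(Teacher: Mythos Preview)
Your proposal is correct and follows the same overall structure as the paper: multiply the phase bound of Theorem~\ref{theorem:econvergence} by an $O(kn^2)$ per-phase message bound. The only difference is in the per-phase accounting: the paper derives $kn^2$ by observing that a phase consists of at most $k$ communication steps, and in each step every node sends to all its (one-hop) neighbors, giving at most $n^2$ transmissions per step; you instead count $n$ sources $\times$ $(n-1)$ destinations $\times$ path length $\le k$. Both arrive at the same bound, though your path-based accounting tacitly assumes a source-routed implementation rather than the hop-counter flooding mentioned in the footnote (in true flooding a message does not travel along a single path, so the ``$k$ transmissions per end-to-end delivery'' step is not literally correct, even if the aggregate bound still comes out the same).
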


\begin{proof}
This holds because each phase of Algorithm $k$-LocWA may require $k$ communication steps for $k$-length paths to propagate values to a receiver. In the worst case, each node sends to all of its neighbors in every communication step.
\end{proof}

\section{Topology Discovery and Unlimited Relay Depth}
\label{s_topology}

In this section, we consider the case with one-hop topology knowledge and relay depth $n$.
In other words, nodes initially only know their immediate incoming and outgoing neighbors, but nodes can flood the network and learn the topology.
The study of this case is motivated by the observation that full topology knowledge at each node (e.g., \cite{Tseng_podc2015,impossible_proof_lynch,dolev_82_BG}) requires a much higher deployment and configuration cost. 
We show that Condition CCA from \cite{Tseng_podc2015} is necessary and sufficient for solving approximate consensus with one-hop neighborhood knowledge and relay depth $n$ in asynchronous directed networks. 
Compared to the iterative $k$-hop algorithms in Section \ref{s_topo}, the algorithms in this section are \textit{not} restricted in the sense that nodes can propagate any messages to all the reachable nodes. 

The necessity of Condition CCA is implied by our prior work \cite{Tseng_podc2015}.
The algorithms presented below are again inspired by Algorithm WA from \cite{Tseng_podc2015}. The main contribution is to show how each node can learn ``enough'' topology information to solve approximate consensus -- this technique may be of interests in other contexts as well. 
In the discussion below, 
we present an algorithm that works in any directed graph that satisfies Condition CCA.
\smallskip 

\noindent \textbf{Algorithm LWA}~
The idea of Algorithm LWA (Learn-Wait-Average) is to piggyback the information of incoming neighbors when propagating state values. Then, each node $i$ will locally construct an \textit{estimated} graph $G^i[p]$ in every phase $p$, and check whether \red{Condition $n$-WAIT} holds in $G^i[p]$ or not. Note that $G^i[p]$  may not equal to $G$, as node $i$ may not receive messages from some other nodes due to asynchrony or failures. 
We say \red{Condition $n$-WAIT} holds in the local estimated graph $G^i[p](\sv^i[p], \se^i[p])$ if {\bf there exists} a set $F_i[p] \subseteq \sv^i[p] -\{i\}$, where $|F_i[p]| \leq f$, such that $reach'_i(F_i[p]) \subseteq heard_i[p]$. Here, $reach'_i(F_i)$ is the set of nodes that have paths to node $i$ in the subgraph induced by the nodes in $\sv^i[p]-F_i[p]$ for $F_i[p]\subseteq \sv^i[p] -\{i\}$ and $|F_i[p]|\leq f$.

Recall that $N_i^-$ denotes the set of $i$'s one-hop incoming neighbors. Given a set of nodes $N$ and node $i$, we also use the notation $G_{N \Rightarrow i}$ to describe a directed graph consisting of nodes $N \cup \{i\}$ and set of directed edges from each node in $N$ to $i$. Formally, $G_{N \Rightarrow i} = (N \cup \{i\}, E')$, where $E' = \{(j, i)~|~j \in N\}$.

~

\hrule

\vspace*{2pt}

\noindent {\bf Algorithm LWA} for node $i\in \sv$

\vspace*{4pt}

\hrule

\vspace*{4pt}

	$v_i[0]:=$ input at node $i$
	
	$G^i[0] := G_{N_i^- \Rightarrow i}$
	
	For phase $p \geq 1$: 
	
		~~* On entering phase $p$:
		
		~~~~~~~$R_i[p] := \{v_i[p-1]\}$
		
		~~~~~~~$heard_i[p] := \{i\}$
		
		~~~~~~~Send message $( v_i[p-1], N_i^-, i, p)$ to all the outgoing neighbors
		
		\vspace*{4pt}
		
		~~* When message $(h, N, j, p)$ is received for the \textit{first time}:
		
		~~~~~~~$R_i[p] := R_i[p] \cup \{h\}$\vspace{1mm}  \hspace*{0.5in} // $R_i[p]$ is a multiset
		
		~~~~~~~$heard_i[p] := heard_i[p] \cup \{j\}$\vspace{1mm}            
		
		~~~~~~~$G^i[p] := G^i[p] \cup G_{N \Rightarrow j}$~\footnote{$G_1(\sv_1, \se_1) \cup G_2(\sv_2, \se_2) \equiv G_3(\sv_3, \se_3)$, where $\sv_3 = \sv_1 \cup \sv_2$ and $\se_3 = \se_1 \cup \se_2$. Note that this is \textit{not} a multiset, there is only one copy of each node or edge.}~\vspace{1mm}       		
		
		~~~~~~~Send message $(h, N, j, p)$ to all the outgoing neighbors\vspace{1mm}

		\vspace*{4pt}
		
		~~* When Condition {\em $n$-WAIT} holds on $G^i[p]$ for the first time in phase $p$:
		
		~~~~~~~$v_i[p] := \frac{\sum_{v \in R_i[p]} v}{|R_i[p]|}$
		
		~~~~~~~$G^i[p+1] := G_{N_i^- \Rightarrow i}$  \hspace*{0.5in} // ``Reset'' the learned graph
		
		~~~~~~~Enter phase $p+1$

\hrule

\vspace*{4pt}

~

\noindent\textbf{Correctness of Algorithm LWA}~
The key lemma to prove the correctness of Algorithm WA in \cite{Tseng_podc2015} is to show that for any pair of nodes that have not crashed in phase $p$, they must receive a state value from at least one common node. In Appendix~\ref{app_topo_dir},
we show that Algorithm LWA achieves the same property. Intuitively, if \red{Condition $n$-WAIT} does not hold in the local estimated graph $G^i[p]$, then node $i$ knows it can learn more states in phase $p$. Also, when \red{Condition $n$-WAIT} is satisfied in $G^i[p]$, there exists a scenario that node $i$ cannot receive any more information; hence, it should not wait for any more message. This is why the Algorithm LWA allows each node to learn enough state values to achieve approximate consensus. We rely on this observation to prove the correctness in \cite{crash+async_report}.





\medskip
\noindent\textbf{Undirected Graphs}~
Algorithm LWA works on undirected graphs as well; however, the \red{message size is large}, since each message needs to include the information about one's neighborhood. 
%
In Appendix~\ref{app_topo_undir}, we present an algorithm in which each node learns the topology in the first phase, and then executes an approximate consensus algorithm using the learned topology. The reasons that this trick works in undirected graphs are: (i) Condition CCA is equivalent to $(f+1)$ connectivity and $n > 2f$ in undirected graph; and (ii) for each node, there is \red{at least} one fault-free neighbor; hence, each node is able to learn the existence of every other node. 

\comment{++++++++++
\noindent{\bf Algorithm LBC}~~The algorithm, presented below, assumes that each node has the knowledge of the network size $n$ and its one-hop neighbors, and the algorithm proceeds in asynchronous phases. The algorithm has two phases: \textit{Learn Phase} and \textit{Consensus Phase}. In the Learn Phase, each node will construct its local knowledge about the whole graph $G^i$, whereas in the Consensus Phase, each node uses the estimated graph $G^i$ and its initial input to solve consensus using existing asynchronous consensus algorithms. 

Given a subgraph $G' \subset G$, we will say node $i$ sends a message $(G', L)$, where the first element contains $G'$, and the second element is the tag denoting the \textit{Learn Phase}.

~

\hrule

\vspace*{2pt}

\noindent {\bf Algorithm LBC} for node $i \in \sv$

\vspace*{4pt}

\hrule

\vspace*{4pt}

\textit{Learn Phase}: 

Initially, $G^i := G_{N_i \rightarrow \{i\}}$ \hspace*{0.5in} // subgraph of one-hop neighbors

Send message $(G^i, L)$ to all the outgoing neighbors

While $G^i$ has strictly less than $n$ nodes: 

~~~~Upon receiving $(G', L)$:

~~~~~~~~$G^i := G^i \cup G'$

~~~~~~~~Send message $(G^i, L)$ to all the outgoing neighbors\vspace{1mm}

\vspace*{3pt}

\textit{Consensus Phase}: 

Solve consensus using existing algorithms based on $G^i$ and $v_i$, the initial input.

\hrule

\vspace*{4pt}
++++++++++++}
	
\section{Discussion}
\label{s_diss}

In this section, we discuss interesting implications of the conditions derived in this paper. 

\subsection{Fault-tolerance}
\label{ss_fault}

In undirected graphs, $(f+1)$-connectivity and $n > 2f$ are both necessary and sufficient for solving approximate consensus in asynchronous networks with up to $f$ crash faults (implied by \cite{impossible_proof_lynch,dolev_82_BG}). It is easy to show that Condition CCA for tolerating $f$ faults is equivalent to these two conditions in undirected networks. However, this equivalence does \textit{not} hold for general $k$. For example, the network in Figure~\ref{fig:introexample1} has connectivity $2$ and four nodes, but does not satisfy Condition $1$-CCA with $f = 1$ (when $L = \{a, b\}, R = \{c, d\}, C = \emptyset$).

More interestingly, increasing the topology knowledge and relay depth by a small amount may increase the fault-tolerance tremendously. Consider the network in Figure~\ref{fig:introexample2}. Condition $1$-CCA does not hold for $f \geq 1$ (when $L=$ left clique, $R=$ right clique, and $C = \emptyset$). On the other hand, Condition $2$-CCA holds for $f \leq n/2-1$. Intuitively, this holds because each pair of nodes are at most two hops away.

\comment{++++++++++++++++
Then, we present the relation between the conditions presented in the current work and our prior work \cite{Tseng_podc2015}. Their relation can be summarized in the following theorem.

\begin{theorem}
	\label{theorem:conditions}
	For $k,d\in \mathbb{N}$, the following relation hold:
	\[
	d\text{-}CCA\Rightarrow CCA\equiv n\text{-}CCA\Rightarrow CCS \Rightarrow CkCS
	\]
\end{theorem}

\begin{proof}
	
	`` $k\text{-}CCA\Rightarrow CCA$":	
	Observe that for any partition $A,B$ of $\calv$, $A\rightarrow_d B$ means that $\exists i\in B$ such that $|N_i^-(k) \cap N_B^-|\ge f+1$ which implies that $|N_B^-|\ge f+1$ and in turn means that $\point{A}{B}{f+1}$. Subsequently, for any partition  $L,C,R$ of $\calv$ with $L,R\neq \emptyset$, Condition $k$-CCA implies that either  $\point{L\cup C}{R}{f+1}$ or  $\point{R\cup C}{L}{f+1}$ holds and thus Condition CCA holds.
	
	`` $CCA\Rightarrow CCS$":
	Since Condition CCA, CCS are tight conditions for the feasibility of asynchronous and synchronous consensus respectively, the claim is trivially implied from the fact that asynchronous consensus feasibility implies feasibility for the synchronous case.
	
	The equivalence of Condition CCA and $n$-CCA is proved in Lemma~\ref{lemma:ccak_to_cca}.

	
\end{proof}

\noindent{\bf Non-Equivalence of the conditions}~~We next show that \underline{no two of the conditions} C$k$CS, CCS, CCA, $d$-CCA are equivalent. We do this by providing the examples in Figures~\ref{fig:a}, \ref{fig:b}, \ref{fig:c} for the case of $f=1$. Namely, in the graph $G$ of Figure~\ref{fig:a}, CCS does not hold, because if we remove node 1, then there is no directed rooted tree. On the other hand it is easy to see that C$k$CS for $k=2$ holds in $G$. Figure~\ref{fig:b} depicts a graph $G$ where CCA does not hold for sets $L=\{1,3,4\}, R=\{2\}$ and $C=\emptyset$, while CCS holds. Finally, in  Figure~\ref{fig:b}, Condition $1$-CCA does not hold for sets $L=\{1,2\}, R=\{3,4\}$ and $C=\emptyset$, while CCA holds.

\subsection{Undirected Graphs}

Condition CCA is equivalent to $(f+1)$-connectivity, and Condition $1$-CCA is not equivalent to $(f+1)$-connectivity nor $(f+1)$-degree.
+++++++}

\subsection{Real Time Speed Up of Algorithm k-LocWA}

In asynchronous systems, the real time communication delay is arbitrary but finite. In a formal framework, it is common to assume that execution proceeds in rounds representing real time intervals, but the nodes  do not have knowledge of the round index.  To model the worst-case real time delay in the execution of a system we can use the notion of \emph{delay scenario} which is a description of the delays, incurring on the communication through all edges of the network. 
The delivery delay of a message sent over a channel $e$ will be described by the number of rounds (amount of real time) that are needed for the delivery to be completed.  

We first compare the real time performance of Algorithms $k$-LocWA for different values of $k$ with respect to the real time delay. Specifically we show that there is a case where Algorithm LocWA terminates each phase in one round (one interval of real time), while it may take arbitrary number of rounds for Algorithm $2$-LocWA to terminate phase 1. To formalize the comparison we will use the notion of $\epsilon$-convergence time of Algorithms $k$-LocWA.
\looseness=-1

\begin{example}
	\label{example1}
	
	Consider the graph of Figure~\ref{fig:timing1}, which is a ring network plus a \textit{directed} edge $(C, B)$.
	For $f=1$,  it is easy to verify that Condition 1-CCA holds, which implies that Conditions $i$-CCA, for $i\in \{1, \ldots, n\}$ hold. Assume that the delivery of messages through \textit{directed} edges $(A,C), (C,A), (B,D), (D,B)$ is delayed by $d$ rounds while the communication in all the other edges is instant (1 round). For ease of presentation assume that no node crashes. Then, in an execution of Algorithm LocWA, it is clear that every node $i$ will finish phase $t$ in time $t$ because in each phase,  it  will receive a message from all of his neighbors $N_i^-$ except one, in one round and thus,  Condition 1-WAIT  will be satisfied.\looseness=-1
	
	On the other hand, in an execution of Algorithm $2$-LocWA, node $D$ will only receive a message from $C$ in one round, since $(C, B)$ is a directed edge, and delay on edges $(A, C)$ and $(B, D)$ is $d$. in this case, $D$ will not be able to decide before round $d$, the first round where Condition 2-WAIT will be satisfied. Specifically, for the first phase it will hold that $reach_D^2 \subseteq heard_D[1]$ only after round $d$
	 since, if $D$ considers $F_i=\{B\}$ as a possible corruption set, it has to wait for a message from $A$ which will be propagated by $C$ and setting $F_i=\{B\}$, it has to wait for a message from $B$. Consequently the first time that node $D$ can decide is round $d$ where  it will receive the rest of the values. For similar reasons, the same holds for nodes $A, C$.  Since $d$ may be an arbitrary integer, there is a delay scenario where the $\epsilon$-convergence time for Algorithm 2-LocWA, is arbitrarily larger than  the  $\epsilon$-convergence time of Algorithm LocWA.

	\begin{figure}[tbp]
		\centering
		\begin{subfigure}[b]{0.4\textwidth}
			\includegraphics[width=0.7\textwidth]{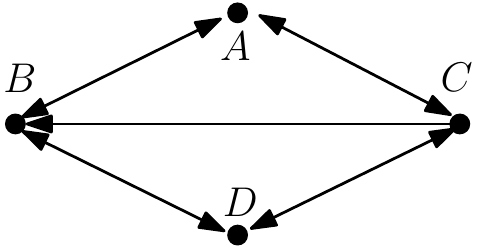}
			\caption{Graph $G$, $i$-CCA holds for any $i\in\{1, \ldots, 4\}$ and $f=1$.}
			\label{fig:timing1}
		\end{subfigure}\hspace{50pt}
		~ 
		\begin{subfigure}[b]{0.4\textwidth}
			\includegraphics[width=0.7\textwidth]{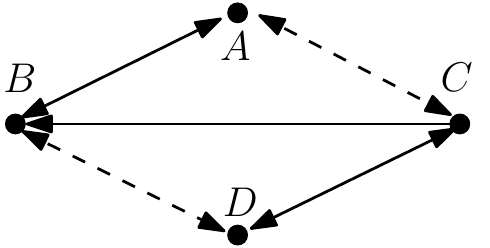}
			\caption{Arbitrary delay in directed edges $(A,C), (C,A), (B,D), (D,B)$}
			\label{fig:timing2}
		\end{subfigure}
		\caption{Real time delay example}\label{fig:timing}
	\end{figure}

\end{example}
%
%

\smallskip
\noindent\textbf{Strong version of $\mathbf{k}$-LocWA with respect to real time}~
In Example~\ref{example1}, observe that in the 2-hop knowledge case (execution of 2-LocWA), a node has all the information that it would have in the 1-hop knowledge case. Therefore, it can utilize the information to update its state value in a manner that 1-LocWA does, in order to guarantee faster convergence time. As a result, the modified algorithm would always be as fast, in terms of real time as 1-LocWA. Next, we modify the update condition of Algorithm $k$-LocWA to capture this strengthened version with respect to real time.\smallskip 

\noindent \textbf{Update Condition of Strong $k$-LocWA}~ In the strong version of Algorithm $k$-LocWA, a node updates its value the first time that at least one of conditions $i$-WAIT, for $i\in \{1,\ldots, k\}$ holds. Specifically we replace the update condition of Algorithm $k$-LocWA with:
\begin{itemize}[noitemsep,topsep=0pt]	
	\item Update value when   {\em $\displaystyle\bigvee_{i=1}^k (i$-WAIT$)=true$} for the first time in phase $p$:

\end{itemize}

Considering this strong version of the algorithm family $k$-LocWA, we can show that for $k'\ge k$ and any $\epsilon$, Algorithm  $k'$-LocWA will $\epsilon$-converge faster than Algorithm $k$-LocWA. That is, for every delay scenario, the number of rounds in which $k$-LocWA $\epsilon$-converges is larger than the number of rounds in which  $k'$-LocWA $\epsilon$-converges. The proof is trivial, since the strengthened algorithm $k'$-LocWA will check all the update conditions for smaller values of $k$, and the messages communicated in $k'$-LocWA are a superset of the messages communicated in $k$-LocWA. Also observe that if $k$-LocWA $\epsilon$-converges then so does $k'$-LocWA.  Thus we have the following Corollary. 

\begin{corollary}
	For $k'\ge k$, if Strong $k$-LocWA $\epsilon$-converges in $r$ rounds then Strong $k'$-LocWA $\epsilon$-converges in $r'$ rounds with $r'\le r$.  
\end{corollary}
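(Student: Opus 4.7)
The plan is to prove the corollary by coupling the two executions of Strong $k$-LocWA and Strong $k'$-LocWA on the same network $G$, with identical initial inputs, identical crash pattern, and the same delay scenario, and then arguing by induction on the phase index $p$ that every fault-free node completes phase $p$ in Strong $k'$-LocWA at a round no later than it does in Strong $k$-LocWA.

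First I would fix the coupling precisely. Since every path of length at most $k$ in $G$ is also a path of length at most $k'$ (as $k'\ge k$), every message that Strong $k$-LocWA delivers along such a path is also delivered in Strong $k'$-LocWA along the same route at the same real time, plus possibly extra messages routed along longer paths. Consequently, for every phase $p$ and every real-time round $t$ I obtain the multiset inclusion $R_i^{(k)}[p]\subseteq R_i^{(k')}[p]$ and the set inclusion $heard_i^{(k)}[p]\subseteq heard_i^{(k')}[p]$ at time $t$.

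Next I would exploit two monotonicity observations. First, for each fixed $j$ the $j$-WAIT predicate is monotone in $heard_i[p]$: enlarging $heard_i[p]$ can only preserve or newly create a witness set $F_i$ of size at most $f$ with $reach_i^j(F_i)\subseteq heard_i[p]$. Second, the trigger $\bigvee_{j=1}^{k'}(j$-WAIT$)$ of Strong $k'$-LocWA contains every disjunct of the trigger $\bigvee_{j=1}^{k}(j$-WAIT$)$ of Strong $k$-LocWA. Combining these facts with the inclusions above, whenever Strong $k$-LocWA's trigger fires at node $i$ in phase $p$ at round $t$, Strong $k'$-LocWA's trigger has already fired at some round $t'\le t$. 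Proceeding by induction on $p$, if all fault-free nodes enter phase $p$ in Strong $k'$-LocWA by the round they enter it in Strong $k$-LocWA, then they also complete phase $p$ by the corresponding round, which advances the induction to phase $p+1$.

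The main obstacle is that the state values computed at the end of each phase generally differ across the two executions, so the bound $U[p]-\mu[p]\le\epsilon$ achieved by Strong $k$-LocWA does not mechanically carry over to Strong $k'$-LocWA. I would resolve this by reusing the contraction argument underlying Lemma~\ref{lemma:generalization}: that analysis depends only on (i) the graph satisfying Condition $k$-CCA, which implies Condition $k'$-CCA by the monotonicity theorem shown earlier, and (ii) a per-phase contraction factor determined by the local averaging step, which can only improve when more values drawn from the same confining interval enter the average. Hence Strong $k'$-LocWA inherits a per-phase contraction no weaker than that of Strong $k$-LocWA, and together with the per-round phase domination established above this yields $\epsilon$-convergence of Strong $k'$-LocWA by some round $r'\le r$.
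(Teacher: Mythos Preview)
Your coupling argument and the two monotonicity observations (message superset, and the $k'$-trigger $\bigvee_{j=1}^{k'}(j\text{-WAIT})$ containing every disjunct of the $k$-trigger) are exactly the paper's proof, which is a single sentence: ``the strengthened algorithm $k'$-LocWA will check all the update conditions for smaller values of $k$, and the messages communicated in $k'$-LocWA are a superset of the messages communicated in $k$-LocWA.'' Your phase-by-phase induction simply spells this out in more detail.

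Where you go beyond the paper is in flagging the ``main obstacle'': the states $v_i[p]$ produced by the two algorithms will in general differ, so real-time domination of phase completion does not by itself yield $\epsilon$-convergence domination. The paper does not argue this point; it merely asserts ``if $k$-LocWA $\epsilon$-converges then so does $k'$-LocWA.'' Your attempted resolution, however, is not solid as written. Claim~(ii), that the per-phase contraction ``can only improve when more values drawn from the same confining interval enter the average,'' is unjustified and in fact runs against the explicit bound you invoke: in Lemma~\ref{lemma:generalization} the contraction factor is $(1-\alpha_k^{l}/2)$ with $\alpha_k=\min_i 1/|N_i^-(k)|$, and since $|N_i^-(k')|\ge |N_i^-(k)|$ one has $\alpha_{k'}\le\alpha_k$, which pushes the bound \emph{closer to~$1$}, i.e., a weaker guaranteed contraction. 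A complete argument would need either to trade this off against a possibly shorter propagating sequence for larger $k$, or to reason directly about the actual averaged values rather than worst-case bounds; neither you nor the paper carries this out.
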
	

\newpage

\appendix
\newpage
\section{Additional Discussion of Related Work}
\label{a:related}

\subsection{Consensus}

Lamport, Shostak, and Pease addressed the Byzantine consensus problem in \cite{lamport_agreement}. Subsequent work \cite{impossible_proof_lynch,dolev_82_BG} characterized the necessary and sufficient conditions under which Byzantine consensus is solvable in {\em undirected} graphs. However, these conditions are not adequate to fully characterize the {\em directed} graphs in which Byzantine consensus is feasible. 

Bansal et al. \cite{Bansal_disc11} identified tight conditions for achieving Byzantine consensus in {\em undirected} graphs using {\em authentication}. Bansal et al. discovered that all-pair reliable communication is not necessary to achieve consensus when using authentication. Our work differs from Bansal et al. in that our results apply in the absence of authentication or any other security primitives; also our results apply to {\em directed} graphs. Alchieri et al. \cite{BFT-CUP_OPODIS} explored the problem of achieving exact consensus in {\em unknown} networks with Byzantine nodes, but the underlying communication graph is assumed to be \textit{fully-connected}. In our work, each node has partial network knowledge, and we consider incomplete directed graphs.

\subsection{Iterative Approximate Consensus}

Many researchers in the decentralized control area, including Bertsekas and Tsitsiklis \cite{AA_convergence_markov} and Jadbabaei, Lin and
Morse \cite{jadbabaie_concensus}, have explored approximate consensus in the absence
of faults, using only near-neighbor communication in systems wherein the communication graph may be partially connected and time-varying. Our work considers the case when nodes may suffer crash failures.

Our prior work \cite{vaidya_PODC12,Tseng_general,lili_multihop} has considered a restricted class of iterative algorithms for achieving {\em approximate} Byzantine consensus in directed graphs, where fault-free nodes must agree on values that are approximately equal to each other using iterative algorithms with limited memory (in particular, the state carried by the nodes across iterations must be in the convex
hull of inputs of the fault-free nodes, which precludes mechanisms such as multi-hop forwarding of messages). The conditions developed in such prior work are {\em not} necessary when no such restrictions are imposed. Independently, LeBlanc et al. \cite{leblanc_HiCoNs,Sundaram_journal}, and Zhang and Sundaram \cite{Sundaram,Sundaram_ACC} have developed results for iterative
algorithms for approximate consensus under a {\em weaker} fault model, where a faulty node must send
{\em identical} messages to all the neighbors. 

\subsection{$k$-set Consensus}

$k$-set consensus also received a lot of attentions in different graph assumptions. In complete graphs, Biely et al. \cite{k-set_impossible}  presented impossibility results of $k$-set consensus in various message passing systems. Guerraoui and Pochon \cite{k_set_topology} studied early-deciding $k$-set agreement using algebraic topology techniques. Our work studies \textit{directed incomplete} graphs.
In synchronous dynamic networks, Biely et al. \cite{k-set_dynamic,k-set_dynamic_NETYS} considered $k$-set consensus with fault-free nodes. Winkler et al. \cite{dynamic_consensus_stable} solved exact consensus in synchronous dynamic networks with unreliable links. The main contribution in \cite{dynamic_consensus_stable} was to identify the shortest period of stability that makes consensus feasible. 
In unknown and dynamic systems, Jeanneau et al. \cite{k-set_FD} relied on failure detectors to solve $k$-set consensus. 
These works only studied synchronous systems, whereas we consider \textit{exact} and \textit{approximate} crash-tolerant consensus in asynchronous systems. Moreover, we do not assume the existence of failure detectors.

\subsection{Reliable Communication and Broadcast}

Several papers have also addressed communication between a single source-receiver pair.
Dolev et al. \cite{Dolev90perfectlysecure} studied the problem of secure communication, which achieves both fault-tolerance and perfect secrecy between a single source-receiver pair in undirected graphs, in the presence of node and link failures. Desmedt and Wang considered the same problem in directed graphs \cite{yvo_eurocrypt02}. Shankar et al. \cite{Shankar_SODA08} investigated reliable communication between a source-receiver pair in directed graphs allowing for an arbitrarily small error probability in the presence of a Byzantine failures. Maurer et al. explored the problem in directed dynamic graphs \cite{reliable_comm_dynamic}. In our work, we do not consider secrecy, and address the {\em consensus} problem rather than the single source-receiver pair problem. Moreover, our work addresses both deterministically correct and randomized algorithms for consensus.

There has also been work \cite{CPA_DISC, Tseng_CPA} on the problem of achieving \textit{reliable broadcast} with a fault-free source in the presence of local Byzantine faults, which proved {\em tight} condition on the underlying graphs. In this paper, we consider
{\em consensus} problem instead of reliable broadcast problem; furthermore, we allow any node to be faulty.

\section{Necessity of Condition $1$-CCA}
\label{app_1-CCA}

The necessity proof is similar to the necessity proof of Condition CCA in \cite{Tseng_podc2015}.

\begin{theorem}
	\label{Theorem:necessity1hop}
	If graph $G(\sv,\se)$ does not satisfy Condition $1$-CCA, then no iterative one-hop algorithm can achieve asynchronous approximate consensus in $G(\sv,\se)$.
\end{theorem}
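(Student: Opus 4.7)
The plan is to adapt the standard partition-and-indistinguishability impossibility argument for asynchronous consensus, using only the per-node ``local'' separations afforded by the negation of Condition $1$-CCA. First I would extract from the hypothesis a partition $L, C, R$ of $\sv$ with $L, R \neq \emptyset$ such that $L \cup C \not\rightarrow R$ and $R \cup C \not\rightarrow L$; unfolding the definition, this means that for every $i \in R$ the set $B_i^R := N_i^- \cap (L \cup C)$ has at most $f$ elements, and symmetrically $B_i^L := N_i^- \cap (R \cup C)$ has at most $f$ elements for every $i \in L$.

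Next I would fix a single asynchronous schedule $\mathcal S$ that, at every phase, indefinitely delays every message from $B_i^L$ to $i$ (for $i \in L$) and every message from $B_i^R$ to $i$ (for $i \in R$), while delivering all remaining messages quickly. This is a valid, crash-free schedule. Using $\mathcal S$ I would analyze three executions of the alleged iterative one-hop algorithm: $E_0$ with all inputs $0$; $E_2$ with all inputs $1$; and $E_1$ with inputs $0$ on $L$, inputs $1$ on $R$, and (say) $0$ on $C$. By Validity applied to the crash-free executions $E_0$ and $E_2$, every node's state is identically $0$ at every phase in $E_0$ and identically $1$ at every phase in $E_2$.

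The heart of the argument is an indistinguishability induction. Under $\mathcal S$, each $L$-node $i$ never receives any phase-$p$ value from $R \cup C$, since every potential sender in $R \cup C$ lies in $B_i^L$ and has been silenced by the delay schedule; hence $i$'s received multiset $R_i[p]$ consists entirely of values sent by $L$-nodes. Starting from identical inputs on $L$ and using the inductive hypothesis that every $L$-node holds the same state at the end of phase $p-1$ in $E_0$ and $E_1$, the in-$L$ messages exchanged in phase $p$ coincide across the two executions, so $Z_i$ returns the same $v_i[p]$ in both. Thus every $L$-node in $E_1$ carries state $0$ at every phase; the symmetric comparison of $E_1$ with $E_2$ forces every $R$-node in $E_1$ to carry state $1$. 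Consequently, in the crash-free execution $E_1$ we have $U[p] \ge 1$ and $\mu[p] \le 0$ for all $p$, so $U[p] - \mu[p] \ge 1$ indefinitely, contradicting Convergence.

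The main obstacle I anticipate is justifying that the induction ever advances, i.e., that under $\mathcal S$ each $L$-node (and $R$-node) must actually complete its phases rather than block forever waiting on the delayed senders. Here I would invoke fault-tolerance: since the algorithm must be correct whenever at most $f$ nodes crash, there is a twin execution in which the (at most $f$) nodes of $B_i^L$ crash silently at the start, in which $i$ must still make progress; since $i$'s local view in this twin is identical to its view in $E_1$, it must advance phases in $E_1$ too. Once that liveness hook is in place, the remainder is a routine induction on the phase index that propagates equality of local states, not merely of received multisets.
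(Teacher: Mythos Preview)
Your proposal is correct and follows essentially the same partition-plus-indistinguishability strategy as the paper. The paper argues more tersely with a single execution (inputs $0$ on $L$, $\epsilon$ on $R$, cross-partition links delayed ``arbitrarily long''), invoking per-node indistinguishability from the scenario where the at-most-$f$ cross-partition in-neighbors have crashed, and then appealing to validity. Your three-execution setup ($E_0$, $E_1$, $E_2$) with an explicit phase-by-phase induction and an explicit liveness hook via a twin crash execution is a more carefully spelled-out version of the same argument; in particular, your liveness discussion is exactly the justification the paper leaves implicit when it says the delayed neighbors ``appear to have crashed.'' One phrasing to tighten: ``indefinitely delays'' is not a legal asynchronous schedule, but your twin-execution argument already gives the right fix---since each node must advance in the twin where its $\le f$ cross-partition in-neighbors actually crash, the adversary can always delay those messages past the point of advancement, phase after phase, while still eventually delivering them.
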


\begin{proof}
	The proof is by contradiction. Suppose that there exists an iterative one-hop algorithm $\sa$ which achieves asynchronous approximate consensus in $G(\sv,\se)$, and $G(\sv,\se)$ does not satisfy Condition $1$-CCA. That is, there exists a node partition $L,C,R$ such that $L,R$ are non-empty,  $L \cup C \not\rightarrow R$ and $R \cup C \not\rightarrow L$.
	
	Let $O(L)$ denote the set of nodes $C\cup R$ that have outgoing links to nodes in $L$, i.e., $O(L)=\{i\mid i\in C\cup R, N_i^+\cap L\neq \emptyset\}$. Similarly define $O(R)=\{i\mid i\in C\cup L, N_i^+\cap R\neq \emptyset\}$. Since $L \cup C \not\rightarrow R$ and $R \cup C \not\rightarrow L$, we have that for every $i\in L$, $N_i^-\cap O(L)\le f$ and for every $i\in R$, $N_i^-\cap O(R)\le f$.
	
	Consider a scenario where (i) each node in $L$ has input 0; (ii) each node in $R$ has input $\epsilon$; (iii) nodes in $C$ (if non-
	empty) have arbitrary inputs in $[0,\epsilon]$; (iv) no node crashes;
	and (v) the message delay for communications channels from
	$O(L)$ to $L$ and from $O(R)$ to $R$ is arbitrarily large compared to all the other channels.
	
	
	Consider nodes in $L$. Since messages from the set $O(L)$ take arbitrarily long to arrive at the nodes in $L$, and for every $i\in L$, $N_i^-\cap O(L)\le f$, from the perspective of node $i$, its incoming neighbors in $O(L)$ appear to
	have crashed. The latter yields from the fact that algorithm $\sa$ is one-hop, i.e., the case that for every $i,j\in L$, $N_i^-\cap O(L)= N_i^-\cap O(L)\le f$ can not be excluded by the messages exchanged in $L$ and thus there is a case where all their neighbors in $O(L)$ are crashed. 
	Thus, nodes in $L$ must decide on their output  without waiting to hear from the nodes in $O(L)$. Consequently, to satisfy the validity property, the output at each  node in $L$ has to be 0, since 0 is the input of all the nodes in
	$L$. Similarly, nodes in $R$ must decide their output without  hearing from the nodes in $O(R)$; they must choose output as $\epsilon$, because the input at all the nodes in $R$ is $\epsilon$. Thus, the $\epsilon$-agreement property is violated, since the difference between  outputs at fault-free nodes is not $<\epsilon$. This is a contradiction.
\end{proof}

\section{Sufficiency of Condition $1$-CCA}
\label{app_1-CCA2}

We first prove a useful lemma.

\begin{lemma}
	\label{lemma:absorb_condition}
	Assume that $G(\scriptv,\scripte)$ satisfies Condition $1$-CCA.
	Consider a partition $A,B$ of $\scriptv$ such that
	$A$ and $B$ are non-empty.
	If $B \not\rightarrow A$, then set $A$ propagates to set $B$.
\end{lemma}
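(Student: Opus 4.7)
\begin{proofSketch}
The plan is to construct the propagating sequences $A_0, A_1, \ldots, A_l$ and $B_0, B_1, \ldots, B_l$ inductively, one step at a time, using Condition $1$-CCA at each step to certify that the next absorption $A_\tau \rightarrow B_\tau$ actually takes place. We initialize $A_0 = A$ and $B_0 = B$. Given the sequences up through index $\tau$, we either stop (if $B_\tau = \emptyset$, declaring $l = \tau$) or build step $\tau+1$ according to Definition~\ref{def:absorb_sequence} by setting $A_{\tau+1} = A_\tau \cup in(A_\tau \rightarrow B_\tau)$ and $B_{\tau+1} = B_\tau - in(A_\tau \rightarrow B_\tau)$. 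The entire content of the argument is therefore to show that, whenever $B_\tau \neq \emptyset$, we do have $A_\tau \rightarrow B_\tau$.

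The key step is to instantiate Condition $1$-CCA on a carefully chosen partition of $\sv$. Observe that, by the induction hypothesis, $A_\tau$ and $B_\tau$ form a partition of $A \cup B = \sv$, and moreover $A_\tau \supseteq A$ because the inductive rule only adds nodes to $A_\tau$; consequently $A_\tau - A \subseteq B$ and $A_\tau - A = B - B_\tau$. Consider then the partition $L = B_\tau$, $C = A_\tau - A$, $R = A$ of $\sv$. Both $L$ and $R$ are non-empty (the former by assumption, the latter because $A$ is non-empty). A direct set computation gives $L \cup C = B_\tau \cup (B - B_\tau) = B$ and $R \cup C = A \cup (B - B_\tau) = A_\tau$. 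Condition $1$-CCA applied to this partition therefore asserts that either $B \rightarrow A$ or $A_\tau \rightarrow B_\tau$. Since the hypothesis of the lemma rules out the first alternative, we conclude $A_\tau \rightarrow B_\tau$, as required.

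For termination, note that whenever $A_\tau \rightarrow B_\tau$, the set $in(A_\tau \rightarrow B_\tau)$ is non-empty by its definition, so $|B_{\tau+1}| < |B_\tau|$ and the process must reach $B_l = \emptyset$ within at most $|B|$ steps, at which point $A_l = \sv = A \cup B$. This is exactly the propagating sequence required by Definition~\ref{def:absorb_sequence}, establishing that $A$ propagates to $B$. The main (and essentially only) obstacle is the step of identifying the right partition $(L, C, R)$ so that $L \cup C$ and $R \cup C$ collapse to the two sets $B$ and $A_\tau$ on which we wish to apply Condition $1$-CCA; once this bookkeeping is done, the rest is a straightforward induction.
\end{proofSketch}
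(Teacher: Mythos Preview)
Your proof is correct and follows essentially the same approach as the paper: both build the propagating sequences inductively and, at each step, apply Condition $1$-CCA to the partition with $C = A_\tau - A$ and the remaining two blocks being $A$ and $B_\tau$, using the hypothesis $B \not\rightarrow A$ to rule out one alternative and force $A_\tau \rightarrow B_\tau$. The only cosmetic difference is that you label the partition as $(L,C,R) = (B_\tau,\, A_\tau - A,\, A)$ whereas the paper uses $(A,\, A_\tau - A,\, B_\tau)$, which is immaterial since Condition $1$-CCA is symmetric in $L$ and $R$.
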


\begin{proof}
	Since $A,B$ are non-empty, and $B\not\rightarrow A$, we have that $A\rightarrow B$ holds, by setting $C=\emptyset$ in Condition $1$-CCA. 
	
	Define $A_0=A$ and $B_0=B$.
	Now, for a suitable $l>0$, we will build propagating sequences $A_0,A_1,\cdots A_l$
	and $B_0,B_1,\cdots B_l$ inductively.
	\begin{itemize}
		\item Recall that $A=A_0$ and $B=B_0\neq \emptyset$. Since $A\rightarrow B$,
		$in(A_0\rightarrow B_0)\neq \emptyset$.
		Define $A_1=A_0\cup in(A_0\rightarrow B_0)$
		and 
		$B_1=B_0-in(A_0\rightarrow B_0)$.
		
		If $B_1=\emptyset$, then $l=1$, and we have found the propagating sequence
		already.
		
		If $B_1\neq \emptyset$, then define $L=A=A_0$, $R=B_1$ and $C=A_1-A=B-B_1$.
		Since $B\not\rightarrow A$, $R\cup C\not\rightarrow L$. Therefore,
		Condition $1$-CCA implies that $L\cup C\rightarrow R$. That is, $A_1\rightarrow B_1$.

		\item 
		For increasing values of $i\geq 0$,
		given $A_i$ and $B_i$, where $B_i\neq\emptyset$, by following steps similar to the previous
		item, we can obtain
		$A_{i+1}=A_0\cup in(A_i\rightarrow B_i)$
		and 
		$B_{i+1}=B_i-in(A_i\rightarrow B_i)$,
		such that either $B_{i+1}=\emptyset$ or $A_{i+1}\rightarrow B_{i+1}$.
	\end{itemize}
	In the above construction, $l$ is the smallest index such that
	$B_l=\emptyset$.
\end{proof}

\paragraph{Proof of Lemma \ref{lemma:must_absorb}}

\begin{proof}
	Consider two cases:
	\begin{itemize}
		\item $A\not\rightarrow B$: Then by Lemma \ref{lemma:absorb_condition} above,
		$B$ propagates to $A$, completing the proof.
		
		\item $A\rightarrow B$: In this case, consider two sub-cases:
		\begin{itemize}
			\item {\em $A$ propagates to $B$}: The proof in this case is complete.
			
			\item {\em $A$ does not propagate to $B$}:
			Recall that $A\rightarrow B$. Since $A$ does not propagate to $B$,
			propagating sequences defined in Definition~\ref{def:absorb_sequence}
			do not exist in this case. More precisely, there must exist $k>0$,
			and sets $A_0,A_1,\cdots,A_k$ and $B_0,B_1,\cdots,B_k$,
			such that:
			\begin{itemize}
				\item $A_0=A$ and $B_0=B$, and
				\item for $0\leq i\leq k-1$,
				\begin{list}{}{}
					\item[o] $A_i\rightarrow B_i$,
					\item[o] $A_{i+1} = A_i\cup in(A_i\rightarrow B_i)$, and
					\item[o] $B_{i+1} = B_i - in(A_i\rightarrow B_i)$.
				\end{list}
				\item $B_{k}\neq \emptyset$ \underline{and} $A_{k}\not\rightarrow B_{k}$.
			\end{itemize}
			The last condition above violates the requirements for $A$ to propagate
			to $B$.
			
			Now, $A_{k}\neq \emptyset$, $B_k\neq \emptyset$, and $A_k,B_k$ form
			a partition of $\scriptv$. Since $A_{k}\not\rightarrow B_{k}$,
			by Lemma \ref{lemma:absorb_condition} above,
			$B_k$ propagates to $A_k$.

			Given that $B_k\subseteq B_0 = B$, $A=A_0\subseteq A_k$, and $B_k$ propagates
			to $A_k$, now we prove that $B$ propagates to $A$. 
			
			Recall that $A_i$ and $B_i$ form a partition of $\scriptv$.
			
			Let us define $P=P_0=B_k$ and $Q=Q_0=A_k$. Thus, $P$ propagates to $Q$.
			Suppose that $P_0,P_1,...P_m$ and $Q_0,Q_1,\cdots,Q_m$ are
			the propagating sequences in this case, with $P_i$ and $Q_i$ forming
			a partition of $P\cup Q = A_k\cup B_k=\scriptv$. \\
			
			Let us define $R=R_0=B$ and $S=S_0=A$.
			Note that $R,S$ form a partition of $A\cup B=\scriptv$.
			Now, $P_0=B_k\subseteq B=R_0$ and $S_0=A\subseteq A_k =Q_0$.
			Also, $R_0-P_0$ and $S_0$ form a partition of $Q_0$.
			Figure~\ref{f_appendixB} illustrates some of the sets used in this
			proof. \\
			
			\begin{figure*}[tbh]
				\centering
				\includegraphics[width=0.6\textwidth]{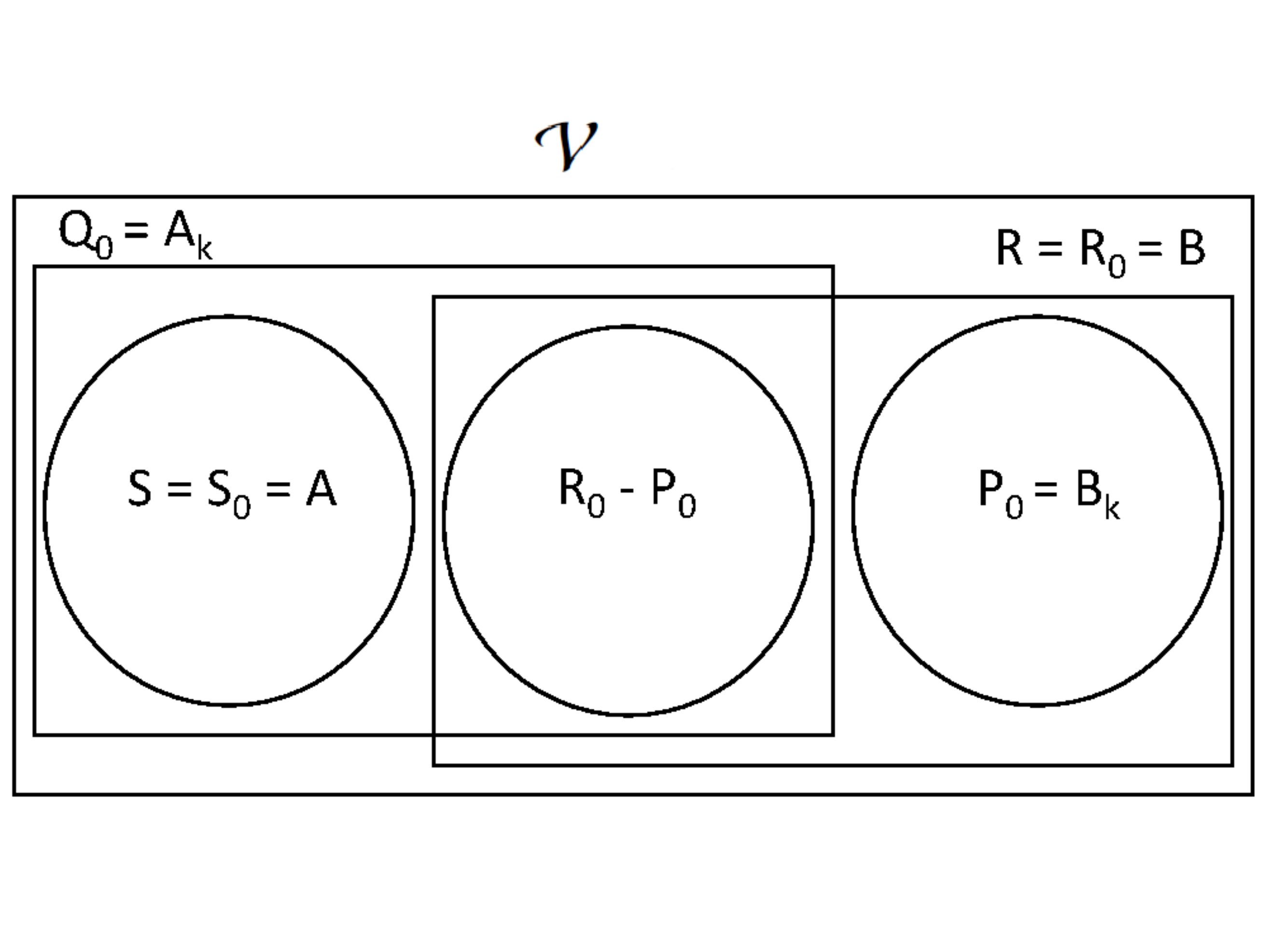}
				\caption{Illustration for the last part of the proof of Lemma \ref{lemma:must_absorb}. In this figure, $R_0 = P_0 \cup (R_0 - P_0)$ and $Q_0 = S_0 \cup (R_0 - P_0)$.}
				\label{f_appendixB}
			\end{figure*}
			
			\begin{itemize}
				\item
				Define $P_1 = P_0\cup (in(P_0\rightarrow Q_0))$, and $Q_1 = \scriptv- P_1 = Q_0 - (in(P_0\rightarrow Q_0))$. Also,
				$R_1 = R_0\cup (in(R_0\rightarrow S_0))$, and $S_1 = \scriptv- R_1 = S_0 - (in(R_0\rightarrow S_0))$.
				
				Since $R_0-P_0$ and $S_0$ are a partition of $Q_0$,
				the nodes in $in(P_0\rightarrow Q_0)$ belong to one of these
				two sets. Note that $R_0-P_0\subseteq R_0$.
				Also, $S_0 \cap in(P_0\rightarrow Q_0) \subseteq in(R_0\rightarrow S_0)$.
				Therefore, it follows that $P_1 = P_0\cup (in(P_0\rightarrow Q_0))
				\subseteq R_0\cup (in(R_0\rightarrow S_0)) = R_1$.
				
				Thus, we have shown that, $P_1\subseteq R_1$. Then it follows that
				$S_1\subseteq Q_1$.
				\\
				\item For $0\leq i<m$, let us define $R_{i+1}=R_i\cup in(R_i\rightarrow 
				S_i)$ and $S_{i+1} = S_i - in(R_i\rightarrow S_i)$. Then following an
				argument similar to the above case, we can inductively show that,
				$P_i\subseteq R_i$ and $S_i\subseteq Q_i$.
				Due to the assumption on the length of the propagating
				sequence above, $P_m=P\cup Q=\scriptv$ and $Q_m=\emptyset$.
				Thus, there must exist $r\leq m$, such that for $i<r$, $R_i\neq \scriptv$, and $R_r=\scriptv$
				and $S_r=\emptyset$.
				
				The sequences $R_0,R_1,\cdots,R_r$ and $S_0,S_1,\cdots,S_r$ form
				propagating sequences, proving that $R=B$ propagates to $S=A$. 
			\end{itemize}
		\end{itemize}
	\end{itemize}
\end{proof}

\red{
\subsection{Proof of Lemma~\ref{lemma:bounded_value}}
~

We first present two additional lemmas (using
the notation in Algorithm LocWA). We will use the notation for $\alpha_i=\frac{1}{|N_i^-|}$ for convenience. Note that $heard_i^*[p], R_i^*[p]$ represents the $heard_i[p], R_i[p]$ sets of node $i$ in phase $p$ the first time that condition 1-WAIT is satisfied. 

\begin{lemma}
\label{lemma:psi}

For node $i\in\scriptv-F[p]$.
Let $\psi\leq \mu[p-1]$. Then, for $j\in heard^*[p]$,
 \[
v_i[p] - \psi  ~\geq~  a_i ~ (v_j[p-1] - \psi)
\]
\end{lemma}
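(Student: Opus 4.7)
The plan is to read the statement off the averaging rule by a direct calculation: once we know every contribution to $R_i^*[p]$ is at least $\psi$, the lemma is just ``drop all the other non-negative summands of the average except the one indexed by $j$.''

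First I would verify that every value $v$ in the multiset $R_i^*[p]$ satisfies $v \geq \psi$. Each such $v$ equals $v_k[p-1]$ for some $k \in heard_i^*[p]$. Since $i$ received the phase-$p$ message of $k$, node $k$ must have completed phase $p-1$, hence was not crashed by the end of phase $p-1$; by the definition of $\mu[p-1]$ this gives $v_k[p-1] \geq \mu[p-1] \geq \psi$. The special entry $v_i[p-1]$ (which was placed in $R_i^*[p]$ on entering the phase, since $i\in heard_i^*[p]$) is handled by the same argument, using $i \in \scriptv-F[p]$. Consequently every summand $v-\psi$ below is non-negative.

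Next I would rewrite the update rule as
\begin{equation*}
v_i[p] - \psi \;=\; \frac{\sum_{v \in R_i^*[p]}(v-\psi)}{|R_i^*[p]|}.
\end{equation*}
Since $j \in heard_i^*[p]$, the term $v_j[p-1] - \psi$ occurs in the numerator; dropping all the remaining (non-negative) terms yields
\begin{equation*}
v_i[p] - \psi \;\geq\; \frac{v_j[p-1]-\psi}{|R_i^*[p]|}.
\end{equation*}
Because messages in $R_i^*[p]$ come only from $i$ and its incoming neighbors, and each contributor adds exactly one entry, we have $|R_i^*[p]| \leq |N_i^-|$ under the convention used in defining $a_i = 1/|N_i^-|$ (so $\tfrac{1}{|R_i^*[p]|} \geq a_i$). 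Plugging this in gives $v_i[p] - \psi \geq a_i(v_j[p-1]-\psi)$.

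There is no serious obstacle: the whole proof is arithmetic on the average, and the only point requiring attention is the bookkeeping in the first step, namely that any message delivered to $i$ in phase $p$ was emitted by a node that had computed its phase-$(p-1)$ state and therefore contributes a value bounded below by $\mu[p-1] \geq \psi$. With that in hand, the inequality is just the observation that one non-negative term is at most the whole non-negative average divided by the number of terms.
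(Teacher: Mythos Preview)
Your proof is correct and follows essentially the same approach as the paper's: both observe that every value in $R_i^*[p]$ is at least $\mu[p-1]\geq\psi$, subtract $\psi$ from the averaging rule, drop all non-negative terms but the one for $j$, and then use $1/|R_i^*[p]|\geq a_i$. Your write-up is in fact more careful than the paper's in justifying why each contributor $k\in heard_i^*[p]$ has $v_k[p-1]\geq\mu[p-1]$.
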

\begin{proof}
In Algorithm LocWA, for each $j\in heard^*[p]$, it holds by definition $\mu[p-1] \leq v_j[p-1]$.
Therefore, 
\begin{eqnarray}
v_j[p-1]-\psi\geq 0 \mbox{\normalfont~for all~} j\in heard_i^*[p]
\label{e_algo_1}
\end{eqnarray}
Since weights in (\ref{mean1}) in Algorithm 1 add to 1, we can re-write that equation
as,
\begin{eqnarray}
v_i[t] - \psi &=& \sum_{j\in heard^*[t]} \frac{1}{R_i^*[p]} \, (v_j[p-1]-\psi) \\
\nonumber
&\geq& \frac{1}{R_i^*[p]}\, (v_j[p-1]-\psi), ~~\forall j\in heard^*[p]  ~~~~~\mbox{\normalfont from (\ref{e_algo_1})}\\
&\geq& \alpha_i \, (v_j[p-1]-\psi), ~~\forall j\in heard^*[p]~~~~~~~~~~\mbox{\normalfont by definition of $\alpha_i$}  
\end{eqnarray}

\end{proof}

~

\begin{lemma}
\label{lemma:Psi}
For node $i\in\scriptv-F[p]$, let $\Psi\geq U[p-1]$. Then, for $j\in heard^*[p]$,

\[
\Psi - v_i[p] \geq  a_i ~ (\Psi - v_j[p-1])
\]
\end{lemma}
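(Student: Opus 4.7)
The plan is to mirror the argument used for Lemma~\ref{lemma:psi}, replacing the lower bound $\psi \leq \mu[p-1]$ with the upper bound $\Psi \geq U[p-1]$, and using non-negativity of the quantities $\Psi - v_j[p-1]$ instead of $v_j[p-1] - \psi$. The structure of the update rule in (\ref{mean1}) is symmetric with respect to such shifts, so the same convex-combination manipulation should carry through.

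First, I would observe that for every $j \in heard_i^*[p]$, node $j$ has not crashed by the end of phase $p-1$ (since $i$ received a message from $j$ during phase $p$), and therefore $v_j[p-1] \leq U[p-1] \leq \Psi$. Hence
\[
\Psi - v_j[p-1] \geq 0 \qquad \text{for all } j \in heard_i^*[p].
\]

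Next, since the weights used by node $i$ in equation~(\ref{mean1}) sum to one, I can subtract $\Psi$ from both sides and rewrite
\[
\Psi - v_i[p] \;=\; \sum_{j \in heard_i^*[p]} \frac{1}{|R_i^*[p]|}\,\bigl(\Psi - v_j[p-1]\bigr).
\]
Each summand on the right is non-negative by the previous step, so dropping all terms except one gives
\[
\Psi - v_i[p] \;\geq\; \frac{1}{|R_i^*[p]|}\,\bigl(\Psi - v_j[p-1]\bigr) \qquad \text{for any fixed } j \in heard_i^*[p].
\]
Finally, since $|R_i^*[p]| \leq |N_i^-|$, we have $\frac{1}{|R_i^*[p]|} \geq \frac{1}{|N_i^-|} = \alpha_i$, which yields the desired bound $\Psi - v_i[p] \geq \alpha_i\,(\Psi - v_j[p-1])$.

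I do not expect any significant obstacle: the argument is a direct dualization of Lemma~\ref{lemma:psi}, with the only care needed being to justify $v_j[p-1] \leq U[p-1]$ for $j \in heard_i^*[p]$ (which follows from $j$ having taken a step in phase $p-1$ to send the message received by $i$) and to note that the cardinality bound $|R_i^*[p]| \leq |N_i^-|$ suffices to introduce $\alpha_i$.
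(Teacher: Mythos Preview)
Your proposal is correct and is exactly the approach the paper takes: the paper's own proof of this lemma consists solely of the sentence ``The proof is similar to Lemma~\ref{lemma:psi} proof,'' and your write-up is precisely that dualization, carried out with the same convex-combination manipulation and the same passage from $\frac{1}{|R_i^*[p]|}$ to $\alpha_i$.
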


\begin{proof}
The proof is similar to Lemma \ref{lemma:psi} proof.
\end{proof}

Next we present the main lemma used in proof of convergence.

\paragraph{\bf\large Proof of Lemma 2}
~

\begin{proof}
Since $R$ propagates to $L$, as 
per Definition~\ref{def:absorb_sequence},
there exist sequences of sets
$R_0,R_1,\cdots,R_l$ and $L_0,L_1,\cdots,L_l$, where
\begin{itemize}
\item $R_0=R$, ~~ $L_0=L$, ~~ $R_l=R\cup L$, ~~ $L_l=\emptyset$, ~~ for $0\leq \tau<l$, $L_\tau \neq \emptyset$, and
\item for $0\leq \tau\leq l-1$,
\begin{list}{}{}
\item[*] $R_\tau\rightarrow L_\tau$,
\item[*] $R_{\tau+1} = R_\tau\cup in(R_\tau\rightarrow L_\tau)$, and
\item[*] $L_{\tau+1} = L_\tau - in(R_\tau\rightarrow L_\tau)$
\end{list}
\end{itemize}
Let us define the following bounds on the states of the fault-free nodes
in $R-F[p]$ at the end of the $p$-th phase:
\begin{eqnarray}
M & = & max_{j\in R-F[p]}~ v_j[p] \\ \label{e_M}
m & = & min_{j\in R-F[p]}~ v_j[p] \label{e_m}
\end{eqnarray}
By the assumption in the statement of Lemma~\ref{lemma:bounded_value},
\begin{eqnarray}
M-m\leq \frac{U[s]-\mu[s]}{2} \label{e_M_m}
\end{eqnarray}
Also, $M\leq U[s]$ and $m\geq \mu[s]$.
Therefore, $U[s]-M\geq 0$ and $m-\mu[s]\geq 0$.

The remaining proof of Lemma~\ref{lemma:bounded_value} relies
on derivation of the three intermediate claims below. \\

\begin{claim}
\label{claim:1}
For $0\leq \tau\leq l$, for each node $i\in R_\tau-F[p+\tau]$,
\begin{eqnarray}
v_i[p+\tau] - \mu[p] ~ \geq~   \alpha^{\tau}(m-\mu[p])
\label{e_ind_1}
\end{eqnarray}
\end{claim}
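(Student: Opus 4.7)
The plan is to prove Claim~\ref{claim:1} by induction on $\tau$, leveraging the two supplementary lemmas (Lemma~\ref{lemma:psi}) together with the structural consequences of the propagating sequences $R_0,\ldots,R_l$ and $L_0,\ldots,L_l$. The base case $\tau=0$ is immediate from the definition of $m$: for every $i\in R-F[p]$ we have $v_i[p]\ge m$, so $v_i[p]-\mu[p]\ge m-\mu[p]=\alpha^0(m-\mu[p])$. The inductive step is the content of the argument, and I would handle it by partitioning $R_{\tau+1}$ into two sets, $R_\tau$ and $in(R_\tau\rightarrow L_\tau)$, and proving the bound separately for each.

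For the first part, take any $i\in R_\tau-F[p+\tau+1]$. Since $i$ is fault-free at the end of phase $p+\tau+1$, it executes the averaging step, and in particular $i\in heard_i^*[p+\tau+1]$ (a node always receives its own value). Applying Lemma~\ref{lemma:psi} with $\psi=\mu[p]\le\mu[p+\tau]$ and $j=i$ yields $v_i[p+\tau+1]-\mu[p]\ge \alpha_i\,(v_i[p+\tau]-\mu[p])$. Using the induction hypothesis on $v_i[p+\tau]$ and the fact that $\alpha_i\ge\alpha$ by definition of $\alpha$, the bound $\alpha^{\tau+1}(m-\mu[p])$ follows.

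For the second part, take any $i\in in(R_\tau\rightarrow L_\tau)-F[p+\tau+1]$. The main obstacle here, and what makes the claim nontrivial, is arguing that $i$ actually hears from a fault-free node of $R_\tau$ before updating. By the definition of $in(R_\tau\rightarrow L_\tau)$, node $i$ has at least $f+1$ incoming neighbors in $R_\tau$. Condition 1-WAIT forces $i$ to wait until $|heard_i^*[p+\tau+1]|\ge |N_i^-|-f$, so at most $f$ of its incoming neighbors are missing from $heard_i^*[p+\tau+1]$, and therefore at least one incoming neighbor $j\in R_\tau$ satisfies $j\in heard_i^*[p+\tau+1]$. Because $j$ successfully sent its phase-$(p+\tau+1)$ message, it completed the update at the end of phase $p+\tau$, so $j\in R_\tau-F[p+\tau]$ and the inductive hypothesis applies to $v_j[p+\tau]$. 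Invoking Lemma~\ref{lemma:psi} once more with this $j$ chains the bounds: $v_i[p+\tau+1]-\mu[p]\ge \alpha\,(v_j[p+\tau]-\mu[p])\ge \alpha\cdot\alpha^\tau(m-\mu[p])=\alpha^{\tau+1}(m-\mu[p])$.

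The hardest step to get right is the ``$j$ exists and is fault-free at phase $p+\tau$'' argument in the second part; it is where the graph-theoretic content of Condition 1-CCA (through the $f+1$-incoming-neighbor guarantee in $in(R_\tau\rightarrow L_\tau)$) meets the algorithmic wait rule (Condition 1-WAIT), and it is precisely this interaction that allows information from $R$ to propagate outward one layer per phase. Once this claim is established, the symmetric claim bounding $U[p]-v_i[p+\tau]$ from above follows by the analogous argument using Lemma~\ref{lemma:Psi}, and together they give the required contraction bound in Lemma~\ref{lemma:bounded_value}.
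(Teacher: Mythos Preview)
Your proposal is correct and follows essentially the same approach as the paper: induction on $\tau$, the same base case, the same partition of $R_{\tau+1}$ into $R_\tau$ and $in(R_\tau\rightarrow L_\tau)$, and the same use of Lemma~\ref{lemma:psi} in each case. Your justification for the existence of the witness $j\in R_\tau\cap heard_i^*[p+\tau+1]$ via Condition 1-WAIT, and your explicit check that such a $j$ must have completed phase $p+\tau$ (hence $j\in R_\tau-F[p+\tau]$), are slightly more detailed than the paper's one-line pigeonhole remark, but the argument is the same.
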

\noindent{\em Proof of Claim \ref{claim:1}:}
The proof is by induction.

\noindent
{\em Induction basis:}
By definition of $m$, (\ref{e_ind_1}) holds true
for $\tau=0$.

\noindent{\em Induction:}
Assume that (\ref{e_ind_1}) holds true for some $\tau$, $0\leq \tau<l$.
Consider $R_{\tau+1}$.
Observe that $R_\tau$ and $R_{\tau+1}-R_\tau$ form a partition of $R_{\tau+1}$;
let us consider each of these sets separately.
\begin{itemize}
\item Set $R_\tau$: By assumption, for each $i\in R_\tau - F[p+\tau+1]$, (\ref{e_ind_1})
holds true.
By validity of Algorithm LocWA~\footnote{Validity is trivially true due to how Algorithm LocWA updates each node's state.}, $\mu[p] \leq \mu[p+\tau]$.
Therefore, setting $\psi=\mu[p]$ and $t=p+\tau+1$ in Lemma~\ref{lemma:psi}, 
we get,
\begin{eqnarray*}
v_i[p+\tau+1] - \mu[p] & \geq 
& a_i~(v_i[p+\tau] - \mu[p]) \\
& \geq & a_i~ \alpha^{\tau}(m-\mu[s]) ~~~~ \mbox{due to (\ref{e_ind_1})} \\
& \geq & \alpha^{\tau+1}(m-\mu[s])  ~~~~\mbox{due to the definition of $\alpha_i$} \\
	&& \mbox{~~ and because~~~~} m-\mu[s]\geq 0 
\end{eqnarray*}

\item Set $R_{\tau+1}-R_\tau$: Consider a node $i\in R_{\tau+1}-R_\tau- F[p+\tau+1]$. By definition
of $R_{\tau+1}$, we have that $i\in in(R_\tau\rightarrow L_\tau)$.
Thus,
\[ |N_i^- \cap R_\tau| \geq f+1 \] 
%

Since there are at most $f$ faults and $|N_i^- \cap R_\tau| \geq f+1 $, there will exist a node $w\in N_i^- \cap R_\tau \cap heard^*[p+\tau+1] $.
 Then, by an argument similar to the
previous case, we can set $\psi=\mu[s]$
in Lemma~\ref{lemma:psi}, to obtain,
\begin{eqnarray*}
v_i[p+\tau+1] -\mu[s] & \geq & a_i~(v_w[p+\tau]-\mu[p]) \\
& \geq & a_i~ \alpha^{\tau}(m-\mu[p]) ~~~~ \mbox{due to (\ref{e_ind_1})} \\
& \geq & \alpha^{\tau+1}(m-\mu[p])  ~~~~\mbox{due to the definition of $\alpha_i$}\\
	&& \mbox{and because~~~~} m-\mu[s]\geq 0 
\end{eqnarray*}

\end{itemize}

Thus, we have shown that for all nodes in $R_{\tau+1}$,
\[
v_i[s+\tau+1] -\mu[s] 
\geq \alpha^{\tau+1}(m-\mu[s]) 
\]
This completes the proof of Claim \ref{claim:1}. \\

\begin{claim}
\label{claim:2}
For each node $i\in \scriptv-F[p+l]$,
\begin{eqnarray}
v_i[p+l] - \mu[p] ~ \geq ~  \alpha^{l}(m-\mu[p])
\label{e_ind_2}
\end{eqnarray}
\end{claim}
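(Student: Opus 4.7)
The plan is to observe that Claim~\ref{claim:2} follows essentially by specializing Claim~\ref{claim:1} to the terminal index $\tau = l$. By Definition~\ref{def:absorb_sequence} applied to the propagating sequences constructed at the start of the proof of Lemma~\ref{lemma:bounded_value}, the final set satisfies $R_l = R \cup L$; since $R$ and $L$ partition $\scriptv$, this gives $R_l = \scriptv$. Consequently the set $R_l - F[p+l]$ appearing in Claim~\ref{claim:1} coincides with $\scriptv - F[p+l]$, which is exactly the quantification set of Claim~\ref{claim:2}.

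Concretely, I would instantiate Claim~\ref{claim:1} at $\tau = l$ to obtain, for every $i \in R_l - F[p+l]$,
\[
v_i[p+l] - \mu[p] \ \geq\ \alpha^{l}\,(m - \mu[p]),
\]
and then substitute $R_l = \scriptv$ to conclude the bound for every $i \in \scriptv - F[p+l]$, which is precisely the statement of Claim~\ref{claim:2}.

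I do not expect any real obstacle in this step: the substantive work has already been carried out inside Claim~\ref{claim:1}, where the bound was propagated inductively along the chain $R_0 \subseteq R_1 \subseteq \cdots \subseteq R_l$ by exhibiting, at each absorption of a new node $i \in in(R_\tau \rightarrow L_\tau)$, a fault-free incoming neighbor $w \in R_\tau \cap heard^*[p+\tau+1]$ to which Lemma~\ref{lemma:psi} applies (using $|N_i^- \cap R_\tau| \geq f+1$, the cap of $f$ crashes, and Condition 1-WAIT). The only point worth double-checking is the monotonicity $F[p] \subseteq F[p+1] \subseteq \cdots \subseteq F[p+l]$, which is immediate since crash failures are permanent; this ensures every node counted in $\scriptv - F[p+l]$ was already fault-free at the earlier phases where Claim~\ref{claim:1} lifted its bound. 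Claim~\ref{claim:2} thus serves as a repackaging of the terminal step of the Claim~\ref{claim:1} induction in the uniform form needed by the symmetric upper-bound claim and the final interval-shrinking argument that closes the proof of Lemma~\ref{lemma:bounded_value}.
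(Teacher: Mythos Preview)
Your proposal is correct and matches the paper's own proof essentially verbatim: the paper simply notes that $R_l=\scriptv$ by definition and then sets $\tau=l$ in Claim~\ref{claim:1}. Your additional remarks about the monotonicity of $F[\cdot]$ and the role of Lemma~\ref{lemma:psi} are accurate but unnecessary here, since all of that work is already encapsulated in Claim~\ref{claim:1}.
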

\noindent{\em Proof of Claim \ref{claim:2}:}
Note that by definition, $R_l = \scriptv$. Then the proof follows by setting $\tau = l$ in the above Claim \ref{claim:1}. \\

\begin{claim}
\label{claim:3}
For each node $i\in \scriptv-F[p+l]$,
\begin{eqnarray}
U[p] - v_i[p+l] \geq  \alpha^{l}(U[p]-M)
\label{e_ind_3a}
\end{eqnarray}
\end{claim}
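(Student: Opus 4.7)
The plan is to prove Claim~\ref{claim:3} by mirroring the induction used for Claim~\ref{claim:1}, but working from the top of the interval downward instead of from the bottom upward. Concretely, I will establish by induction on $\tau$, for $0\le \tau\le l$, the stronger intermediate statement that for each node $i\in R_\tau - F[p+\tau]$,
\[
U[p] - v_i[p+\tau] \;\geq\; \alpha^{\tau}(U[p]-M),
\]
and then deduce Claim~\ref{claim:3} by taking $\tau=l$ and using $R_l = \scriptv$.

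For the base case $\tau=0$, the definition of $M$ gives $v_i[p]\le M$ for all $i\in R-F[p]$, so $U[p]-v_i[p]\ge U[p]-M = \alpha^0(U[p]-M)$. For the inductive step, assuming the statement holds at $\tau$, I would partition $R_{\tau+1}$ into $R_\tau$ and $R_{\tau+1}-R_\tau \subseteq in(R_\tau\rightarrow L_\tau)$, exactly as in the proof of Claim~\ref{claim:1}. On $R_\tau$, each surviving $i$ has $i\in heard_i^*[p+\tau+1]$, so applying Lemma~\ref{lemma:Psi} with $\Psi=U[p]$ (which is $\ge U[p+\tau]$ by Validity of LocWA) with $j=i$ yields
\[
U[p]-v_i[p+\tau+1] \;\ge\; a_i\bigl(U[p]-v_i[p+\tau]\bigr) \;\ge\; a_i\,\alpha^{\tau}(U[p]-M) \;\ge\; \alpha^{\tau+1}(U[p]-M),
\]
using the induction hypothesis and $U[p]-M\ge 0$. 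On $R_{\tau+1}-R_\tau$, each such $i$ has $|N_i^-\cap R_\tau|\ge f+1$, so since at most $f$ nodes may be crashed or delayed indefinitely beyond $heard_i^*[p+\tau+1]$, there must exist $w\in N_i^-\cap R_\tau\cap heard_i^*[p+\tau+1]$. Applying Lemma~\ref{lemma:Psi} with $\Psi=U[p]$ and $j=w$ gives the same chain of inequalities.

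The only subtlety, which I do not expect to be a real obstacle, is ensuring that the appeal to the induction hypothesis on $w$ is legitimate: since $w\in R_\tau$ and survives through phase $p+\tau+1$, in particular $w\in R_\tau - F[p+\tau]$, so the hypothesis applies. Everything else is essentially symmetric to Claim~\ref{claim:1}, with Lemma~\ref{lemma:Psi} replacing Lemma~\ref{lemma:psi}, $U[p]$ replacing $\mu[p]$, and $M$ replacing $m$; no new graph-theoretic or algorithmic ingredient beyond what was used for Claim~\ref{claim:1} is needed.
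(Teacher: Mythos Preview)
Your proposal is correct and is precisely the symmetric argument the paper has in mind: it states only that ``The proof of Claim~\ref{claim:3} is similar to the proof of Claim~\ref{claim:2},'' i.e., one reruns the induction of Claim~\ref{claim:1} with Lemma~\ref{lemma:Psi} in place of Lemma~\ref{lemma:psi}, $U[p]$ in place of $\mu[p]$, and $M$ in place of $m$, exactly as you have written.
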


The proof of Claim \ref{claim:3} is similar to the proof of Claim \ref{claim:2}. \\


\noindent
Now let us resume the proof of the Lemma \ref{lemma:bounded_value}.
Note that $R_l=\scriptv$. Thus, 
\begin{eqnarray}
U[p+l] & = & \max_{i\in\scriptv-F[p+l]}~ v_i[p+l] \nonumber \\
& \leq & U[s] - \alpha^{l}(U[p]-M) \mbox{~~~~~~~by (\ref{e_ind_3a})}
\label{e_U}
\end{eqnarray}
and
\begin{eqnarray}
\mu[p+l] & = & \min_{i\in\scriptv-F[p+l]}~ v_i[p+l] \nonumber \\
& \geq & \mu[p] + \alpha^{l}(m-\mu[p]) \mbox{~~~~~~~by (\ref{e_ind_2}})
\label{e_mu}
\end{eqnarray}
Subtracting (\ref{e_mu}) from (\ref{e_U}),
\begin{eqnarray}
&& U[p+l]-\mu[p+l] \nonumber \\  & \leq & U[p] - \alpha^{l}(U[p]-M)  - \mu[p] - \alpha^{l}(m-\mu[p]) \nonumber \\
&=& (1-\alpha^l)(U[p]-\mu[p]) + \alpha^l(M-m) \nonumber \\
&\leq& (1-\alpha^l)(U[p]-\mu[p]) + \alpha^l~\frac{U[p]-\mu[p]}{2} \nonumber
 \mbox{~~~~by (\ref{e_M_m})} \nonumber \\
&\leq& (1-\frac{\alpha^l}{2})(U[p]-\mu[p])  \nonumber
\end{eqnarray}
This concludes the proof of Lemma~\ref{lemma:bounded_value}.
\end{proof}

Now, we are ready to present the main proof of Theorem \ref{Theorem:SufficiencyLocWA}.

\subsection{Proof of Theorem~\ref{Theorem:SufficiencyLocWA}} 
~


}

\begin{proof}
	Validity is trivially true due to how Algorithm LocWA updates each node's state. We will prove that, given any $\epsilon>0$, there
	exists $\tau$ such that
	\begin{equation}
	U[t]-\mu[t] \leq \epsilon ~~~\forall t\geq \tau
	\end{equation}
	
	Consider $p$-th phase, for some $p\geq 0$.
	If $U[p]-\mu[p]=0$, then the algorithm has already converged, and the proof
	is complete, with $\tau=p$.
	
	Now consider the case when $U[p]-\mu[p]>0$.
	Partition $\scriptv$ into two subsets, $A$ and $B$, such
	that, for each fault-free node $i\in A$, 
	$v_i[p]\in \left[\mu[p], \frac{U[p]+\mu[p]}{2}\right)$, and
	for each fault-free node $j\in B$,
	$v_j[p] \in \left[\frac{U[p]+\mu[p]}{2}, U[p]\right]$.
	By definition of $\mu[p]$ and $U[p]$, there exist fault-free nodes
	$i$ and $j$ such that $v_i[p]=\mu[p]$ and $v_j[p]=U[p]$.
	Thus, sets $A$ and $B$ are both non-empty.
	By Lemma \ref{lemma:must_absorb}, one of the following two conditions
	must be true:
	\begin{itemize}
		\item Set $A$ propagates to set $B$. Then, define $L=B$ and $R=A$.
		The states of all the fault-free nodes in $R=A$ are confined within an
		interval of length
		$<\frac{U[p]+\mu[p]}{2} - \mu[p] \leq \frac{U[p]-\mu[p]}{2}$.
		
		\item Set $B$ propagates to set $A$. Then, define $L=A$ and $R=B$.
		In this case, states of all the fault-free nodes in $R=B$ are confined within an interval of length
		$\leq U[p]-\frac{U[p]+\mu[p]}{2} \leq \frac{U[p]-\mu[p]}{2}$. 
		
	\end{itemize}
	In both cases above, we have found non-empty sets $L$ and $R$
	such that (i) $L,R$ is a partition of $\scriptv$,
	(ii) $R$ propagates to $L$, and (iii) the states of all fault-free nodes in $R$ are confined
	to an interval of length $\leq \frac{U[p]-\mu[p]}{2}$.
	Suppose that $R$ propagates to $L$ in $l(p)$ steps, where $l(p)\geq 1$.
	Then by Lemma~\ref{lemma:bounded_value},
	\begin{eqnarray}
	U[p+l(p)]-\mu[p+l(p)] \leq \left( 1-\frac{\alpha^{l(p)}}{2}\right)(U[p] - \mu[p])
	\label{e_t}
	\end{eqnarray}
	
	Observe that $\alpha>0$ (defined in Lemma~\ref{lemma:bounded_value}), else Condition $1$-CCA is violated. Then, $n-f-1 \geq l(p)\geq 1$ and $0<\alpha\leq 1$; hence, $0\leq \left( 1-\frac{\alpha^{l(p)}}{2}\right)<1$.
	
	Let us define the following sequence of phase indices:
	\begin{itemize}
		\item $\tau_0 = 0$,
		\item for $i>0$, $\tau_i = \tau_{i-1} + l(\tau_{i-1})$, where $l(p)$ for any given $p$ was defined above.
	\end{itemize}
	
	If for some $i$, 
	$U[\tau_i]-\mu[\tau_i]=0$, then since the algorithm 
	satisfies the validity condition, we will have $U[t]-\mu[t]=0$
	for all $t\geq \tau_i$, and the proof of convergence is complete.
	
	Now suppose that $U[\tau_i]-\mu[\tau_i]\neq 0$ for the values of $i$ in the
	analysis below.
	By repeated application of the argument leading to (\ref{e_t}), we can prove
	that, for $i\geq 0$,
	
	\begin{eqnarray}
	U[\tau_i]-\mu[\tau_i] \leq \left( \Pi_{j=1}^i\left( 1-\frac{\alpha^{\tau_j-\tau_{j-1}}}{2}\right)\right)~(U[0] - \mu[0])
	\end{eqnarray}
	
	For a given $\epsilon$,
	by choosing a large enough $i$, we can obtain
	\[
	\left(\Pi_{j=1}^i\left( 1-\frac{\alpha^{\tau_j-\tau_{j-1}}}{2}\right)\right)~(U[0] - \mu[0]) \leq \epsilon
	\]
	and, therefore,
	\begin{eqnarray}
	U[\tau_i]-\mu[\tau_i] \leq  \epsilon
	\end{eqnarray}
	For $t\geq \tau_i$, by validity of Algorithm LocWA, it follows that
	\[
	U[t]-\mu[t] \leq
	U[\tau_i]-\mu[\tau_i] \leq  \epsilon
	\]
	This concludes the proof.
\end{proof}

\section{Correctness of Algorithm LWA}
\label{app_topo_dir}

Here, we assume that the graph $G(\sv,\se)$ satisfies Condition CCA. In a given execution of Algorithm LWA, define $F[p]$ as the nodes $i$ that have {\em not} computed value $v_i[p]$ for a fixed phase $p$. In the discussion below, we will drop the phase index $p$ for some notation for brevity.
Results in \cite{Tseng_podc2015} implies that Condition WAIT must hold at some point on the local estimated graph $G^i$, e.g., when node $i$ receives every message.
Since $G^i$ is evolving as node $i$ receives more messages. Suppose Condition WAIT holds on $G^{i*}(\sv^{i*}, \se^{i*})$ for the first time at node $i$.
At that point of time, let $heard^*_i[p], R^*_i[p]$ denote the set $heard_i[p]$ and the corresponding multiset $R_i[p]$.
We prove the following lemma. 

\begin{lemma}
\label{lemma:topo_dir}
Fix a phase $p \geq 1$. For any pair of nodes $i, j \in \sv - F[p]$, $heard^*_i[p] \cap heard^*_j[p] \neq \emptyset$.
\end{lemma}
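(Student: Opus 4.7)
The plan is to argue by contradiction: suppose $heard^*_i[p] \cap heard^*_j[p] = \emptyset$ for some pair $i,j \in \sv - F[p]$, and set $A = heard^*_i[p]$, $B = heard^*_j[p]$. Both sets are non-empty since $i \in A$ and $j \in B$ by initialization, and they are disjoint by hypothesis. Let $F_i^* \subseteq \sv^{i*} - \{i\}$ and $F_j^* \subseteq \sv^{j*} - \{j\}$ (each of size at most $f$) be the witness sets certifying Condition $n$-WAIT at $i$ and $j$ at the first phase-$p$ moment when the condition holds. I would then refine the two endpoints of the partition by defining $A^* = reach'_i(F_i^*) \cap A$ and $B^* = reach'_j(F_j^*) \cap B$. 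Both refined sets are non-empty because $i$ reaches $i$ and $j$ reaches $j$ trivially, and they are disjoint since $A^* \subseteq A$ and $B^* \subseteq B$.

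The crux of the argument is the following structural claim in the actual graph $G$: $N^-_{B^*} \subseteq F_j^*$, and symmetrically $N^-_{A^*} \subseteq F_i^*$; in particular, each of these incoming-neighborhood sets has size at most $f$. To establish the bound for $B^*$, fix any $k \in N^-_{B^*}$, so $k \in \sv - B^*$ and there is an edge $(k,b^*) \in \se$ with $b^* \in B^*$. Because $b^* \in heard^*_j$, node $j$ received the message $(h, N^-_{b^*}, b^*, p)$, and by the update rule of Algorithm LWA this insertion adds $k$ to $\sv^{j*}$ and the edge $(k,b^*)$ to $\se^{j*}$. Now assume $k \notin F_j^*$: concatenating $(k,b^*)$ with the path in $G^{j*} \setminus F_j^*$ from $b^*$ to $j$ certified by $b^* \in B^*$ yields a path from $k$ to $j$ in $G^{j*} \setminus F_j^*$, so $k \in reach'_j(F_j^*)$. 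Condition $n$-WAIT then forces $k \in heard^*_j = B$, and the path above further places $k$ in $B^*$, contradicting $k \in \sv - B^*$. Hence $k \in F_j^*$, proving the claim for $B^*$; the same reasoning applied at $i$ gives the bound for $A^*$.

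To close, I would apply Condition CCA to the partition $L = A^*$, $R = B^*$, $C = \sv - A^* - B^*$, whose extreme parts are non-empty. Condition CCA asserts that either $\point{L \cup C}{R}{f+1}$ or $\point{R \cup C}{L}{f+1}$, i.e., either $|N^-_{B^*}| \geq f+1$ or $|N^-_{A^*}| \geq f+1$ in $G$. Either conclusion directly contradicts the structural bound of the previous paragraph, so the initial assumption that $heard^*_i[p]$ and $heard^*_j[p]$ are disjoint must fail. The main obstacle in converting this outline into a full proof is the structural claim; its verification rests on the fact that every forwarded message piggybacks the full incoming neighborhood of its source, and that $G^j$ is not reset until after the update in phase $p$, so by the time Condition $n$-WAIT first holds the local graph $G^{j*}$ records every incoming edge to every node of $heard^*_j[p]$. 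Once this local-to-global translation is established, what remains is the natural asynchronous analogue of the two-sided Condition CCA argument used for Algorithm WA in~\cite{Tseng_podc2015}.
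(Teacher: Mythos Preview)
Your proof is correct and follows essentially the same route as the paper: both arguments take the witness sets $F_i,F_j$ from Condition $n$-WAIT, use the piggybacked incoming-neighborhood information to show that every incoming edge (in $G$) into $reach'_i(F_i)$ (resp.\ $reach'_j(F_j)$) must originate in $F_i$ (resp.\ $F_j$), and then set $L=reach'_i(F_i)$, $R=reach'_j(F_j)$, $C=\sv-L-R$ to contradict Condition CCA. A cosmetic remark: your intersection $A^*=reach'_i(F_i^*)\cap A$ is redundant, since Condition $n$-WAIT already gives $reach'_i(F_i^*)\subseteq heard^*_i[p]=A$, so $A^*=reach'_i(F_i^*)$ outright; the paper works directly with the reach sets.
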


\begin{proof}
	First observe that by construction, $G^{i*} \subseteq G$, and $heard^*_i[p]$ contains identity of nodes only from $G^{i*}$. Moreover, sets $heard^*_i[p]$ and $heard^*_j[p]$ are defined over (potentially) different estimated graphs at $i$ and $j$, respectively.
	
	By definition, there exist two sets $F_i$ and $F_j$ such that Condition WAIT holds for sets $heard^*_i[p]$ and $F_i$ on $G^{i*}$ at node $i$, and for sets $heard^*_j[p]$ and $F_j$ on $G^{j*}$ at node $j$. 
	In other words, 
	\begin{itemize}
		\item $F_i \subseteq \sv$ and $|F_i| \leq f$, 
		\item $F_j \subseteq \sv$ and $|F_j| \leq f$, 
		\item $reach_i(F_i) \subseteq heard^*_i[p]$, and 
		\item $reach_j(F_j) \subseteq heard^*_j[p]$. 
	\end{itemize}
	
	If $reach_i(F_i) \cap reach_j(F_j) \neq \emptyset$, then the proof is complete, 
	since $reach_i(F_i) \subseteq heard^*_i[p]$ and $reach_j(F_j) \subseteq heard^*_j[p]$. 
	Thus, $heard^*_i[p] \cap heard^*_j[p] \neq \emptyset$. 
	
	Now, consider the case when $reach_i(F_i) \cap reach_j(F_j) =\emptyset$. We will derive a contradiction in this case. 
	Recall that $G^{i*}(\sv^{i*}, \se^{i*})$ is the local estimated graph at node $i$, and $reach_i(F_i)$ is defined as the set of nodes that have directed paths to node $i$ in the subgraph induced by the nodes in $\sv^{i*}-F_i$. 
	
	\begin{claim}
		The set of incoming neighbors of set $reach_i(F_i)$ in $G^{i*}$ is equal to the set of incoming neighbors of set $reach_i(F_i)$ in $G$.
	\end{claim}
	
	\begin{proof}
		The claim follows from the observations that $G^{i*} \subseteq G$ and node $i$ receives a message from each node $k \in reach_i(F_i)$, which contains information of all $k$'s incoming neighbors.
	\end{proof}
	
	This claim implies that in graph $G$, the incoming neighbors of set $reach_i(F_i)$ are contained in set $F_i$. 
	Similarly, in graph $G$, the incoming neighbors of set $reach_j(F_j)$ are contained in set $F_j$.
	
	In graph $G$, we will find subsets of nodes $L, C, R$ that violate Condition CCA.
	Let $L = reach_i(F_i)$, $R = reach_j(F_j)$ and $C = \sv - L - R$. Observe that since $reach_i(F_i) \cap reach_j(F_j) =\emptyset$, $L, C, R$ form a partition of $\sv$. Moreover, $i \in reach_i(F_i)$ and $j \in reach_j(F_j)$; hence, $L = reach_i(F_i)$ and $R = reach_j(F_j)$ are both non-empty.
	Recall that $N^-_L$ is the set of incoming neighbors of set $L$.
	By definition, $N^-_L$ is contained in $R \cup C$.
	Since $L = reach_i(F_i)$, the only nodes that may be in $N^-_L$ are also in $F_i$ as argued above, i.e., $N^-_L\subseteq F_i$. By assumption, $|F_i| \leq f$. Therefore,
	$|N^-_L|\leq f$, which implies that $\notpoint{R\cup C}{L}{f+1}$. Similarly, we can argue that $\notpoint{L\cup C}{R}{f+1}$. These two conditions together show that $G$ violates Condition CCA, a contradiction. Thus, $reach_i(F_i) \cap reach_j(F_j) \neq \emptyset$, which implies that $heard^*_i[p] \cap heard^*_j[p] \neq \emptyset$. This completes the proof.
\end{proof}

Similar to the proofs in \cite{Tseng_podc2015,AA_nancy}, the lemma together with simple algebra, it is easy to show that Algorithm LWA achieves Validity and Convergence.

\section{Algorithm LBC and Correctness}
\label{app_topo_undir}

\noindent{\bf Algorithm LBC}~~The algorithm, presented below, assumes that each node has the knowledge of the network size $n$ and its one-hop neighbors, and the algorithm proceeds in asynchronous phases. The algorithm has two phases: \textit{Learn Phase} and \textit{Consensus Phase}. In the Learn Phase, each node will construct its local knowledge about the whole graph $G^i$, whereas in the Consensus Phase, each node uses the estimated graph $G^i$ and its initial input to solve consensus using existing asynchronous consensus algorithms. 

Given a subgraph $G' \subset G$, we will say node $i$ sends a message $(G', L)$, where the first element contains $G'$, and the second element is the tag denoting the \textit{Learn Phase}.

~

\hrule

\vspace*{2pt}

\noindent {\bf Algorithm LBC} for node $i \in \sv$

\vspace*{4pt}

\hrule

\vspace*{4pt}

\textit{Learn Phase}: 

Initially, $G^i := G_{N_i \rightarrow \{i\}}$ \hspace*{0.5in} // subgraph of one-hop neighbors

Send message $(G^i, L)$ to all the outgoing neighbors

While $G^i$ has strictly less than $n$ nodes: 

~~~~Upon receiving $(G', L)$:

~~~~~~~~$G^i := G^i \cup G'$

~~~~~~~~Send message $(G^i, L)$ to all the outgoing neighbors\vspace{1mm}

\vspace*{3pt}

\textit{Consensus Phase}: 

Solve consensus using existing algorithms based on $G^i$ and $v_i$, the initial input.

\hrule

\vspace*{4pt}

\paragraph{Correctness of Algorithm LBC} It is easy to see the following lemma of Condition CCA.

\begin{lemma}
	\label{lemma:f+1}
	If an undirected graph $G$ satisfies Condition CCA, then $G$ is $(f+1)$-connected.
\end{lemma}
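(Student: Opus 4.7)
\textbf{Proof Proposal for Lemma~\ref{lemma:f+1}.}
My plan is to prove the contrapositive: if the undirected graph $G$ is not $(f+1)$-connected, then $G$ violates Condition CCA. Suppose $G$ is not $(f+1)$-connected. Then there exists a vertex cut $C \subseteq \sv$ with $|C| \le f$ whose removal disconnects $G$ into at least two components. Pick one component and call its vertex set $L$, and let $R = \sv \setminus (L \cup C)$, which is non-empty since $C$ was a disconnecting cut. By construction $L, C, R$ partition $\sv$ and both $L$ and $R$ are non-empty.

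The key observation is that because $C$ separates $L$ from $R$ in $G$, there are no edges of $G$ with one endpoint in $L$ and the other in $R$. Since $G$ is undirected, every incoming neighbor of $L$ (viewed as a directed relation via the symmetric edge set) lies in $N^-_L = C \cup R$ with no contribution from $R$; hence $N^-_L \cap (R \cup C) \subseteq C$, and so $|N^-_L| \le |C| \le f$. By the definition of $\Rightarrow^{f+1}$ this yields $\notpoint{R \cup C}{L}{f+1}$. The symmetric argument applied to $R$ yields $\notpoint{L \cup C}{R}{f+1}$.

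Thus the partition $L, C, R$ witnesses that Condition CCA fails on $G$, giving the contrapositive and completing the proof. The argument is short and the only subtle point is that in an undirected graph the incoming-neighbor relation used in Condition CCA is symmetric, so a vertex cut in the usual undirected sense directly bounds both $|N^-_L|$ and $|N^-_R|$ in the required way; I do not anticipate any real obstacle beyond making this observation explicit.
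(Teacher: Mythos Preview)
Your proof is correct and is exactly the natural contrapositive argument; the paper itself does not give a proof of this lemma, stating only that ``it is easy to see,'' so there is nothing further to compare.
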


The lemma and the fact that the diameter of $G$ isbounded by $n$ imply the following two lemmas.

\begin{lemma}
	\label{lemma:topo2}	
	If an undirected graph $G$ satisfies Condition CCA, then between any pair of fault-free nodes $i$ and $j$, a message from $i$ will be received by $j$ within $n$ phases.
\end{lemma}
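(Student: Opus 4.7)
The plan is to combine the $(f+1)$-connectivity of $G$ given by Lemma~\ref{lemma:f+1} with Menger's theorem in order to exhibit a fault-free routing path of bounded length from $i$ to $j$, and then show that the Learn Phase of Algorithm LBC delivers the message along that path in at most $n$ forwarding hops.

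First, I would apply Lemma~\ref{lemma:f+1} to conclude that $G$ has vertex connectivity at least $f+1$. By Menger's theorem, between any two distinct nodes $i$ and $j$ there exist at least $f+1$ internally node-disjoint paths $P_1, \ldots, P_{f+1}$. Since at most $f$ nodes can crash and the internal vertex sets of these paths are pairwise disjoint, the pigeonhole principle guarantees some $P_m = (i = u_0, u_1, \ldots, u_l = j)$ whose intermediate nodes $u_1, \ldots, u_{l-1}$ are all fault-free. Being a simple path in a graph on $n$ vertices, $P_m$ has length $l \le n-1$.

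Second, I would trace the flow of $i$'s message along $P_m$ using the structure of the Learn Phase. Upon entering the Learn Phase, $i$ transmits $(G^i, L)$ to all of its outgoing neighbors, in particular to $u_1$. Proceeding inductively on $k$, once $u_k$ receives a message carrying $i$'s information for the first time it executes the ``Upon receiving'' branch, updates $G^{u_k}$ by the monotone union $G^{u_k} := G^{u_k} \cup G'$ (so the contents originating at $i$ are preserved), and forwards $(G^{u_k}, L)$ to all its outgoing neighbors, including $u_{k+1}$. Because every $u_k$ is fault-free and channels are reliable, each hop along $P_m$ completes after a finite delay; after at most $l \le n-1$ such forwarding steps the message reaches $j$, which together with the initial transmission by $i$ accounts for at most $n$ phases.

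The main obstacle, and the point that needs care, is interpreting ``within $n$ phases'' in the asynchronous model, where globally synchronized rounds are unavailable and real-time bounds cannot be given. The appropriate reading is that a phase here corresponds to one forwarding step along a propagation chain, i.e.\ the depth of the flood tree rooted at $i$. A secondary subtlety is that the fault-free path $P_m$ is not known a priori to any node; however, the Learn Phase does not rely on routing decisions, since every recipient floods to all its outgoing neighbors, so the message is guaranteed to traverse $P_m$ regardless of which other nodes have crashed.
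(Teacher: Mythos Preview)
Your proof is correct and follows essentially the same approach as the paper: the paper simply asserts that Lemma~\ref{lemma:topo2} follows from Lemma~\ref{lemma:f+1} (i.e., $(f+1)$-connectivity) together with the diameter of $G$ being bounded by $n$, and your argument via Menger's theorem and the pigeonhole principle is exactly the standard way to unpack that one-line justification into an explicit fault-free path of length at most $n-1$. Your added care about what ``within $n$ phases'' means in the asynchronous Learn Phase is appropriate and consistent with the paper's informal use of the term.
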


Lemma \ref{lemma:topo2} implies the following lemma.

\begin{lemma}
	\label{lemma:topo3}	
	If an undirected graph $G$ satisfies Condition CCA, then each fault-free node $i$ has $G-F[n] \subseteq G^i$ by the end of the Learn Phase, where $F[n]$ is the set of nodes that crashed by the end of the $n$-th phase, i.e., by the end of the Learn Phase.
\end{lemma}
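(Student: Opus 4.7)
The plan is to derive Lemma~\ref{lemma:topo3} from Lemma~\ref{lemma:topo2} together with the flooding semantics of Algorithm~LBC. I would show that every directed edge of the induced subgraph $G - F[n]$ ends up in $G^i$ for each fault-free $i$ by the $n$-th phase, which is the window during which the Learn Phase executes.

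First, I would fix a fault-free node $i$ and an arbitrary directed edge $(u,v)$ of $G - F[n]$. By definition of $F[n]$, both $u$ and $v$ are fault-free throughout the first $n$ phases, so in particular $v$ initiates its broadcast of $(G^v, L)$ at the start of the Learn Phase with $G^v = G_{N_v \rightarrow \{v\}}$, and this initial message contains the edge $(u,v)$ since $u \in N_v$. Second, I would invoke Lemma~\ref{lemma:topo2} on the fault-free pair $(v,i)$ to conclude that this initial broadcast reaches $i$ within $n$ phases; Lemma~\ref{lemma:topo2} is available because Condition CCA holds and, by Lemma~\ref{lemma:f+1}, implies $(f+1)$-connectivity, which together with a diameter of at most $n$ yields the required propagation bound. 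Third, I would appeal to the update rule $G^i := G^i \cup G'$ to observe that whenever $i$ receives a message carrying $v$'s initial subgraph, the edge $(u,v)$ is permanently added to $G^i$. Iterating over all edges of $G - F[n]$ and over all fault-free $i$ then yields $G - F[n] \subseteq G^i$ for every fault-free $i$ by the end of the Learn Phase.

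The main obstacle, as I see it, is rigorously tracking the content of ``$v$'s initial broadcast'' through arbitrary forwarding, particularly when intermediate nodes may crash part-way through the Learn Phase. A single in-flight packet may be lost together with its forwarder, but the monotone union $G^j := G^j \cup G'$ performed by each fault-free intermediate node ensures that $v$'s neighborhood content is retained in $G^j$ and is re-broadcast on the next send at $j$. Lemma~\ref{lemma:topo2} already shields us from worrying about which specific path survives, since it guarantees delivery between any pair of fault-free nodes in at most $n$ phases under at most $f$ crashes. Once this content-preservation aspect of flooding is stated carefully — arguably the only non-routine step — the set-union semantics of the algorithm close the proof immediately.
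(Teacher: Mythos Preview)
Your proposal is correct and follows essentially the same approach as the paper, which simply states that Lemma~\ref{lemma:topo2} implies Lemma~\ref{lemma:topo3} without further detail. You have supplied exactly the elaboration one would expect: fix an edge $(u,v)$ of $G-F[n]$, observe that $v$'s initial broadcast contains it, invoke Lemma~\ref{lemma:topo2} for delivery to $i$, and use the monotone union rule; your explicit acknowledgment of the content-preservation subtlety under intermediate crashes is more careful than the paper itself.
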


\begin{theorem}
	\label{theorem:topo}	
	If an undirected graph $G$ satisfies Condition CCA, then Algorithm LBC is correct.
\end{theorem}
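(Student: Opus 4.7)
\medskip
\noindent\textbf{Proof Plan for Theorem~\ref{theorem:topo}.}
The proof splits naturally into two parts: (i) the Learn Phase terminates at every fault-free node, and (ii) the Consensus Phase run on the learned graph correctly achieves approximate consensus. The idea is to first establish that each fault-free node builds up a view of the topology that contains the whole subgraph of nodes still alive at the end of the Learn Phase, and then to invoke a standard asynchronous approximate consensus algorithm (e.g.~\cite{impossible_proof_lynch,dolev_82_BG}) on this learned view.

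\medskip
\noindent\emph{Termination of the Learn Phase.}
First I would apply Lemma~\ref{lemma:f+1} to conclude that $G$ is $(f{+}1)$-connected and hence every node has degree at least $f{+}1$. Since at most $f$ nodes crash in an execution, every crashed node $c$ must have at least one fault-free one-hop neighbor $j$; thus $c\in V(G^j)$ at the very start of the Learn Phase, because $G^j$ is initialized to the star rooted at $j$ including all one-hop neighbors. Moreover, because $G$ is $(f{+}1)$-connected, removing any set of at most $f$ crashed nodes leaves the subgraph induced by the fault-free nodes connected. Combining this with Lemma~\ref{lemma:topo2} (messages between any two fault-free nodes are delivered within $n$ phases), the identity of every crashed node is propagated, via its fault-free neighbor $j$, to every other fault-free node within a bounded number of phases. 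Therefore each fault-free node eventually learns the identities of all $n$ nodes, satisfies the loop condition $|V(G^i)|\ge n$, and exits the Learn Phase.

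\medskip
\noindent\emph{Correctness of the Consensus Phase.}
By Lemma~\ref{lemma:topo3}, when a fault-free node $i$ exits the Learn Phase, $G - F[n]\subseteq G^i\subseteq G$, i.e., $G^i$ is a supergraph of the ``surviving'' subgraph and a subgraph of the true $G$ (each edge learned was an actual edge transmitted from the corresponding endpoint). In particular, every pair of fault-free nodes shares the common core $G-F[n]$, which is itself $(f{+}1-|F[n]|)$-connected and contains more than $2(f-|F[n]|)$ nodes, so the conditions of~\cite{impossible_proof_lynch,dolev_82_BG} remain in force on this surviving subgraph. Consequently, running the standard flooding-based asynchronous approximate consensus algorithm, parameterized by the locally learned topology $G^i$ and input $v_i$, enjoys Validity and Convergence. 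Thus Algorithm LBC is correct.

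\medskip
\noindent\emph{Main obstacle.}
The nontrivial point is that different fault-free nodes may hold slightly different learned graphs $G^i$: some may have learned about nodes that have since crashed, while others may not have. I would resolve this by arguing that any such ``extra'' node in some $G^i$ simply acts, from $i$'s perspective, as an incoming neighbor that never sends; the flooding algorithm's wait/decide rule already tolerates up to $f$ such silent neighbors, so the presence of $G-F[n]$ as a common core in every $G^i$ is enough to carry the existing correctness argument through without modification.
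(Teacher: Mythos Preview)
Your proposal is correct and follows essentially the same two-part structure as the paper's proof: termination of the Learn Phase via $(f{+}1)$-connectivity (Lemmas~\ref{lemma:f+1}--\ref{lemma:topo3}), followed by correctness of the Consensus Phase using the fact that every fault-free $G^i$ contains the common core $G-F[n]$. The one substantive difference is the choice of consensus subroutine: the paper plugs in Algorithm~WA from~\cite{Tseng_podc2015} and argues that Condition~WAIT, the only place the topology is consulted, is monotone in the sense that if it eventually holds on $G$ (with $F[n]$ crashed) then it also holds on $G-F[n]\subseteq G^i$; you instead invoke the undirected flooding algorithm of~\cite{impossible_proof_lynch,dolev_82_BG} and justify it via the residual connectivity and node count of $G-F[n]$. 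Both routes are valid here precisely because the graph is undirected, so Condition~CCA collapses to $(f{+}1)$-connectivity plus $n>2f$. Your termination argument is actually more explicit than the paper's (which simply asserts the Learn Phase ``only takes $n$ phases''), and your ``main obstacle'' paragraph makes explicit the point the paper leaves implicit---that the discrepancy between different $G^i$'s is bounded by $f$ silent nodes and is therefore absorbed by the wait rule.
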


\begin{proof}
	Liveness is trivial, since Learn Phase only takes $n$ phases, and the consensus algorithm in the Consensus Phase terminates.
	Now, we show that Algorithm LBC achieves Convergence and Validity. We will use Algorithm WA from \cite{Tseng_podc2015} as the consensus algorithm in the Consensus Phase.

	From Lemma \ref{lemma:topo3}, Algorithm WA in the Consensus Phase reaches consensus even if each node $i$ uses $G^i$ as its view of topology. Observe that the only place to use topology information in Algorithm WA is to check whether Condition WAIT is satisfied or not. Then, if Condition WAIT holds on $G$ after nodes in $F[n]$ crashes, it must also hold on $G-F[n]$. Therefore, Algorithm WA is correct, which implies Algorithm LBC satisfies Convergence and Validity.
\end{proof}

\end{document}